\newcommand{\mb}{\ensuremath{\operatorname{MB-OSM}}\xspace}
\newcommand{\rank}{\ensuremath{\operatorname{Ranking}}\xspace}
\newcommand{\gre}{\ensuremath{\operatorname{Greedy}}\xspace}
\newcommand{\att}{\ensuremath{\operatorname{ATT}}\xspace}
\newcommand{\Ber}{\ensuremath{\mathrm{Ber}}\xspace}
\newcommand{\Var}{\ensuremath{\mathsf{Var}}\xspace}
\newcommand{\ra}{\mathsf{Ratio}}
\newcommand{\tbf}{\textbf}
\newcommand{\sfg}{\mathsf{g}}
\newcommand{\be}{\bar{e}}
\newcommand{\bt}{\bar{t}}
\begin{document}

\title{Exploring the Tradeoff between Competitive Ratio and Variance in  Online-Matching Markets\thanks{This paper was accepted to the 18th Conference on Web and Internet Economics (WINE), 2022. PX was partially supported by NSF CRII Award IIS-1948157. The author would like to thank the anonymous reviewers for their valuable comments.}}

\author{Pan Xu}
\institute{Department of Computer Science, New Jersey Institute of Technology\\
\email{pxu@njit.edu}}

\maketitle

\begin{abstract}
In this paper, we propose an online-matching-based model to study the assignment problems arising in a wide range of online-matching markets, including online recommendations, ride-hailing platforms, and crowdsourcing markets. It features that each assignment can request a random set of resources and yield a random utility, and the two (cost and utility) can be arbitrarily correlated with each other. We present two linear-programming-based parameterized policies to study the tradeoff between the \emph{competitive ratio} (CR) on the total utilities and the \emph{variance} on the total number of matches (unweighted version). The first one (SAMP) is simply to sample an edge according to the distribution extracted from the clairvoyant optimal, while the second (ATT) features a time-adaptive attenuation framework that leads to an improvement over the state-of-the-art competitive-ratio result. We also consider the problem under a large-budget assumption and show that SAMP achieves asymptotically optimal performance in terms of competitive ratio.
\end{abstract}


\section{Introduction} Online-matching models have been used to study a wide range of real-world matching markets, including ride-hailing platforms,  online recommendations, and crowdsourcing markets.  One of the central problems there is to design an online-matching policy such that the expected total utility is maximized over all matches. Most of the current models assume that each match requests one single deterministic resource and what is more, the profit gained through the match is typically supposed to be independent of the amount of resource requested. These assumptions make current models hard to capture many real applications perfectly. Consider the following  three motivating examples.

\xhdr{Bundle recommendations}. Consider online bundle recommendations~\citep{zhu2014bundle}. We have a ground set $\cU$ of all offline items to sell. Upon the arrival of an online buyer, say $j$, we need to select a bundle $S \subseteq \cU$ of offline items to offer to $j$, and then the user $j$ will accept and reject $S$ with certain respective probabilities, say $p$ and $1-p$. Assume the acceptance of bundle $S$ will yield some profit, say $w_{j,S}$, to the platform (\eg Amazon) and each type of offline item has a limited number of copies in stock. In this case, we observe that after ``matching'' $S$ with $j$: With probability $p$, we will deplete a copy of all items in $S$ and get a profit $w_{j,S}$, and with probability $1-p$, it will incur no cost and no profit.

\xhdr{Display advertising}. Consider display advertising business~\citep{zhu2017optimized,abbassi2015optimizing}. We have a ground set $\cU$ of all offline impressions (or ads). Upon the arrival of an online user of type $j$, one ads platform (\eg Google) will display to her a set of ads, say $S \subseteq \cU$. Then, the user $j$ will select a subset $S' \subseteq S$ to click, which occurs with some probability $p_{j,S'}$, and this yields profit $w_{j,S'}$ to the ads platform as a result. Assume each ad  has a displaying capacity  due to the budget of the advertiser. In this context, matching $j$ with $S$ will lead to a random consumption of budgets of ads in $S$ and a random profit, and the two (consumption and profit) are positively correlated with each other.

\xhdr{Task assignment in crowdsourcing markets}. Consider task assignment problem in crowdsourcing human-resource markets~\citep{ho2012online}, in which we crowdsource arriving workers to complete as many tasks as possible.
We have a ground set $\cU$ of offline tasks. Upon the arrival of an online worker of type $j$, the platform (\eg Amazon Mechanical Turk) assigns her a set $S \subseteq \cU$ of relevant tasks, and then the worker will select a subset $S^\prime \subseteq S$ of tasks to work according to her preferences, which occurs with a certain probability $p_{j,S'}$. Assume every task has a matching capacity reflecting the limited stock. In this context, after matching $j$ and $S$, with probability $p_{j,S'}$, we will exhaust one copy of each task in $S'$ and gain a utility of $|S'|$ (the number of tasks completed).

In all the three examples, each assignment may incur a random set of resources and yield a stochastic profit, and the cost and profit can be highly correlated with each other. In this paper, we propose a unifying model, called \emph{Multi-Budgeted Online Stochastic Matching} (\mb),  to study these applications.

\tbf{Multi-Budgeted Online Stochastic Matching (\mb)}. Suppose we have a bipartite graph $G=(I,J,E)$, where $I$ and $J$ denote the sets of offline and online agents, respectively.  We have a set of $K$ resources, denoted by $[K] \doteq \{1,2,\ldots, K\}$, and each resource $k \in [K]$ has a budget $B_k \in \mathbb{Z}^{+}$. Each edge $e \in E$ is associated with a random utility $W_e \ge 0$ and a random vector-valued cost, $\cA_e=(A_{e,k})_{k\in[K]}$, which takes values over $\{0,1\}^K$.  We define the support of $\cA_e$ as $\cS_e=\{k \in [K]: \E[A_{e,k}]>0\}$. In our context, $\cA_e$ with $e=(i,j)$ captures the random set of resources requested by matching agents $i$ and $j$, and $\cS_e$ represents the set of all possible resources potentially consumed by $e$.  Note that offline vertices in $I$ are static, while online vertices in $J$ arrive dynamically. Specifically, we consider a given time horizon $T$. For each time (or round) $t \in [T] \doteq\{1,2,\ldots, T\}$, one single online vertex $\hat{j}$ will be sampled (called $\hat{j}$ arrives) following a known distribution $\{p_{j}\}$ \emph{with replacement} such that $\Pr[\hat{j}=j]=p_j$ for each $j\in J$ with $\sum_{j \in J} p_{j}=1$. Note that the sampling process is independent and identical across the $T$ rounds. For each $j$, let $r_j=T \cdot p_j$, which is called the \emph{arrival rate} of $j$ with $\sum_{j \in J} r_j=T$. Our arrival assumption is commonly referred to as the \textit{known identical independent distributions} (KIID).\footnote{KIID is mainly inspired from the fact that we can often learn the arrival distribution from historical logs~\citep{Yao2018deep,DBLP:conf/kdd/LiFWSYL18,DBLP:conf/kdd/WangFY18}. It is widely adopted to study practical online-matching markets~\cite{xu-aaai-19,dickerson2018assigning,fata2019multi}, and it is also a common setting in theoretical online-matching models~\citep{feldman2009online,huang2021online,haeupler2011online,manshadi2012online,jaillet2013online,brubach2016new}.} Upon the arrival of an online vertex $j$, an immediate and irrevocable decision is required before observing the next arrival of online vertex: either reject $j$ or match it with an offline neighbor $i$ with $e=(i,j)\in E$. In the latter case, instant cost and utility realizations will be observed, \eg $\cA_e=(a_{e,k})$ and $W_e=w_e$, and the budget of each resource $k$ will get reduced by $a_{e,k}$ and we will gain a utility of $w_e$ as a result.

\tbf{Remarks on the model of \mb}. (1) The following information is all known as part of the input and is accessible to the algorithm: $G=(I,J,E)$, $\{B_k| k \in [K]\}$, distributions of $\{\cA_e, W_e| e\in E\}$, $\{p_j, r_j| j\in J\}$ and $T$. \emph{Additionally, we assume $T \gg 1$, and some of our results are obtained by taking $T\rightarrow\infty$}, a common practice in studying competitive ratio for theoretical online-matching models under KIID~\citep{haeupler2011online,manshadi2012online,jaillet2013online}. (2) Cost distributions among all edges ($\{\cA_e|e\in E\}$) are independent; the same for all utility distributions of $\{W_e|e\in E\}$. For each given edge $e$, its cost $\cA_e$ and utility $W_e$ can be arbitrarily correlated, however. (3) \emph{Throughout this paper, we assume each edge will incur at most $\Del$ different possible resources, \ie $|\cS_e| \le \Del$ for every $e\in E$, where $\Del$ is called sparsity}. The sparsity typically takes a small constant value in practice,\footnote{This can be seen from the fact that sparsity captures the tolerance on the bundle size of online buyers, the patience on the number of ads displayed to on online user simultaneously, and working capacity among online workers in the three aforementioned applications, respectively.}   though the total number $K$ of all resources can be huge. Fortunately, as shown later, the performance of algorithms proposed here will rely only on the sparsity $\Del$, regardless of $K$. (4) Safe policies. We say a policy (or algorithm) \ALG is \emph{safe} if $\ALG$ can make an assignment $e$ only when it will not violate any budget constraint for any possible realizations of $\cA_e$ (\ie every resource in $\cS_e$ should have at least one unit budget remaining  then). Safe policies are required in most real-world applications. For example, we can offer a bundle to a buyer only when all items in the bundle have at least one copy in stock. \emph{For this reason, we consider safe policies only throughout this paper}.

\tbf{Remarks on sources of randomness in \mb}. Consider a given (randomized) policy $\ALG$. Let $\cM$ be a  set of (random) assignments. The  expected total amount of utilities obtained by $\ALG$ is defined as $\E[\ALG]=\E[\sum_{e\in \cM} W_e]$, where the expectation is taken over the following four sources of randomness: \tbf{(R1)} dynamic arrivals of online agents over the $T$ rounds;  \tbf{(R2)} randomness possibly used by the policy $\ALG$;  \tbf{(R3)} randomess in cost realizations of $\{\cA_e\}$;  and \tbf{(R4)} randomess in utility realizations of $\{W_e\}$.
 Our goal is to design a policy that achieves  as large utilities as possible while with a variance as small as possible.
 
\subsection{Preliminaries}
Throughout this paper, we set $[n]:=\{1,2,\ldots,n\}$ for any integer $n$.

\xhdr{Competitive ratio}. Competitive ratio (CR) is a commonly used metric to evaluate the performance of online algorithms. Consider maximization of the total utilities in \mb as studied here for example. Consider a given algorithm $\ALG$ and an instance $\cI$ of \mb.
Let $\ALG(\mathcal{I})$ and $\OPT(\cI)$ be the expected total utilities achieved by $\ALG$ and a clairvoyant optimal $\OPT$ on $\cI$, respectively. We say $\ALG$ achieves a competitive ratio of $\rho \in [0,1]$ if $\ALG(\cI) \ge \rho \OPT(\cI)$ for all possible instances $\cI$ of \mb. 

Here are a few similarities and differences between $\ALG$ and $\OPT$: (1) The expected performance of $\ALG$ and $\OPT$ are both taken over the four sources of randomness (\ie \tbf{R1-R4}); (2)  $\ALG$ is required to make an immediate matching decision upon every arrival of an online vertex before the next one, while  \OPT enjoys the privilege of accessing the full arrival sequence of online vertices before any decisions; (3) Neither \ALG nor \OPT has access to realizations of the cost or utility of an edge until the edge has been added; (4) Both \ALG and \OPT should follow rules of safe policies, \ie an edge $e$ can be added only when every resource in $\cS_e$ has at least one remaining budget. The toy example below shows that the natural two heuristics, \gre and \rank, both achieve a competitive ratio of zero on \mb.\footnote{Note that \gre and \rank operate as follows:  \gre matches each arriving $j \in J$ with a neighbor $i$ such that $e=(i,j)$ has the largest expectation of utility among all safe assignments (\ie no budget violation will be caused); \rank  first chooses a random order $\pi$ over $I$ and then matches each arriving $j$ with a neighbor $i$ such that $i$ has the lowest order in $\pi$ among all safe choices.} 

\begin{figure}[th!]
\begin{minipage}{.5\linewidth}
 \begin{tikzpicture}
 \draw (0,0) node[minimum size=0.2mm,draw,circle] {$i$};
  \draw (3,0) node[minimum size=1mm,draw,circle] {$j_1$};
    \draw (3,-1) node[minimum size=1mm,draw,circle] {$j_2$};
            \draw (3,-3) node[minimum size=1mm,draw,circle] {$j_n$};
             \node [red,above] at (1.7,0) {$1$};
                \node [red,above] at (1.7,-0.7) {$\ep$};
                \node [red,above] at (1.7,-1.25) {$\ep$};
                     \node [red,above] at (1.7,-1.8) {$\ep$};
\draw[-] (0.4,0)--(2.6,0);
\draw[-] (0.4,0)--(2.6,-1);
\draw[-] (0.4,0)--(2.6,-3);
\draw[-] (0.4,0)--(2.6,-2);
\draw [dotted, ultra thick](2.7,-2) -- (3.3,-2);
\end{tikzpicture}
\end{minipage}\hfill
\begin{minipage}{.5\linewidth}
\begin{align*}
&I=\{i\}, J=\{j_1, \ldots, j_n\}, E=\{(i,j_\ell)| \ell \in [n]\}; \\
&K=1, B=1; \\
& \cA_e=1 \mbox{ with probability $1$}, \forall e \in E;\\
&W_{(i,j_1)}=1, \mbox{ with probability $1$};\\
&W_{(i,j_\ell)}=\ep, \mbox{ with probability $1$}, \forall 1<\ell \le n;\\
&T=n,p_j=1/n, r_j=1, \forall j \in [n];\\
&\gre=\rank\le \ep+1/n;\\
&\OPT \ge 1-1/\sfe.
\end{align*}
\end{minipage}
\caption{A toy example on which  \gre and \rank both achieve a competitive ratio of zero.} \label{fig:exam-1}
\end{figure}

\begin{example}[\gre and \rank both achieve a competitive ratio of zero] Consider such a toy example as shown in Figure~\ref{fig:exam-1}. We have a star graph with $I=\{i\}$ and $J=\{j_1,\ldots,j_n\}$, $T=n$,  and $r_j=1$ for every $j \in J$. Thus, during each round $t \in [T]$, one single online agent $\hat{j}$ will be sampled uniformly at random with replacement such that $\Pr[\hat{j}=j_\ell]=1/n$ for every $\ell \in [n]$. We have one single resource with a unit budget, and each edge will cost one unit resource. The edge $(i, j_1)$ has a deterministic weight of one, while the rest have a deterministic weight of $\ep>0$. Our example captures a simple instance of the classical online matching under KIID, where the offline vertex $i$ can be viewed as the single resource with a unit matching capacity.

 We can verify that (1) $\gre$ and $\rank$ reduce to the same and both achieve an expected utility of $1/n \cdot 1+(1-1/n) \cdot \ep \le 1/n+\ep$; (2) \OPT (a clairvoyant optimal) achieves an expected utility of $(1-1/\sfe) \cdot 1+ 1/\sfe \cdot \ep$, where $\OPT$ will assign $i$ to $j_1$ if $j_1$ arrives at least once (that happens with probability $1-1/\sfe$). By definition,  \gre and $\rank$ both achieve a competitive ratio no more than $(1/n+\ep)/(1-1/\sfe) \rightarrow 0$ when $n \rightarrow \infty$ and $\ep \rightarrow 0$. \hfill $\blacksquare$
\end{example}

\tbf{Variance analysis}. In this paper, we pioneer variance analysis for online algorithms in the context of online stochastic matching under known distributions. For most classical optimization problems formulated in an offline setting where input information is fully accessible  (\eg finding a maximum weighted matching in a general graph), we are allowed to run an algorithm multiple times on a given input instance. In this context, suppose we design a randomized algorithm $\ALG$ and show that its \emph{expected} performance ($\E[\ALG]$) is good enough. By applying  de-randomization techniques like conditional expectations~\cite{raghavan1988probabilistic}, we can get a \emph{deterministic} version that can performance as good as $\E[\ALG]$. Note that for online optimization problems like \mb, we can run an online algorithm \emph{only once} on a given input instance and thus, de-randomization techniques fail to work here. This highlights the importance of variance analysis in online-algorithm design, in addition to the popular competitive-ratio (CR) analysis. Observe that the CR metric reflects only the gap between an online algorithm (\ALG) and a clairvoyant optimal (\OPT) in terms of their  \emph{expected} performance: it shows no any guarantee on the variance or robustness of \ALG. 

In this paper, we focus on analyzing the variance on the total (random) number of matches instead of the total utilities achieved by any algorithm. Note that the randomness in utility realizations (\tbf{R4}) can contribute to an unbounded variance in the total utilities even for simple deterministic algorithms. Consider such a toy example as follows: There is only one single edge $e=(i,j)$ in the graph and one single round $T=1$ where $j$ will arrive with probability one; We have one single resource with a unit budget; $A_e=1$ deterministically while $W_e=1/\ep$ with probability $\ep$ and $W_e=0$ otherwise. We can verify that the total utilities achieved by any policy adding edge $e$ will have a variance equal to $1/\ep-1$, which can be arbitrarily large. For this reason, we focus on analyzing the variance on the total number of matches, which is due to randomness sources as outlined in \tbf{R1}, \tbf{R2}, and \tbf{R3} only (see \tbf{Remarks on sources of randomness} in \mb).

\subsection{Main Contributions} 
Our contributions are summarized as follows. First, we present a canonical LP (Section~\ref{sec:lp}) as the benchmark, whose optimal value proves a valid upper bound on the total expected utilities achieved by a clairvoyant optimal. Second, we design two LP-based parameterized policies to study the tradeoff between the competitive ratio (CR) on the total utilities and the variance on the total number of matches. The first one  (\samp) is simply to sample an edge according to the distribution extracted from a clairvoyant optimal, while the second (\att) features a time-adaptive attenuation framework. In the last, we study a special case under the large-budget assumption and show that the first algorithm \samp can achieve an asymptotically optimal CR that approaches one when budgets go infinity. Here are the details.

\begin{theorem}\label{thm:main-sa}[Section~\ref{sec:samp}]
There exists a parameterized LP-based sampling algorithm $\samp(\alp)$  with $\alp \in [0,1]$ such that (i) it achieves a competitive ratio (CR) \emph{equal to} $(1-\sfe^{-\alp \Del})/\Del$ on the total utilities with respect to the benchmark LP~\eqref{obj-1}; and (ii) it achieves a variance at most $(\alp T)^2 \cdot \sfg\big(\min(\Del \alp, \eta)\big)+O(T)$ on the total number of matches, where $\sfg(x):=\big(1 - \sfe^{-2 x} - 2x \sfe^{-x}\big)/x^2$, and $\eta\sim1.126$ is the unique maximizer of $\sfg(x)$ when $x \in [0,\infty]$. Both the CR and variance analyses are tight.\footnote{Tightness on the analysis means we can identify an instance on which the CR (or variance) achieved by $\samp$  (or \att) matches the claimed bound.}
\end{theorem}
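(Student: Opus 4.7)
The plan is to follow a standard LP-based randomized-rounding recipe. Solve the benchmark LP~\eqref{obj-1} to obtain a fractional optimum $\{x_e^*\}$ satisfying $\sum_{e \ni j} x_e^* \le r_j$ and $\sum_{e:\, k \in \cS_e} \E[A_{e,k}]\, x_e^* \le B_k$. Define $\samp(\alp)$: when online vertex $j$ arrives at round $t$, sample an incident edge $e=(i,j)$ with probability $\alp\, x_e^*/r_j$ (and skip otherwise), then commit to $e$ only if it is \emph{safe}, i.e., every $k \in \cS_e$ still has strictly positive remaining budget. Let $q_e(t)$ denote the probability that $e$ is safe at the start of round $t$. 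The crux is to lower-bound $q_e(t)$: for any single resource $k$, the unconditional expected consumption per round is at most $\alp B_k/T$ by the LP constraint, which via a telescoping / martingale argument gives a per-resource survival probability of roughly $\sfe^{-\alp t/T}$; combining across the at-most-$\Del$ resources in $\cS_e$ via a union-type bound yields $q_e(t) \ge \sfe^{-\alp \Del t/T}$. The expected utility contributed by $e$ is then $\sum_{t=1}^T (\alp w_e x_e^*/T)\cdot q_e(t)$, which in the $T\to\infty$ limit tends to $w_e x_e^* \cdot \int_0^1 \alp\, \sfe^{-\alp \Del s}\,ds = w_e x_e^*\cdot (1-\sfe^{-\alp\Del})/\Del$. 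Summing over $e$ and using $\sum_e w_e x_e^* \ge \OPT$ yields the advertised ratio.

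\textbf{Part (ii): variance.} Let $Y_t = \mathbb{1}[\text{a match occurs at round } t]$ and $N = \sum_{t=1}^T Y_t$. Decompose
\[
\Var(N) = \sum_{t=1}^T \Var(Y_t) + 2\sum_{1 \le s<t\le T} \mathrm{Cov}(Y_s,Y_t).
\]
Since each $Y_t$ is Bernoulli and $\sum_t \E[Y_t] \le \alp T$, the first sum is $O(T)$, which will absorb the trailing $O(T)$ in the theorem. For the covariance sum, I would expand $\E[Y_s Y_t]$ over pairs of edges $(e,e')$, condition on the joint state of the shared resources just before round $s$, and apply the single-round survival analysis of Part~(i) to both rounds. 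The resulting upper bound on $\E[Y_s Y_t]-\E[Y_s]\E[Y_t]$ turns, in the continuous limit, into a double integral over $0\le u\le s\le 1$ of products of exponentials of the form $\sfe^{-\alp\Del\,\cdot}$, whose explicit evaluation delivers $(\alp T)^2\cdot \sfg(\alp\Del)$ with $\sfg$ as stated in the theorem. Finally, since $\sfg$ is unimodal with global maximizer $\eta\approx 1.126$, substituting $\alp\Del$ by $\min(\alp\Del,\eta)$ is a no-op whenever $\alp\Del\le \eta$ and otherwise relaxes the bound to the still-valid $\sfg(\eta)$, matching the statement.

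\textbf{Tightness and main obstacle.} Tightness for both parts I would establish by an explicit construction along the lines of Figure~\ref{fig:exam-1}: a star-like instance with balanced arrival rates, a single common resource, and budget chosen to interact appropriately with $\alp\Del$ can force each inequality in the per-resource survival bound (for the CR) and in the covariance estimate (for the variance) to be tight simultaneously. The step I expect to be the hardest is the covariance analysis: the events $\{Y_s=1\}$ and $\{Y_t=1\}$ are coupled through a common budget-depletion trajectory, so $\E[Y_s Y_t]$ does not factorize over the two rounds, and extracting the clean integral form that evaluates to $\sfg(\alp\Del)$ requires a careful conditioning argument fused with the exponential survival bound from Part~(i). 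Pinning down the exact function $\sfg$ and its maximizer $\eta$, and then arguing that $\min(\alp\Del,\eta)$ is the right cap, is the delicate core of the variance proof.
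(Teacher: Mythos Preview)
Your high-level recipe for Part~(i) is the same as the paper's, but the key survival bound $q_e(t)\ge \sfe^{-\alp\Del t/T}$ does not follow from a per-resource bound plus a ``union-type'' step. A union bound on the $\Del$ depletion events gives only $q_e(t)\ge 1-\Del(1-\text{per-resource survival})$, which is weaker than $\sfe^{-\alp\Del t/T}$ (and can even be negative). The paper obtains the sharp bound by two structural lemmas you do not mention: first, it identifies the worst-case \emph{joint} distribution of the per-round consumption vector over $\cS_e$ (mutually exclusive coordinates, i.e., a balls-and-bins structure), and second, it shows the worst-case budget is $B_k=1$ for every $k\in\cS_e$. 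Only after both reductions does the safety probability become exactly $(1-\Del\alp/T)^{t-1}$, and the integral you wrote then gives $(1-\sfe^{-\alp\Del})/\Del$.

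The more serious gap is in Part~(ii), specifically your explanation of the cap $\min(\alp\Del,\eta)$. You propose to first prove $\Var(N)\le (\alp T)^2\sfg(\alp\Del)$ by plugging the survival lower bound $p_t\doteq\Pr[\SF_{e,t}]=\sfe^{-\alp\Del t/T}$ into the covariance computation, and then ``relax'' to $\sfg(\min(\alp\Del,\eta))$. This does not work: for $\samp$ (unlike $\att$), $p_t$ is only \emph{lower} bounded by $\sfe^{-\alp\Del t/T}$ and can be as large as~$1$, and the quadratic form $\sum_t 2(t-1)p_t-(\sum_t p_t)^2$ that governs the variance is \emph{not monotone} in the $p_t$'s. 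Consequently, substituting the lower bound on $p_t$ does not yield an upper bound on the variance; indeed, when $\alp\Del>\eta$ there are admissible profiles $\{p_t\}$ (decaying at an effective rate $\boldsymbol{\alp}\approx\eta$) for which the variance is strictly larger than $(\alp T)^2\sfg(\alp\Del)$. The paper's argument is the opposite of a relaxation: it \emph{maximizes} the variance expression over all $p_t\in[\sfe^{-\alp\Del(t-1)/T},1]$, reduces this to a one-parameter family $p_t=\sfe^{-\boldsymbol{\alp}(t-1)/T}$ with $\boldsymbol{\alp}\in[0,\alp\Del]$, and then uses the unimodality of $\sfg$ to conclude the maximum is $\sfg(\min(\alp\Del,\eta))$. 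So the $\min$ is the \emph{tight} upper bound, not a loosening of a stronger one; your route would fail to establish the stated inequality whenever $\alp\Del>\eta$.
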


\begin{theorem}\label{thm:main-att}[Section~\ref{sec:att}]
There exists a parameterized sampling algorithm $\att(\alp)$ with $\alp \in [0,1]$ such that (i) it achieves a competitive ratio (CR) \emph{equal to} $(1-\sfe^{-\alp \Del})/\Del$ on the total utilities with respect to the benchmark LP~\eqref{obj-1}; and (ii) it achieves a variance at most $(\alp T)^2 \cdot \sfg(\alp \Del)+O(T)$ on the total number of matches, where $\sfg(x):=\big(1 - \sfe^{-2 x} - 2x \sfe^{-x}\big)/x^2$. Both CR and variance analyses are tight.
\end{theorem}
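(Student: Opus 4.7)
I would realize $\att(\alp)$ as a time-adaptive attenuation refinement of $\samp(\alp)$. Given an optimal LP solution $\{x_e\}$ to~\eqref{obj-1}, upon the arrival of an online vertex $j$ at round $t$, first sample an edge $e=(i,j)$ with probability $\alp\,x_e/(T\,p_j)$. Then attempt to add $e$ only if $e$ is safe, and accept the attempt with probability $\phi_t\in[0,1]$ drawn by an independent attenuation coin. The factor is set to $\phi_t := \bigl((1-\sfe^{-\alp\Del})/(\alp\Del)\bigr)\big/\Pr[e \text{ is safe at round } t]$, so that the unconditional probability of adding $e$ at round $t$ is exactly $x_e(1-\sfe^{-\alp\Del})/(\Del\,T)$, which is independent of $t$. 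This is the ``time-adaptive'' aspect: the attenuation is tuned separately at each round to cancel the time-varying safety probability.

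\textbf{Feasibility and CR.} The first technical step is to prove $\phi_t\in[0,1]$, which reduces to the uniform lower bound $\Pr[e \text{ safe at }t]\geq(1-\sfe^{-\alp\Del})/(\alp\Del)$. I would establish this by lower-bounding, for each resource $k\in\cS_e$, the probability that $k$ still has at least one unit of budget at round $t$ via a Chernoff- or Poisson-type survival estimate on the cumulative attenuated consumption, and then combining the per-resource estimates using the sparsity constraint $|\cS_e|\leq\Del$. Once feasibility is confirmed, the CR claim is immediate from linearity of expectation: $\E[\att]=(1-\sfe^{-\alp\Del})/\Del\cdot\sum_e x_e\,\E[W_e]\geq (1-\sfe^{-\alp\Del})/\Del\cdot\OPT$, since LP~\eqref{obj-1} is a valid upper bound on $\OPT$ established earlier in the paper.

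\textbf{Variance analysis.} Let $N=\sum_{t=1}^T M_t$ with $M_t\in\{0,1\}$ indicating whether a match occurs at round $t$. By construction $\Pr[M_t=1]=p$ is time-invariant, so $\Var[N]=T\,p(1-p)+2\sum_{s<t}\mathrm{Cov}(M_s,M_t)$, with the diagonal contributing $O(T)$. For the covariance sum, write $\mathrm{Cov}(M_s,M_t)=\Pr[M_s=1,M_t=1]-p^2$ and decompose $\Pr[M_s=1,M_t=1]$ over the pair of edges $(e,e')$ sampled at times $s<t$. Using the independence across rounds of arrivals (\textbf{R1}) and cost realizations (\textbf{R3}), each joint probability factors into sampling weights times a joint-safety probability, and the attenuation construction lets us evaluate the latter by conditioning on the resource-consumption history between rounds $s$ and $t$. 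Passing to the continuous limit $T\to\infty$ converts $\sum_{s<t}\mathrm{Cov}(M_s,M_t)$ into a double integral that evaluates to $\tfrac{1}{2}(\alp T)^2\,\sfg(\alp\Del)+O(T)$; doubling yields the claimed variance bound.

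\textbf{Tightness and main obstacle.} Tightness of both the CR and variance bounds will be demonstrated on a tailored family of instances: a $\Del$-coordinate star-graph variant of the instance in Figure~\ref{fig:exam-1} for CR, and a balanced symmetric-cost instance for variance on which both the safety lower bound and the covariance identity are saturated. The main obstacle is twofold. First, proving the uniform safety lower bound $\Pr[e\text{ safe at }t]\geq(1-\sfe^{-\alp\Del})/(\alp\Del)$ in the presence of arbitrarily correlated cost vectors $\cA_e$ whose support may intersect nontrivially across edges, with only the sparsity bound $|\cS_e|\leq\Del$ to exploit. Second, isolating the precise leading constant $\sfg(\alp\Del)$ in the covariance sum: here I expect the calculation to hinge on a clean exchange-of-summation argument and the continuous-limit integral producing the numerator $1-\sfe^{-2\alp\Del}-2\alp\Del\,\sfe^{-\alp\Del}$ of $\sfg$ exactly, rather than up to lower-order distortions.
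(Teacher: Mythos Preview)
Your proposed attenuation sets a \emph{time-constant} target: you want the post-attenuation safety probability to equal $c:=(1-\sfe^{-\alp\Del})/(\alp\Del)$ at every round. This is exactly the ``time-oblivious'' scheme the paper contrasts against, and the required feasibility bound $\Pr[e\text{ safe at }t]\ge c$ is false in general. On the canonical worst case (one edge, $K=\Del$ unit-budget resources, $\cA_e$ a uniformly random basis vector with mass $\Del/T$), your own recursion gives $\Pr[\text{safe at }t]=1-(t-1)\alp c\Del/T$ while feasibility holds; at $t=T$ this equals $1-\alp c\Del=\sfe^{-\alp\Del}$, and $\sfe^{-x}<(1-\sfe^{-x})/x$ for every $x>0$. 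Hence $\phi_t>1$ near the end of the horizon, and no Chernoff or Poisson refinement rescues this when $B_k=1$. The downstream variance argument, which relies on $\Pr[M_t=1]$ being exactly $p$ for all $t$, then collapses as well.

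The paper's $\att(\alp)$ instead uses a \emph{time-decreasing} target $\gam_t=(1-\alp\Del/T)^{t-1}$, proved feasible by induction on $t$: assuming $\E[\SF_{e,t'}]=\gam_{t'}$ for all $t'<t$, the expected consumption of each resource $k$ up to $t$ is at most $B_k\cdot\frac{1}{\Del}(1-\gam_t)$, and a union bound plus Markov over the at most $\Del$ resources in $\cS_e$ gives $\beta_{e,t}\ge\gam_t$. Summing $\gam_t$ over $t$ still yields the ratio $(1-\sfe^{-\alp\Del})/\Del$, and the variance computation then proceeds by expanding $\Var[\sum_{e,t}X_{e,t}\SF_{e,t}]$ into four pieces with $\E[\SF_{e,t}]=\gam_t$ plugged in; the $(\alp T)^2\sfg(\alp\Del)$ term arises from $2\alp^2\sum_t(t-1)\gam_t-\mu^2$ in the $T\to\infty$ limit, not from a constant-$p$ covariance identity.
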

\xhdr{Remarks on $\samp(\alp)$ and $\att(\alp)$}. (i) Both $\samp(\alp)$ and $\att(\alp)$ achieve a competitive ratio (CR)  \emph{equal} to $(1-\sfe^{-\alp \Del})/\Del$ on the total expected utilities. Note that $(1-\sfe^{-\alp \Del})/\Del $ is an increasing function of $\alp \in [0,1]$ for any given integer $\Del$. Meanwhile, we can verify that the upper bounds of variance for $\samp(\alp)$ and $\att(\alp)$ are both increasing functions of $\alp \in [0,1]$ for any given integer $\Del$. This suggests that when $\alp$ takes a larger value, both algorithms will achieve a higher CR on the total utilities at a price of a higher variance on the total number of matches and vice versa. (ii) Recall that CR is defined based on the worst-case on which the ratio of the performance of an algorithm to that of a clairvoyant optimal gets minimized. Though $\samp$ and $\att$ achieve the same worst-case CR, $\att$ shows more robust than \samp in the way that $\att(\alp)$ will achieve a CR equal to $(1-\sfe^{-\alp \Del})/\Del$ on \emph{every} input instance, while $\samp(\alp)$ will achieve the worst-case CR on a very specialized instance as identified in Example~\ref{exam:cr-ws}. This partially explains why the upper bound of variance of $\samp$ is slightly larger than that of $\att$. (iii) Though $\att$ is conceptually more complicated than $\samp$, both the CR and variance analyses of $\att$ turn out much simpler than those of $\samp$.

We complement the above lower bounds of CR by showing some upper bound due to the benchmark LP (conditional hardness result).
\begin{theorem}\label{thm:hard}[Section~\ref{sec:hardness}]
No algorithm can achieve a competitive ratio better than  $\big(1-\sfe^{-(\Del-1+1/\Del)}\big)/(\Del-1+1/\Del)$ with respect to the benchmark \LP~\eqref{obj-1} with $\Del-1$ being a prime.
\end{theorem}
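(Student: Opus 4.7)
The plan is to exhibit, for every large $T$, an explicit \mb instance $\cI_T$ (with $\Del-1$ prime) on which no safe online algorithm $\ALG$ can achieve a ratio against the benchmark \LP~\eqref{obj-1} better than $(1-\sfe^{-x})/x$, where $x:=\Del-1+1/\Del$; passing to $T\to\infty$ then proves the theorem. Set $q:=\Del-1$ and $v:=q^2+q+1=\Del^2-\Del+1$, so that $x=v/\Del$. Since $q$ is prime, there exists a finite projective plane $\mathcal{P}$ of order $q$ with $v$ points and $v$ lines, in which every line contains exactly $\Del$ points, every point lies on exactly $\Del$ lines, and any two distinct lines meet in exactly one point.

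For the construction I would take $I=\{\ast\}$, let the $K=v$ resources correspond to the points of $\mathcal{P}$ with unit budgets $B_p=1$, and introduce $v+1$ online types: one ``useful'' type $j_\ell$ per line $\ell$ of $\mathcal{P}$, together with a single ``dummy'' type $j_0$ carrying no incident edges. For each line $\ell$, add the edge $e_\ell=(\ast,j_\ell)$ with deterministic unit utility $W_{e_\ell}=1$ and deterministic cost $A_{e_\ell,p}=\mathbf{1}[p\in\ell]$, so that $|\cS_{e_\ell}|=\Del$ and the sparsity bound is met with equality. Assign arrival rates $r_{j_\ell}=1/\Del$ for every useful type and $r_{j_0}=T-v/\Del$, which makes $\sum_j r_j=T$.

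Because $j_0$ contributes nothing to the LP, \LP~\eqref{obj-1} reduces to $\max\sum_\ell x_{e_\ell}$ subject to $x_{e_\ell}\le 1/\Del$ (arrival rate) and $\sum_{\ell\ni p}x_{e_\ell}\le 1$ (budget at point $p$). The symmetric primal $x_{e_\ell}\equiv 1/\Del$ is feasible and attains $v/\Del=x$; it is certified optimal by the dual solution $\alpha_\ell=0$, $\beta_p=1/\Del$ of matching value $v/\Del$. On the upper-bound side, since any two distinct lines of $\mathcal{P}$ share a point of unit budget, the first time any edge $e_\ell$ is matched, every other edge has at least one depleted resource and thus becomes forbidden under the safe-policy rule. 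Consequently every sample path of $\ALG$ contains at most one match, so $\ALG(\cI_T)\le\Pr[\text{at least one useful arrival over $T$ rounds}]=1-(1-x/T)^T\to 1-\sfe^{-x}$ as $T\to\infty$, and dividing by the LP value yields $\limsup_{T\to\infty}\ALG(\cI_T)/\LP(\cI_T)\le(1-\sfe^{-x})/x$, as claimed.

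The main delicate point is to verify cleanly that dummy arrivals, which convey no side information under KIID, cannot allow a randomized $\ALG$ to circumvent the one-match cap; this is immediate because $j_0$ carries no edges and the shared-point structure of $\mathcal{P}$ enforces the cap sample-path by sample-path, independently of any auxiliary algorithmic randomness.
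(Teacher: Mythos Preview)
Your argument is correct and rests on the same core idea as the paper: both constructions use the projective plane of order $\Del-1$ as an intersecting $\Del$-uniform $\Del$-regular system over $v=\Del^2-\Del+1$ unit-budget resources, so that any safe policy can complete at most one match per sample path, and both recover the benchmark value $\Del-1+1/\Del$ for \LP~\eqref{obj-1}. The only difference is in how the ``rate $x/T$ of meaningful events per round'' is engineered: the paper keeps $T$ online types with $r_j=1$ and makes each edge's cost/utility Bernoulli with parameter $p=x/T$ (so an edge ``activates'' with probability $p$ and otherwise consumes nothing), whereas you take deterministic costs and instead thin the useful arrivals by setting $r_{j_\ell}=1/\Del$ and routing the remaining mass to an edgeless dummy type. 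Both instantiations yield the same numerator $1-(1-x/T)^T\to 1-\sfe^{-x}$; your version is a touch more elementary because it avoids the stochastic cost/utility coupling, while the paper's version avoids the dummy type and stays in the $r_j\equiv 1$ regime.
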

\xhdr{Remarks on CR for online-matching models related to \mb}. (i) When $\Del=1$, the upper bound in Theorem~\ref{thm:hard} matches the best lower bound of $1-1/\sfe$ (which arrives at $\alp=1$) in Theorems~\ref{thm:main-sa} and~\ref{thm:main-att}. This suggests the tightness of lower bounds on CR with respect to the current benchmark \LP~\eqref{obj-1}.\footnote{Tightness here means that the lower bound of CR is the best we can get based on the current benchmark LP.} (ii) \citet{aamas-19} considered a special case of \mb when each edge is associated with a \emph{deterministic} vector-valued cost with sparsity $\Del$ but under known adversarial distributions, which allow the arrival distributions to change over time. They gave an upper bound (hardness result) of $1/(\Del-1+1/\Del)$ and a lower bound of $1/(\Del+1)$, respectively.  Note that both the lower and upper bounds are improved here: the best lower bound of $\samp$ and $\att$ (when $\alp=1$) satisfies $(1-\sfe^{-\Del})/\Del \ge 1/(\Del+1)$; see Figure~\ref{fig:comp}. (iii) \citet{brubach2016new} considered a variant of online stochastic matching, which can be cast as a \emph{strict} special case of \mb in the way that each edge is associated with a Bernoulli random vector-valued cost with $\Del=1$. From Theorems~\ref{thm:main-sa} and~\ref{thm:main-att}, both $\samp(\alp)$ and $\att(\alp)$ achieve a CR of $1-1/\sfe$ with $\alp=1$ when $\Del=1$, which matches that of~\citep{brubach2016new}. (iv)~\citet{kess13} considered online $\Del$-hypergraph matching that can be cast as a special case of our model where each edge takes a deterministic cost vector with sparsity $\Del$. They considered the random arrival order and gave a CR of $1/(\sfe \cdot \Del)$, which is much worse than the best CR as stated in Theorems~\ref{thm:main-sa} and~\ref{thm:main-att} that is equal to $(1-\sfe^{-\Del})/\Del$ when $\alp=1$.

\begin{figure}[th!]
\begin{center}
 \begin{tikzpicture}[ultra thick]
\draw [->] (0,0) --(13.5,0);
 \draw (12,0) node[minimum size=0.2mm,draw,circle,fill=red] {};
  \draw (12,0.2) node[above] {$\frac{1}{\Del-1+1/\Del}$};
   \draw (12,-0.2) node[below] {UB in~\cite{aamas-19}};

    \draw (9,0) node[minimum size=0.2mm,draw,circle,fill=red] {};
  \draw (9,0.2) node[above] {$\frac{1-\sfe^{-(\Del-1+1/\Del)}}{\Del-1+1/\Del}$};
   \draw (9,-0.2) node[below] {UB in Theorem~\ref{thm:hard}};

       \draw (5,0) node[minimum size=0.2mm,draw,circle,fill=blue] {};
    \draw (5,0.2) node[above] {$\frac{1-\sfe^{-\Del}}{\Del}$};
   \draw (5,-0.2) node[below] {LB in Theorems~~\ref{thm:main-sa} and~\ref{thm:main-att} ($\alp=1$)};

    \draw (1.5,0) node[minimum size=0.2mm,draw,circle,fill=blue] {};
    \draw (1.5,0.2) node[above] {$\frac{1}{\Del+1}$};
   \draw (1.5,-0.2) node[below] {LB in~\cite{aamas-19}};
   
\end{tikzpicture}
\end{center}
\caption{Upper bounds (UB) and lower bounds (LB) on competitive ratio of \mb as shown in this paper and in~\cite{aamas-19}: Note that $(1-\sfe^{-\Del})/\Del \ge \big(1-1/(\Del+1)\big)/\Del=1/(\Del+1)$, where $\Del:=\max_{e \in E} |\cS_e|$.}
\label{fig:comp}
\end{figure}
In the last, we consider $\mb$ under the large-budget assumption. Let $B=\min_{k \in [K]} B_k$, which denotes the minimum budget over all resources. When all budgets are large with $B \gg 1$, we show that $\samp(\alp)$ with $\alp=1$ can achieve an asymptotically optimal CR with respect to the benchmark \LP~\eqref{obj-1}.
\begin{theorem}\label{thm:main-3}[Section~\ref{sec:large}]
There exists an LP-based sampling algorithm $\samp(1)$ such that 
(i) it achieves a competitive ratio (CR) of $1-\frac{1}{\sqrt{2\pi B}}\big(1+o(1)\big)$ when $\Del=1$ and $T \gg B \gg 1$; (ii) it achieves  a competitive ratio of $1- \kappa \cdot  \sqrt{\frac{\ln \Del}{B}}(1+o(1))$ with $\sqrt{2}<\kappa \le 2\sqrt{2}$ when $T \gg B \gg \ln \Del \gg 1$. The competitive ratios for cases (i) and (ii) are both asymptotically optimal  with respect to the benchmark \LP~\eqref{obj-1}. 
\end{theorem}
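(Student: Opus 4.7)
The plan is to bound the time-averaged safety probability of $\samp(1)$ via a local CLT applied to the binomial consumption of each resource. Let $\{x_e^*\}$ denote an optimal solution of the benchmark \LP~\eqref{obj-1}. In every round $t$, $\samp(1)$ proposes edge $e$ with probability $x_e^*/T$ and commits the match only if every $k \in \cS_e$ still has positive budget; writing $p_e(t)$ for this safety probability,
\begin{equation*}
\E[\samp(1)] \;=\; \sum_{e \in E} w_e\, x_e^* \cdot \Bigl(\frac{1}{T}\sum_{t=1}^T p_e(t)\Bigr),
\end{equation*}
so the competitive ratio against \LP~\eqref{obj-1} is at least $\min_e \frac{1}{T}\sum_{t=1}^T p_e(t)$. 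Coupling $\samp(1)$ to the unconstrained sampling process gives $p_e(t) \ge \Pr[\mathrm{Bin}(t-1, \mu_k/T) < B_k]$ for every $k \in \cS_e$, where $\mu_k \le B_k$ is the LP-expected consumption of $k$, and the worst case is $\mu_k = B_k$.

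For case (i) with $\Del=1$, each $p_e(t)$ concerns a single resource. Parameterizing $\tau = t/T \in (0,1)$ and applying a local CLT to $\mathrm{Bin}(\tau T, B/T)$ (mean $\tau B$, variance $\approx \tau B$) gives $p_e(t) \ge \Phi\bigl(\sqrt{B}(1-\tau)/\sqrt{\tau}\bigr)(1-o(1))$ when $T \gg B \gg 1$. Averaging over $\tau$, the substitution $\tau = 1 - s/\sqrt{B}$ together with the Gaussian identity $\int_0^\infty (1-\Phi(s))\,ds = 1/\sqrt{2\pi}$ yields
\begin{equation*}
\frac{1}{T}\sum_{t=1}^T p_e(t) \;\ge\; 1 - \frac{1}{\sqrt{2\pi B}}\bigl(1+o(1)\bigr),
\end{equation*}
and taking the worst resource $B = \min_k B_k$ gives the stated CR.

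For case (ii), I would split the time horizon at $\tau^* = 1 - \kappa\sqrt{\ln\Del/B}$. On the early interval $\tau \le \tau^*$, a multiplicative Chernoff bound makes $\Pr[\mathrm{Bin}(\tau T, B/T) \ge B] \le \Del^{-c}$ for some $c>1$ provided $\kappa^2 > 2$; a union bound over the at most $\Del$ resources in $\cS_e$ then keeps the unsafe probability $o(1)$ throughout this interval. On the late interval $\tau > \tau^*$ we give up at most $1-\tau^* = \kappa\sqrt{\ln\Del/B}$ of the utility. This works for any $\kappa > \sqrt{2}$, while the upper bound $\kappa \le 2\sqrt{2}$ follows from the cruder Hoeffding tail that drops the $1/\tau$ factor inside the Chernoff exponent.

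For the matching upper bounds (asymptotic optimality against \LP~\eqref{obj-1}), I would use the standard disjoint-pair hardness instance: $m$ offline vertices each with budget $b$ paired with $m$ online types each of rate $b$, so $\LP = mb$ while $\OPT = \sum_\ell \E[\min(\mathrm{Bin}(mb,1/m), b)]$. A Gaussian-truncation computation using $\E[(b-X)^+] = \sqrt{b/(2\pi)}\,(1+o(1))$ for $X \sim \mathrm{Bin}(mb,1/m)$ as $m,b \to \infty$ shows the ratio is $1 - \frac{1}{\sqrt{2\pi b}}(1+o(1))$, matching case (i); case (ii) follows from an analogous construction with $\Del$ resources per pair, producing loss $\Theta(\sqrt{\ln\Del/B})$. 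The main obstacle is pinning down the sharp constant $1/\sqrt{2\pi}$: both the algorithm-side time integral and the hardness-side truncation expectation require the local CLT (essentially Stirling), and the two calculations must be calibrated to agree in the $(1+o(1))$ term, which cannot be recovered from Chernoff-type bounds alone.
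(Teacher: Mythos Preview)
Your overall architecture matches the paper's --- bound the time-averaged safety probability $\frac{1}{T}\sum_t p_e(t)$ from below for the CR, then exhibit an LP-gap instance for optimality --- but several ingredients differ. For part~(i) you compute the time average directly via a normal approximation and the substitution $\tau=1-s/\sqrt{B}$, landing on $\int_0^\infty(1-\Phi(s))\,ds=1/\sqrt{2\pi}$; the paper instead recognizes the sum as $\E[\widehat T]/T$ for a stopping time $\widehat T$ and applies Wald's identity to reduce to the known correlation-gap constant. Your disjoint-pair hardness instance is likewise different from the paper's single-edge-with-Bernoulli-cost construction, but both give the same $1/\sqrt{2\pi B}$ gap. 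For the part~(ii) CR lower bound, your time-split plus union-bound-plus-Chernoff argument is in fact simpler than the paper's route, which goes through the Mitzenmacher--Upfal Poisson-approximation lemma (incurring a factor of~$2$) followed by a Poisson tail bound; carried out carefully your argument would actually yield $\kappa\le\sqrt{2}+o(1)$, tighter than the paper's $\kappa\le 2\sqrt{2}$.

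Where your proposal breaks down is the interpretation of the range $\sqrt{2}<\kappa\le 2\sqrt{2}$. In the paper, $\kappa$ is not a free parameter of the proof but the \emph{specific} asymptotic constant in the balls-and-bins ratio $\ra=\lim_T\E[T']/T=1-\kappa\sqrt{\ln\Del/B}(1+o(1))$; the two inequalities bound this one number from opposite sides. Your Chernoff argument addresses only the side $\kappa\le\,\cdot\,$. The bound $\kappa>\sqrt{2}$ requires \emph{anti-concentration}: one must show that for $\tau$ within roughly $\sqrt{2\ln\Del/B}$ of~$1$, the probability that some bin has already reached $B$ balls is $1-o(1)$. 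The paper obtains this via Slud's inequality (lower-bounding each binomial tail by the corresponding Gaussian tail) together with negative association of the bin loads, which lets it upper-bound the joint safety probability by the product $\Phi^\Del$. None of your listed tools (Chernoff, Hoeffding, union bound) can supply this direction. Your claim that ``$\kappa\le 2\sqrt{2}$ follows from the cruder Hoeffding tail'' is also wrong: Hoeffding for $\mathrm{Bin}(\tau T,B/T)$ carries $T$ in its denominator and yields nothing useful here, and in any case a cruder concentration bound can only weaken an upper bound on $\kappa$, not produce one. Finally, your case-(ii) hardness sketch (``an analogous construction with $\Del$ resources per pair'') is too vague; the paper's instance is a single edge whose random cost equals $\bo_k$ with probability $B/T$ for each of $\Del$ unit-budget resources, which makes the reduction to the balls-and-bins ratio $\ra$ exact and ties the hardness to the very same $\kappa$.
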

\xhdr{Remarks on results of Theorem~\ref{thm:main-3}}. (1) The term $o(1)$ in part (i) vanishes when $B \rightarrow \infty$, while that in part (ii) vanishes when $B=\omega(\ln \Del)$ and $\Del \rightarrow \infty$.  (2) As mentioned before,~\citet{brubach2016new} considered a strictly special case of \mb with $\Del=1$. When all budgets are large, they gave an algorithm achieving an asymptotical online ratio of $1-B^{-1/2+\ep}(1+o(1))$ for any given $\ep>0$. Our result in part (i) significantly improves that. (3) The constants of the leading terms as stated in parts (i) and (ii) are asymptotically
optimal with respect to the current benchmark \LP~\eqref{obj-1}, which are $\frac{1}{\sqrt{2\pi}}$ and $\kappa$, respectively. That is to say, \eg no policy can achieve a CR of $1-\frac{c}{\sqrt{2\pi B}}(1+o(1))$ with a constant $c<1$ when $B \gg 1$ and $\Del=1$ if compared against the optimal value of \LP~\eqref{obj-1} (which proves a valid upper bound on a clairvoyant optimal). (4) The result in part (i) (\ie $1-\frac{1}{\sqrt{2\pi B}}(1+o(1))$) appears in multiple contexts before, including Adwords and correlation gap~\cite{devanur2012asymptotically,alaei2012online,yan2011mechanism}. However, the analysis here is essentially different from there. Let $\Ber(B/T)$ denote a Bernoulli random variable with mean $B/T$. For Adwords and related applications~\cite{devanur2012asymptotically,alaei2012online,yan2011mechanism}, they all care about $\E[\min(X,B)]/B$ when $T \rightarrow \infty$, where $X$ is the sum of $T$ \iid $\Ber(B/T)$s. In contrast, we need to figure out $\E[\min(T',T)]/T$ when $T \rightarrow \infty$, where $T'$ is the number of copies of  \iid  $\Ber(B/T)$s needed such that the total sum is equal to $B$. Though we can draw a subtle connection between the two (see the proof in Section~\ref{sec:proof-part-a}), it is not straightforward to see the two are the same. 

\subsection{Main Techniques and Other Related Works}\label{sec:mt}
\tbf{Main techniques}. Overall, both the competitive-ratio (CR) and variance analyses in this paper feature a sequential identification process of the worst-scenario (WS) structure on which the exact lower bound (for CR) and the exact upper bound (for variance) are attained. For the algorithm-design part, our second parameterized policy ($\att$) features a time-adaptive attenuation framework, which differs from previous  time-oblivious attenuations widely used before~\citep{ma2014improvements,adamczyk2015improved,brubach2017,feng2019linear,xuAAAI18}. Generally speaking, in a time-oblivious attenuation framework, we set a uniform attenuation target, say a given constant $\gam=1/2$, such that a ``good'' event will happen with probability exactly \emph{equal to} $\gam$ for every online agent regardless of her arriving time. In contrast, we propose a time-adaptive attenuation framework, where we carefully craft an attenuation target function $\gam(t)$,  which is adaptive to the arriving time $t$ of an online agent. This partially leads to an improvement on CR for \mb over the previous work~\cite{aamas-19} that adopts a time-oblivious attenuation. For the variance-analysis part, we propose several balls-and-bins models to facilitate the analysis and exploit related \emph{negative-association} properties exclusively applied to balls-and-bins models; see~\cite{dubhashi1996balls,joag1983negative,shao2000comparison}.

\tbf{Other related works}. \mb falls under the family of \emph{online packing} problems. There have been a few studies investigating CR of \mb under the large-budget assumption but under the arrival setting of random arrival order (RAO), which is less restrictive than KIID as considered here; see~\citep{agrawal2014dynamic,agrawal2014fast,buchbinder2009online}. For RAO, a powerful algorithm-design paradigm is called the primal-and-dual approach; see the survey book~\citep{buchbinder2009design}. It is interesting to compare the setting studied here with those of~\cite{kess-stoc} and~\cite{jacm19}, both of which considered online resource allocation with multiple budget constraints under RAO. However, they assumed that both the cost vector and utility are deterministic for each edge. In contrast, we assume each edge can have a random vector-valued cost and a random utility, and the two can be arbitrarily correlated with each other. Another difference is that they considered fractional cost in the way that each $\cA_e \in [0,1]^K$ (deterministic), while we assume that each $\cA_e \in \{0,1\}^K$ (random) here.~\citet{jacm19} gave a CR of $1-\omega(\sqrt{\ln \Del/B})$, which was improved to $1-O(\sqrt{\ln \Del/B})$ by \citet{kess-stoc}. Note that the work of \cite{kess-stoc} has not identified any proper constant included inside the term $O(\sqrt{\ln \Del/B})$, whereas it is one of the main focuses in this paper. The offline version of \mb captures the stochastic $\Del$-set packing problem as a special case, which was introduced by~\citet{bansal2012lp}. They gave a $2\Del$-approximation algorithm for the stochastic $\Del$-set packing problem, which was improved to $\Del+o(\Del)$ by~\citet{TALG-20} later.~\citet{baveja2018improved} considered $\Del$-uniform stochastic hypergraph matching, which can be viewed as a special case of the stochastic $\Del$-set packing problem. They gave two approximation algorithms that achieve a ratio of $\Del+1/2$ and a ratio of $\Del+\ep$ for any given $\ep>0$, respectively. 

There is a large body of research works that have studied budgeted online resource allocation in an \emph{online learning} setting, where distributions of utility and/or cost associated with assignments are unknown. In that context, a common practice is to formulate the problem as one of the renowned Multi-armed bandit variants~\cite{slivkins2019introduction} and then conduct regret analysis, showing the expected total regret (defined as the gap in the total utility achieved by a given policy and a prophet optimal) is upper bounded by a certain function of the total time horizon~\cite{wu2015algorithms,balseiro2019learning,balseiro2022best,golrezaei2021bidding}. A few recent works investigate the potential tradeoff between variance and regret in online learning; see, \eg ~\cite{van2022regret,vakili2019decision}. In particular,~\citet{vakili2019decision} introduced and analyzed the performance of several risk-averse policies in both bandit and full information settings under the metric of mean-variance~\cite{steinbach2001markowitz}.  


\xhdr{Glossary of notations}. We offer a glossary of notations used throughout this paper; see Table~\ref{table:notation}.

\begin{table}[ht!]
\caption{A glossary of notations used throughout this paper.}
\label{table:notation}
\begin{tabular}{ll } 
 \hline
  $[n]$ & Set of integers $\{1,2,\ldots,n\}$ for any generic integer $n$.  \\
$G=(I,J,E)$ & Input compatibility graph where $I$ and $J$ are sets of offline and online vertices.  \\
 $K$ & Toal number of resources. \\
  $B_k$ & Budget on the resource $k \in [K]$.\\ 
 $\cA_e=(A_{e,k})$ & Random vector-valued cost on edge $e$ with $\E[A_{e,k}]=a_{e,k}$.\\
 $\cS_e$ & Support of the cost of edge $e$, \ie $\cS_e=\{k: \E[A_{e,k}]>0\}$. \\ 
$\Del$ & Sparsity defined as the largest size of edge cost support, \ie $\Del=\max_{e\in E} |\cS_e|$. \\ 
$T$ & Total number of online rounds (times). \\
$p_j$ & Probability that online vertex $j$ arrives during each round.\\
$r_j$ & Expected arrival rate of online vertex $j$ with $r_j=T \cdot p_j$.\\
$W_e$ & Random (non-negative) utility on edge $e$ with $\E[W_e]=w_e$.\\
$e$ (Italic) & Edge or assignment $e \in E$.\\
$\sfe$ (Non-italic) & Natural base taking the value around $2.718$. \\
 \hline
\end{tabular}
\end{table}

\section{Benchmark LP}\label{sec:lp}
For an edge $e \in E$, let $x_e$ be the expected number of times that edge $e$ is added in a clairvoyant optimal. For each vertex $j$ ($i$), let $E_j$ ($E_i$) be the set of relevant edges incident to $j$ ($i$). Let $\E[A_{e,k}]=a_{e,k}$ and $\E[W_e]=w_e$ for each $e \in E$ and $k \in [K]$. Our benchmark LP is formally stated as follows.
\begin{alignat}{2}
\max & ~~\sum_{e \in E} w_{e} x_{e} &&  \label{obj-1} \\
 & \sum_{e \in E_j} x_{e} \le r_j  &&~~ \forall j \in J \label{cons:j} \\ 
 &  \sum_{e \in E}  a_{e,k} \cdot x_e\le B_k  && ~~ \forall k \in [K] \label{cons:i} \\ 
  & 0 \le x_{e}   && ~~ \forall e\in E.
  \label{cons:e}
\end{alignat}

Throughout this paper, we refer to the LP above simply as \LP~\eqref{obj-1}. 
\begin{lemma}\label{lem:lp}
The optimal value of \LP~\eqref{obj-1} is a valid upper bound on the total expected utilities achieved by a clairvoyant optimal.
\end{lemma}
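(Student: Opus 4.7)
The standard route is to show that for an arbitrary clairvoyant optimal policy \OPT, there is a feasible solution to \LP~\eqref{obj-1} whose objective value matches $\E[\OPT]$. First I would define, for each $e \in E$, the random variable $X_e$ as the total number of times \OPT adds edge $e$ across the $T$ rounds, and set $x_e = \E[X_e]$, where the expectation is over all four sources of randomness \textbf{R1}--\textbf{R4}. Non-negativity $x_e\ge 0$ is immediate, so the bulk of the work is verifying the two substantive constraints and the objective identity.

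Next I would handle the objective. Write the total utility gathered by \OPT as $\sum_{e\in E}\sum_{s=1}^{X_e} W_e^{(s)}$, where $W_e^{(s)}$ denotes the i.i.d.\ utility realization on the $s$-th addition of $e$. The key observation, made explicit in the model, is that \OPT chooses whether to add $e$ for the $s$-th time based only on information available \emph{before} seeing $W_e^{(s)}$ (arrivals, its own randomness, and previously revealed cost/utility realizations). Thus $\{X_e\ge s\}$ is independent of $W_e^{(s)}$ in the appropriate filtration, and Wald's identity yields $\E\!\left[\sum_{s=1}^{X_e} W_e^{(s)}\right]=x_e\,w_e$. Summing over $e$ gives $\E[\OPT]=\sum_{e\in E} w_e x_e$, which is exactly the objective of \LP~\eqref{obj-1} evaluated at $\{x_e\}$.

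For constraint \eqref{cons:j} I would use the fact that in each round at most one edge is added, hence $\sum_{e\in E_j} X_e \le N_j$ where $N_j$ is the (random) number of rounds in which $j$ arrives. Under KIID, $N_j\sim\mathrm{Bin}(T,p_j)$ and $\E[N_j]=T p_j=r_j$, so taking expectations yields $\sum_{e\in E_j} x_e\le r_j$. For constraint \eqref{cons:i} I would invoke the safety requirement: since \OPT is safe, the total consumption $\sum_{e\in E}\sum_{s=1}^{X_e} A_{e,k}^{(s)}$ of each resource $k$ never exceeds $B_k$ on any sample path, and in particular not in expectation. Applying Wald's identity once more (the event $\{X_e\ge s\}$ is measurable with respect to information available before $A_{e,k}^{(s)}$ is revealed) gives $\sum_{e\in E} a_{e,k} x_e \le B_k$.

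The only delicate point is ensuring that Wald's identity is applicable for both the utility and the cost expansions, i.e., that $X_e$ is a genuine stopping time with respect to the sequences $\{W_e^{(s)}\}$ and $\{A_{e,k}^{(s)}\}$; this follows cleanly from the informational convention that realizations of an edge's cost and utility are observed \emph{after} the edge is added. Once that measurability/independence is in place, the three verifications above assemble into the claim: $\{x_e\}$ is feasible for \LP~\eqref{obj-1} with objective value $\E[\OPT]$, so the LP optimum upper-bounds $\E[\OPT]$.
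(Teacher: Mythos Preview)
Your proposal is correct and follows essentially the same route as the paper: define $x_e=\E[X_e]$ with $X_e$ the number of times \OPT adds $e$, verify \eqref{cons:j} by bounding $\sum_{e\in E_j}X_e$ by the arrival count of $j$, verify \eqref{cons:i} by the pathwise budget inequality plus the independence between the decision to add $e$ and the fresh cost draw, and match the objective via the analogous independence for utilities. The only stylistic difference is that you spell out the i.i.d.\ copies $W_e^{(s)},A_{e,k}^{(s)}$ and invoke Wald's identity explicitly, whereas the paper compresses this into the single factorization $\E[A_{e,k}\cdot X_e]=\E[X_e]\,\E[A_{e,k}]$ (and similarly for $W_e$), justified by the remark that safe policies cannot peek at an edge's cost or utility before adding it; the underlying idea is identical.
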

\begin{proof}
For each given edge $e=(i,j)$, let $X_e$ be the random number of times that $e$ is added in a clairvoyant optimal (denoted by \OPT) with $x_e=\E[X_e]$. We try to justfy that $\{x_e | e\in E\}$ satisfy all constraints in the \LP above. 

Note that $X_e$ can take values larger than $1$ due to the potential multiple arrivals of the online vertex $j$. However, the total number of edges added with respect to $j$ should be no larger than that of arrivals of $j$ during the online phase, say $R_j$. Thus, $\sum_{e \in E_j} X_j \le R_j$ holds with probability one. Taking expectation on both sides, we get $\sum_{e \in E_j} x_j \le \E[R_j] =r_j$, which leads to Constaint~\eqref{cons:j}. Since we consider safe policies, we are sure that no budget could get vioalted throughout the online process. Thus, for each given resource $k \in [K]$, the event $\sum_{e \in E} A_{e,k}\cdot X_e \le B_k$ occurs with probability $1$. Taking expectation on both sides, we have
\begin{equation}
\E\bb{\sum_{e \in E} A_{e,k}\cdot X_e } =\sum_{e \in E} \E[ A_{e,k}\cdot X_e]
=\sum_{e\in E} \E[X_e]\cdot \E[A_{e,k}]=\sum_{e\in E} x_e\cdot a_{e,k} \le B_k, \label{eqn:lp}
\end{equation}
which yields Constaint~\eqref{cons:i}. Note that since we focus on safe policies, the random realization of $A_{e,k}$ and that of $W_e$ both should be independent of when \OPT adds the edge $e$. This is why equalities in~\eqref{eqn:lp} hold. Moreover, the total expected utilities gained by $\OPT$ is
\[\E\bb{\sum_{e \in E} X_e\cdot W_e}=\sum_{e \in E} \E[X_e] \cdot \E[W_e]=\sum_{e \in E} x_e \cdot w_e, \]
which is consistent with the objective function. Therefore, we conclude that the optimal value to \LP~\eqref{obj-1} should be a valid upper bound for the expected performance of \OPT.
\end{proof}


\section{An LP-based Sampling Algorithm}\label{sec:samp}
Recall that $E_j$ and $E_j$ denote the set of edges incident to $j$ and $i$, respectively. Our first parameterized algorithm \samp is formally stated as follows.
 \begin{algorithm}[ht!]
\DontPrintSemicolon
\textbf{Offline Phase}: \;
Solve  \LP~\eqref{obj-1} and let $\{x^*_{e}|e\in E\}$ be an optimal solution.\;
\textbf{Online Phase}:\;
 \For{$t=1,\ldots,T$}{
Let an online vertex $j$ arrive at time $t$. \label{alg:off-1}  \;
Sample an edge $e \in E_j$ with probability $\alp \cdot x_{e}^*/r_j$. \label{alg:off-3}\;
\eIf{$e$ is safe (\ie there is at least one unit budget of each resource in $\cS_e$); \label{alg:off-5}}{Make the edge $e$.}{Reject $j$.}}
\caption{An LP-based sampling algorithm $\nadap(\alp)$ with $\alp \in [0,1]$.}
\label{alg:nadap}
\end{algorithm}
Note that Step~\eqref{alg:off-3} in $\samp(\alp)$ is valid since $\sum_{e \in E_j} \alp x_e^*/r_j \le \sum_{e \in E_j}  x_e^*/r_j \le 1$ due to Constraint~\eqref{cons:j} of \LP~\eqref{obj-1}.


\subsection{Competitive-ratio  (CR) analysis for $\samp(\alp)$}\label{sec:samp-cr}
In this section, we prove the first part of Theorem~\ref{thm:main-sa}, which states as follows.

\begin{theorem}\label{thm:sa-1}
$\samp(\alp)$ achieves a competitive ratio \emph{equal to} $(1-\sfe^{-\alp \Del})/\Del$ with respect to \LP~\eqref{obj-1}.\end{theorem}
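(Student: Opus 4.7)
I would first write the expected utility as $\E[\samp(\alp)] = \sum_{e \in E} w_e \sum_{t=1}^T q_{e,t}$, where $q_{e,t}$ is the probability that edge $e$ is successfully added at round $t$. Because the sampling in Step~\ref{alg:off-3} is oblivious to the current budget state, the probability of attempting edge $e=(i,j)$ at round $t$ is exactly $p_j \cdot \alp x_e^*/r_j = \alp x_e^*/T$. Letting $\mathcal{F}_e(t)$ denote the event that every resource in $\cS_e$ still has positive budget at the start of round $t$, we have $q_{e,t} = (\alp x_e^*/T)\cdot \Pr[\mathcal{F}_e(t)]$, so
\[
\E[\samp(\alp)] \;=\; \alp \sum_{e \in E} w_e x_e^* \cdot \frac{1}{T}\sum_{t=1}^T \Pr[\mathcal{F}_e(t)].
\]
Since the optimum of LP~\eqref{obj-1} equals $\sum_e w_e x_e^*$, it suffices to prove the per-edge bound $\frac{\alp}{T}\sum_{t=1}^T \Pr[\mathcal{F}_e(t)] \ge (1-\sfe^{-\alp\Del})/\Del$ for every $e \in E$.

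\textbf{Step 2: Safety lower bound.} I would establish by induction on $t$ that $\Pr[\mathcal{F}_e(t)] \ge (1-\alp\Del/T)^{t-1}$, a bound that tends to $\sfe^{-\alp\Del(t-1)/T}$ as $T\to\infty$. For the inductive step, given $\mathcal{F}_e(t)$, the event $\mathcal{F}_e(t+1)^c$ requires some $k\in\cS_e$ to drop from positive budget to zero in round $t$, which in turn requires an attempted edge $e'\ni k$ whose cost realization actually consumes $k$. A union bound over $k\in\cS_e$ together with the LP constraint $\sum_{e'} a_{e',k}x_{e'}^* \le B_k$ caps the one-step failure probability by $\sum_{k\in\cS_e}(\alp B_k/T)$; in the worst-case regime $B_k=1$ (which is where the tightness instance lives), this is at most $\alp\Del/T$, closing the induction. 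Averaging the bound over $t$ and sending $T\to\infty$ yields $\frac{1}{T}\sum_t \Pr[\mathcal{F}_e(t)] \to \int_0^1 \sfe^{-\alp\Del s}\,ds = (1-\sfe^{-\alp\Del})/(\alp\Del)$, which multiplied by $\alp$ gives the claimed CR lower bound.

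\textbf{Step 3: Tightness.} To upgrade ``$\ge$'' to ``equal to'', I would exhibit a concrete instance on which $\samp(\alp)$ achieves exactly $(1-\sfe^{-\alp\Del})/\Del$ times the LP value. A natural candidate is a star-type configuration with a single online type $j$ of arrival rate $r_j=T$, $\Del$ resources each of unit budget, and edges chosen so that one ``target'' edge consumes all $\Del$ resources simultaneously while $\Del$ heavy-weight ``decoy'' edges individually exhaust each resource. On such an instance every step of the safety analysis is tight, allowing $\E[\samp(\alp)]$ and the LP optimum to be computed in closed form and matched to $(1-\sfe^{-\alp\Del})/\Del$.

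\textbf{Main obstacle.} The principal difficulty is extending the one-step safety bound to arbitrary budgets $B_k\ge 1$. When $B_k=1$, a single consumption of $k$ is precisely a depletion, so the inductive inequality is sharp; but when $B_k>1$, the naive bound $\alp B_k/T$ vastly overestimates the one-step depletion rate of $k$, because $B_k$ cumulative consumptions are needed before $k$ actually empties. The cleanest resolutions I would try are (a) a coupling/monotonicity argument showing that $\Pr[\mathcal{F}_e(t)]$ is non-decreasing in each $B_k$, so that it suffices to analyze the $B_k=1$ case, or (b) a Chernoff-type concentration bound on the total consumption of $k$ that beats the exponential uniformly once $B_k\ge 2$. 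Either route needs care to preserve the exact constant $\alp\Del$ in the final ratio and to dovetail with the worst-case instance in Step 3.
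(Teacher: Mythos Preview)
Your Step~1 decomposition and the target per-edge inequality match the paper exactly. The gap is precisely where you place it, in Step~2: your one-step induction yields only $\Pr[\mathcal{F}_e(t+1)\mid\mathcal{F}_e(t)]\ge 1-\sum_{k\in\cS_e}\alp B_k/T$, which unrolls to $\big(1-\alp\sum_{k\in\cS_e} B_k/T\big)^{t-1}$ and hence a per-edge ratio of $\big(1-\sfe^{-\alp\sum_k B_k}\big)\big/\sum_k B_k$. Since $(1-\sfe^{-x})/x$ is decreasing, this equals the claimed $(1-\sfe^{-\alp\Del})/\Del$ \emph{only} when every $B_k=1$; for larger budgets it is strictly weaker. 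Saying ``the worst-case regime is $B_k=1$'' does not rescue the argument, because the CR must be established for \emph{every} instance---what you actually need is that $B_k=1$ minimizes the true safety probability (not your upper bound on the one-step failure), and that is exactly the missing lemma.

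The paper does not use induction on $t$ for $\samp$. Instead it (i)~dominates the actual consumption $U_{k,t}$ by the unconstrained i.i.d.\ sum $\sum_{t'<t}Z_{k,t'}$, giving $\Pr[\mathcal{F}_e(t)]\ge\Pr\big[\bigwedge_{k\in\cS_e}\sum_{t'<t}Z_{k,t'}\le B_k-1\big]$; (ii)~shows via Lemma~\ref{lem:supp-1} that the right-hand side is minimized when the joint law of $(Z_k)_{k\in\cS_e}$ is one-hot (a balls-and-bins model, bin~$k$ hit with probability $\alp B_k/T$); and (iii)~shows via Lemma~\ref{lem:s-a1} that within this model the tail $\Pr\big[\sum_{t'<t}Z_{k,t'}\ge B_k\big]$ with $Z_{k,t'}$ i.i.d.\ $\Ber(\alp B_k/T)$ is maximized at $B_k=1$ for every $t$, via a Poisson approximation and a median-of-Poisson monotonicity result. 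Your proposed resolution~(a) is aiming at~(iii), but the correct statement is \emph{not} monotonicity of $\Pr[\mathcal{F}_e(t)]$ in $B_k$ with $x^*$ held fixed (that direction is trivial by coupling but useless, since lowering $B_k$ would violate the LP constraint); rather one must compare the pair $(B_k,\text{rate }\alp B_k/T)$ against $(1,\text{rate }\alp/T)$, which is precisely Lemma~\ref{lem:s-a1} and is not obvious. Resolution~(b) via Chernoff would lose the exact constant. Finally, the paper's tight instance (Example~\ref{exam:cr-ws}) is simpler than your star: a single edge whose random cost is a standard basis vector over $\Del$ unit-budget resources already makes every inequality sharp.
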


For $\samp(\alp)$, we can re-interpret the online phase as an edge-arriving process such that during each round $t\in [T]$, one edge $e=(i,j)\in E$ will arrive with replacement (\ie $j$ arrives and $e$ gets sampled) with a probability $(r_j/T)\cdot (\alp x_e^*/r_j)=\alp x_e^*/T$. Note that $\sum_{e \in E} \alp x^*_e/T =\sum_{j \in J} \sum_{e\in E_j}  \alp x^*_e/T \le \sum_j r_j/T=1$, where the last inequality is due to Constraint~\eqref{cons:j} in LP~\eqref{obj-1} and $\alp \in [0,1]$. Let $\Z=(Z_k)_{k \in [K]} \in \{0,1\}^K$ be the random consumption of  resources involved in each round of $\samp(\alp)$ when all resources are abundant. By definition, we have $\Z=\cA_e$ with probability $\alp x_e^*/T$ for each $e\in E$. Note that $\E[Z_k]=\sum_{e \in E} \E[A_{e,k}] \cdot (\alp x^*_e/T)=\sum_{e \in E} a_{e,k} \cdot (\alp x^*_e/T)  \le \alp B_k/T$, where the last inequality follows from Constraint~\eqref{cons:i} in LP~\eqref{obj-1}. The proof below for Theorem~\ref{thm:sa-1} will need two lemmas, namely, Lemmas~\ref{lem:supp-1} and~\ref{lem:s-a1}, which proofs are deferred to Appendix. 



\begin{proof}
Consider a given edge $e$ and a given time $t \in [T]$. For each $k \in \cS_{e}$, let $U_{k,t}$ denote the total random cost  of resource $k$ at (the beginning of) $t$ in $\samp(\alp)$. Let $\SF_{e,t}=\bigwedge_{k \in \cS_e} \big(U_{k,t} \le B_k-1\big)$ denote the event that $e$ is safe at $t$. WLOG assume $\Del=|\cS_e|$. Recall that $\Z=(Z_k)_k$ denotes the random consumption of resources involved in each single round of $\samp(\alp)$ when all resources have remaining budgets. For each $k \in [K]$, let $\{Z_{k,t'}| 1\le t' <t\}$ be $t-1$ \iid copies of $Z_k$. Therefore,
\begin{align}\label{ineq:s-a1}
\Pr[\SF_{e,t}]=\Pr\bb{\bigwedge_{k \in \cS_e} \big(U_{k,t} \le B_k-1\big)} \ge \Pr\bb{\bigwedge_{k \in \cS_e} \big(\sum_{1 \le t'<t} Z_{k,t'} \le B_k-1\big)}. 
\end{align}
Observe that for each $k \in \cS_e$, $\E[Z_k]\le \alp B_k/T$. By Lemma~\ref{lem:supp-1}, we see that the right-hand-side value in Inequality~\eqref{ineq:s-a1} gets minimized when $\Z(e):=(Z_k)_{k\in \cS_e}$ are negatively correlated according to the following distribution (denoted by $\cD^*$): with probality $\alp B_k/T$, $\Z(e)=\bo(Z_k)$ for each $k\in \cS_e$,  where $\bo(Z_e)$ refers to the standard basis vector with the only entry being $1$ at the position of $Z_k$, and with probability $1-\sum_{k \in \cS_e} \alp B_k /T$, $\Z(e)=\mathbf{0}$ (a zero vector of length $\Del$). Thus,
\begin{align}\label{ineq:s-a2}
\Pr[\SF_{e,t}] \ge 1-\Pr_{\Z(e) \sim \cD^*}\bb{\exists k \in{\cS_e}: \sum_{1\le t'<t} Z_{k,t'} \ge B_k}.
\end{align}
The value of $\Pr_{\Z(e) \sim \cD^*}\bb{\exists k \in{\cS_e}: \sum_{1\le t'<t} Z_{k,t'} \ge B_k}$ can be interpreted via the following Balls-and-Bins model: There are $t-1$ balls and $\Del=|\cS_e|$ bins; each ball will be thrown independently and it will land in bin $k \in \cS_e$ with a probability $\alp B_k/T$ and land in none of them with probability $1-\sum_{k \in \cS_e} \alp B_k/T$. The value $\Pr_{\Z(e) \sim \cD^*}\bb{\exists k \in{\cS_e}: \sum_{1\le t'<t} Z_{k,t'} \ge B_k}$ then represents the probability that there exists at least one bin $k \in \cS_e$ with at least $B_k$ balls in the end. We claim that $\Pr_{\Z(e) \sim \cD^*}\bb{\exists k \in{\cS_e}: \sum_{1\le t'<t} Z_{k,t'} \ge B_k}$ gets maximized when each $B_k$ takes a value to maximize $\Pr[ \sum_{1\le t'<t} Z_{k,t'} \ge B_k]$: note that the change of $B_k$ only affects the probability of each ball falling into the bin $k$ and  the threshold $B_k$ associated with bin $k$ (and this has nothing to do with the rest of the bins). By Lemma~\ref{lem:s-a1}, we see that $B_k$ should take a value of $1$ for all $k \in \cS_e$. Therefore, in the worst case (when $\Pr[\SF_{e,t}]$ gets minimized), we have (i) $B_k=1$ for $k \in \cS_e$ and (ii) with probability $\alp/T$, $\Z(e)=\bo(Z_k)$ for $k \in \cS_e$, and with probability $1-\Del \alp/T$, all $\Z(e)=\mathbf{0}$. Thus,
 
\begin{align*}\label{ineq:s-a3}
\Pr[\SF_{e,t}] &=\Pr\bb{\bigwedge_{k \in \cS_e} \big(U_{k,t} \le B_k-1\big)} \ge \Pr\bb{\bigwedge_{k \in \cS_e} \big(\sum_{1 \le t'<t} Z_{k,t'} \le B_k-1\big)} \\
&=\Pr\bb{\bigwedge_{k \in \cS_e} \big(\sum_{1 \le t'<t} Z_{k,t'} \le 0\big)}=\Pr\bb{\bigwedge_{k \in \cS_e,t'<t} Z_{k,t'}=0}= \bp{1-\frac{\Del \alp }{T}}^{t-1}.
\end{align*}
Let $M_{e}$ be the total (random) utilities gained on edge $e=(i,j)$ in $\samp(\alp)$. For each $t \in [T]$, let $X_{t}$ indicate if $j$ arrives at $t$ and $Y_{e,t}$ indicate if $e$ is sampled at $t$. Thus, we have
\begin{align*}
\E[M_{e}] &=\sum_{t=1}^T \E[X_t \cdot Y_{e,t} \cdot \SF_{e,t} \cdot W_e]=\sum_{t=1}^T \frac{r_{j}}{T} \cdot \frac{\alp x^*_{e}}{r_{j}}\cdot \Big(1- \frac{\Del \alp}{T}  \Big)^{t-1}\cdot w_e  \\
&\ge x^*_{e} \cdot w_e \cdot \sum_{t=1}^T \frac{\alp}{T} \Big(1- \frac{\Del \alp}{T}  \Big)^{t-1} \ge  x^*_{e}w_{e}\frac{1-\sfe^{-\Del \alp}}{\Del}.
\end{align*}

Thus, we have $\E[M_{e}] \ge x^*_{e} \cdot w_{e} \cdot (1-\sfe^{-\Del \alp})/\Del$. By the linearity of expectation, we claim that the total expected utilities of $\samp(\alp)$ should satisfy
\[
\E[\samp(\alp)] \ge \sum_{e\in E} \E[M_e] \ge \sum_{e\in E} x^*_{e} \cdot w_{e} \cdot (1-\sfe^{-\Del \alp})/\Del=\LP\eqref{obj-1}\cdot (1-\sfe^{-\Del \alp})/\Del \ge \OPT\cdot (1-\sfe^{-\Del \alp})/\Del,
\]
where $\LP\eqref{obj-1}$ denotes the optimal value of the benchmark LP~\eqref{obj-1} and $\OPT$ the total expected utilities achieved by a clairvoyant optimal, and the last inequality above follows from Lemma~\ref{lem:lp}. Thus, we claim that $\samp(\alp)$ achieves a competitive ratio of at least $(1-\sfe^{-\Del \alp})/\Del$. The analysis above actually suggests the tightness of the CR-analysis of $\samp(\alp)$ with respect to LP~\eqref{obj-1}. For completeness, we present an explicit CR worst-case structure on Example~\ref{exam:cr-ws}.  
\end{proof}
\begin{example}[The CR worst-case structure of $\samp(\alp)$]\label{exam:cr-ws}
Consider such an instance of $\mb$ as follows: $|I|=|J|=|E|=1$, $K=\Del$, $B_k=1$ for all $k \in [K]$. The single edge $e$ has such a cost distribution: $\cA_e=\bo_k$ with probability $1/T$ for each $k=1,2,\ldots,K$ and $\cA_e=\mathbf{0}$ with probability $1-K/T$, where $\bo_k$ denotes the $k$th standard basis vector. Also, the edge $e$ has a deterministic unit utility $W_e=1$, and $p_j=1, r_j=T$. We can verify that (i) $x_e^*=T$ in the benchmark LP~\eqref{obj-1} with an optimal value of $T$; (ii) $\samp(\alp)$ gets an expected total utilities equal to $T \cdot (1-\sfe^{-\alp \Del})/\Del$. Thus, we conclude that $\samp(\alp)$ achieves a CR of no more than  $(1-\sfe^{-\alp \Del})/\Del$ with respect to \LP~\eqref{obj-1}. \hfill$\blacksquare$
\end{example}

\subsection{Variance analysis for $\samp(\alp)$ with $\Del=1$}\label{sec:samp-va} 
To better expose our techniques, we start with a simple case of $\Del=1$ here and then go to the general case of $\Del$ in Section~\ref{sec:samp-vb}. 

Note that when $\Del=1$, each edge consumes one single resource only.\footnote{Observe that \mb with $\Del=1$ captures the classical online stochastic matching under KIID as a strictly special case~\cite{feldman2009online,huang2021online,haeupler2011online,manshadi2012online,jaillet2013online,brubach2016new}, when each edge $e=(i,j)$ consumes one single resource of the offline vertex $i$.} For each resource $k \in [K]$, let $E_k=\{e \in E: \cS_e=\{k\}\}$, which denotes the subset of edges whose cost involves the single resource $k$. For notation convenience, we use $a_e$ to denote $a_{e,k}=\E[A_{e,k}]$ for any $e \in E_k$. The  online process of $\samp(\alp)$ can be re-interpreted via the following balls-and-bins model.

\xhdr{An auxiliary balls-and-bins model for variance analysis of $\samp(\alp)$}. We treat each edge as a ball and each edge in $E_k$ is labeled with type $k \in [K]$, and there are $K$ bins and each bin $k$ corresponds to resource $k$. There are $T$ rounds and during each round $t \in [T]$, we sample a ball $e=(i,j) \in E$ with probability $ (r_j/T) \cdot (\alp x_e^*/r_j)= \alp x_e^*/T$ and put it into bin $k$ if $e \in E_k$. Thus, during each round, a ball will be added into bin $k$ with probability $\sum_{e \in E_k} \alp x_e^*/T:=q_k$. Note that $\sum_{k \in [K]} q_k =\sum_{e\in E}\alp x_e^*/T=\sum_{j\in J} \sum_{e\in E_j} \alp x_e^*/T \le \sum_{j\in J} \alp r_j/T= \alp$. Each bin has a capacity $B_k$, and each ball $e \in E_k$ is associated with a Bernoulli random variable of mean $a_{e}$, denoted by $\Ber(a_{e})$. Each time after a ball $e$ is added into bin $k$, the capacity of bin $k$ gets reduced by one with probability $a_e$ and remains unchanged otherwise. \hfill $\blacksquare$

For each $k \in [K]$, let $Y_k$ be the total  (random) number of balls added into bin $k$ by the time when either the capacity $B_k$ is reached or at the end of $t=T$ (whichever comes first). In our context, $\sum_{k \in [K]}Y_k:=Y$ captures the exact total number of edges made in $\samp(\alp)$,  and we aim to upper bound $\Var[Y]$. Note that there are three sources of randomness in $Y$: (\tbf{R1}) the dynamic arrivals of online vetices in $J$;  (\tbf{R2}) the random sampling choices of edges made by $\samp(\alp)$, and  (\tbf{R3}) the random cost realization of $\cA_e$. 
 
\begin{theorem}\label{thm:var-sa}
$\Var[Y] \le T^2 \bp{1 - \sfe^{-2 \alp} - 2 \alp \sfe^{-\alp} +O(1/T)}$. 
\end{theorem}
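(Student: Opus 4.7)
The plan is to decompose $\Var[Y]$ across bins using negative association (NA), reduce each per-bin analysis to an extremal worst-case configuration whose variance has a closed-form limit, and then sum using the LP constraint $\sum_k q_k \le \alp$.

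First, I would establish $\Var[Y] \le \sum_k \Var[Y_k]$. Let $N_k$ denote the number of rounds at which a ball of some type in $E_k$ is sampled. Because each round independently selects at most one bin via a categorical trial with cell probabilities $\{q_k\}$, the vector $(N_1, \ldots, N_K)$ follows a multinomial law and is therefore NA. If the cost Bernoullis $\beta_{e,t} \sim \Ber(a_e)$ are pre-drawn for every $(e,t)$, the family $\{\beta_{e,t}\}_{e \in E_k, t \in [T]}$ is independent across $k$ and independent of $(N_1,\ldots,N_K)$. Conditioning on $N_k, N_{k'}$ and using this independence gives
\[
\mathrm{Cov}[Y_k,Y_{k'}] \;=\; \mathrm{Cov}\!\bigl[\E[Y_k \mid N_k],\, \E[Y_{k'} \mid N_{k'}]\bigr];
\]
combined with the fact that $n \mapsto \E[Y_k \mid N_k = n]$ is non-decreasing (every extra sample can only increase expected placements) and the NA of $(N_k, N_{k'})$, this forces the covariance to be non-positive.

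Second, I would prove the per-bin estimate $\Var[Y_k] \le T^2 q_k^2 \sfg(\alp) + O(T)$. The key claim is that, subject to the LP-feasibility $\sum_{e \in E_k} a_e x_e^* \le B_k$, the worst case is attained at $B_k = 1$ with a single edge carrying the entire sampling mass and $a_e$ pushed to the boundary $\alp/(q_k T)$: if $a_e$ is smaller, capacity is essentially non-binding so $Y_k \approx \mathrm{Bin}(T, q_k)$ contributes only $O(T)$ variance, while if $B_k \ge 2$ the first-passage time to full capacity concentrates by the law of large numbers. Under the extremal configuration, the first depletion round $\tau_k$ is $\mathrm{Geo}(\alp/T)$, so $\tau_k/T$ converges in distribution to $\mathrm{Exp}(\alp)$, and a Wald-type decomposition $Y_k = q_k \min(\tau_k, T) + R_k$ with $\E[R_k^2] = O(T)$ isolates the dominant contribution $q_k^2 \Var[\min(\tau_k, T)]$. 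A direct integration then gives
\[
\Var[\min(\mathrm{Exp}(\alp), 1)] \;=\; \frac{1 - \sfe^{-2\alp} - 2\alp\, \sfe^{-\alp}}{\alp^2} \;=\; \sfg(\alp),
\]
which yields the per-bin bound. Summing and using $\sum_k q_k^2 \le (\max_k q_k)(\sum_k q_k) \le \alp^2$ (since $q_k \le \alp$ for each $k$ and $\sum_k q_k \le \alp$ by aggregating Constraint~\eqref{cons:j} together with $\sum_j r_j = T$), I conclude $\Var[Y] \le T^2 \alp^2 \sfg(\alp) + O(T) = T^2(1 - \sfe^{-2\alp} - 2\alp\, \sfe^{-\alp}) + O(T)$.

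The principal obstacle is the extremal reduction in Step 2: rigorously showing that among all LP-feasible per-bin configurations, the single-edge, unit-capacity case maximizes $\Var[Y_k]$. I expect to handle this via a stochastic-ordering argument on the first-passage time $\tau_k$ together with a coupling that merges multiple edge types in $E_k$ into one ``effective'' edge without decreasing variance. Controlling the $O(T)$ remainder uniformly across all configurations is a secondary issue that I would address via standard moment estimates for binomial-truncated geometric sums.
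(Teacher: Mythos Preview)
Your overall architecture matches the paper's: decouple bins via negative association, bound each $\Var[Y_k]$ by reducing to the worst configuration ($B_k=1$, constraint tight), and sum using $\sum_k q_k^2\le\alp^2$. The NA step is fine; your conditional-covariance version is in fact cleaner than the paper's somewhat loose ``$Y_k$ is a non-decreasing function of $Y'_k$'' statement.

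The genuine gap is in the extremal reduction. Your justification checks only the two endpoints (constraint tight versus $a_e\to 0$) and then appeals to ``a stochastic-ordering argument on $\tau_k$.'' That is not enough: for every intermediate value of the fill parameter $\eta\in(0,1]$ (where $\sum_{e\in E_k}a_e x_e^*=B_k\eta$) the variance is still $\Theta(T^2)$, so you must show it is \emph{monotone} in $\eta$ and in $B_k$, not merely compare endpoints. Ordinary stochastic order on $\tau_k$ does not control variance in the required direction, and a coupling that merges edge types will not by itself handle the $B_k\ge 2$ versus $B_k=1$ comparison. The paper avoids this entirely by first expanding the second moment to obtain the analytic inequality
\[
\Var[Y_k]\;\le\; q_k\sum_{t}\E[H_t]\;+\;q_k^2\Bigl(\sum_t 2(t-1)\E[H_t]-\bigl(\sum_t\E[H_t]\bigr)^{2}\Bigr),
\]
using $\E[H_t\mid X_{t'}=1]\le \E[H_t]$, and then proving (Lemma~\ref{lem:var-se}) that the bracketed quantity is maximized at $B_k=1$ via the Poisson lower bound $\E[H_t]\ge\Pr[\Pois(\alp\eta B_k(t-1)/T)\le B_k-1]$ together with the monotonicity of $B\mapsto\Pr[\Pois(\tau B)\le B-1]$. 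This is exactly the piece your proposal is missing: a quantitative per-$t$ bound that can then be optimized over $B_k$ and $\eta$.

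A smaller issue: the Wald-type decomposition $Y_k=q_k\min(\tau_k,T)+R_k$ is delicate because $\tau_k$ and the arrival indicators into bin $k$ are correlated (conditioning on $\tau_k=t$ biases the arrivals in rounds $1,\dots,t$). It can be made to work, but the paper's direct second-moment expansion with $H_t=\mathbf{1}\{\tau_k\ge t\}$ handles this cleanly and is what you should use once you have the correct extremal argument.
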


\begin{proof}
Recall that in each round $t \in [T]$, a ball will be added to bin $k$ with probability $q_k=\sum_{e\in E_k} \alp x_e^*/T$. Let $Y'_k$ be the sum of $T$ \iid copies of $\Ber(q_k)$, which denotes the  total (random) number of balls added into bin $k$ at the end of time $T$ if ignoring the capacity. Observe that (i) $\{Y'_k| k \in [K]\}$ are negatively associated~\cite{dubhashi1996balls}; (2) Each $Y_k$ can be viewed as a non-decreasing function of $Y'_k$. Thus, we claim that $\{Y_k\}$ are also negatively associated~\cite{joag1983negative}. Therefore, by the work of~\cite{shao2000comparison}, we have $\Var[Y]=\Var[\sum_{k\in [K]} Y_k] \le \sum_{k \in [K]} \Var[Y_k]$.

Focus on a given bin $k$. Let $B=B_k$ and $\sum_{e \in E_k} x_e^* \cdot a_e=B \cdot \eta$ with $\eta \in (0,1]$. For each time $t \in [T]$, let $X_t=1$ indicate that one ball  $e \in E_k$ arrives at $t$ and $H_t=1$ indicates that bin $k$ has at least one capacity at (the beginning of) $t$. Thus, $Y_k=\sum_{t=1}^T X_t \cdot H_t$. Observe that (i) $\{X_t| t\in [T]\}$ are $T$ \iid Bernoulli random variables each with mean $q_k=\sum_{e\in E_k} \alp x_e^*/T$; (ii) Assuming bin $k$ has at least one capacity at the beginning of some round, bin $k$ will have one capacity reduced in that round with probability $\sum_{e \in E_k} (\alp x_e^*/T) \cdot a_e = \alp \cdot \eta \cdot B/T$. Therefore, $\E[H_t]=\Pr[\Ber(\alp \eta B/T)^{(t-1)} \le B-1]$, where $\Ber(\alp \eta B/T)^{(t-1)}$ denotes the sum of $t-1$ \iid Bernoulli random variables each with mean $\alp \eta B/T$. Observe that $\E[X_t \cdot H_t]=\E[X_t] \cdot \E[H_t]$ since $H_t$ is independent of $X_t$ for each $t \in [T]$.

\begingroup
\allowdisplaybreaks
\begin{align}
\Var[Y_k] &=\Var\bb{\sum_{t=1}^T X_t \cdot H_t} =\E\bb{\bp{\sum_{t=1}^T X_t \cdot H_t}^2}-\bp{\E\bb{\sum_{t=1}^T X_t \cdot H_t}}^2 \nonumber\\
&=\E\bb{\sum_{t=1}^T X_t \cdot H_t+2\sum_{1\le t'<t \le T} X_t' \cdot X_t \cdot H_{t'} \cdot H_t}-\bp{ q_k \cdot \sum_{t=1}^T \E[H_t]}^2  \nonumber\\
&=q_k \sum_{t=1}^T \E[H_t]+2 q_k^2 \sum_{1 \le t'<t \le T} \E[H_t| X_{t'}=1]-\bp{ q_k \cdot \sum_{t=1}^T \E[H_t]}^2  \nonumber\\
&\le q_k \sum_{t=1}^T \E[H_t]+ q_k^2\cdot\sum_{t=1}^T 2(t-1)\cdot \E[H_t]-\bp{ q_k \cdot \sum_{t=1}^T \E[H_t]}^2 \label{ineq:var-sf}\\
&=q_k \sum_{t=1}^T \E[H_t]+ q_k^2  \bp{\sum_{t=1}^T 2(t-1)\cdot \E[H_t]-\bp{ \sum_{t=1}^T \E[H_t]}^2}  \nonumber
\end{align}
\endgroup
 Inequality~\eqref{ineq:var-sf} follows from the fact that 
$\E[H_t]=\E[H_t|X_{t'}=1]\Pr[X_{t'}=1]+\E[H_t|X_{t'}=0]\Pr[X_{t'}=0] \ge \E[H_t|X_{t'}=1]$ since $\E[H_t|X_{t'}=0] \ge \E[H_t|X_{t'}=1]$. Thus,
\begingroup
\allowdisplaybreaks
\begin{align}
\Var[Y] &\le \sum_{k =1}^K \Var[Y_k]\label{ineq:var-sa1}\\
& \le \sum_{k =1}^K q_k \sum_{t=1}^T \E[H_t]+ \sum_{k =1}^K q_k^2   \bP{\sum_{t=1}^T 2(t-1)\cdot \E[H_t]-\bp{ \sum_{t=1}^T \E[H_t]}^2} \label{ineq:var-sa2}\\
& \le \alp T+\alp^2 \cdot T^2  \cdot \bp{1 - \sfe^{-2 \alp \eta} - 2 \alp \sfe^{-\alp \eta} +O(1/T)} /(\alp \eta)^2\label{ineq:var-sg}\\
&\le T^2 \bp{1 - \sfe^{-2 \alp} - 2 \alp \sfe^{-\alp} +O(1/T)}\label{ineq:var-sh}.
\end{align}
\endgroup
Inequality~\eqref{ineq:var-sg} follows from Inequality $\sum_{k=1}^K q_k^2 \le   \alp^2 $ due to $\sum_{k=1}^K q_k \le \alp$ and Lemma~\ref{lem:var-se} (the proof is deferred to Appendix); Inequality~\eqref{ineq:var-sh} is due to that the function $\bp{1 - \sfe^{-2 \alp \eta} - 2 \alp \sfe^{-\alp \eta}} /(\alp \eta)^2$ is increasing when $\eta \in (0,1]$ for any given $\alp \in [0,1]$.  
\end{proof}

The proof above suggests that the upper bound of variance stated in Theorem~\ref{thm:var-sa} can be \emph{tight}. Observe that Inequality~\eqref{ineq:var-sa1} becomes tight when $K=1$; Inequalities~\eqref{ineq:var-sa2},~\eqref{ineq:var-sg} will be asymptotically tight (after ignoring terms of $O(T)$) when $K=1$, $q_k=\alp$, and $B=1$ (due to Lemma~\ref{lem:var-se}); and Inequality~\eqref{ineq:var-sh} gets tight when $\eta=1$. This reveals the following variance worst-case structure for $\samp(\alp)$ with $\Del=1$.

\begin{example}[A variance worst-case structure of $\samp(\alp)$ with $\Del=1$]\label{exam:va-ws}
Consider such an instance of $\mb$ with $\Del=1$ as follows: $|I|=|J|=|E|=1$, $K=1, B=1$. In other words, there is one single edge and one single resource with a unit budget.
The single edge $e$ has a Bernoulli random cost: $\cA_e=\Ber(1/T)$ with mean $1/T$. We have $p_j=1, r_j=T$. We can verify that (1) $x_e^*=T$ in the benchmark LP~\eqref{obj-1}; (2) $\samp(\alp)$ samples edge $e$ with probability $\alp$ in each round $t\in [T]$; (3) The total (random) number of edges made in $\samp(\alp)$ is equal to $\min(\mathrm{Ge}(\alp/T),T)$, where $\mathrm{Ge}(\alp/T)$ denotes a Geometric random variable of mean $\alp/T$; (4) $\Var[\min(\mathrm{Ge}(\alp/T),T)]= T^2 \bp{1 - \sfe^{-2 \alp} - 2 \alp \sfe^{-\alp} +O(1/T)}$.
 \hfill$\blacksquare$
\end{example}

\subsection{Variance analysis for $\samp(\alp)$ with general $\Del$}\label{sec:samp-vb} 
Recall that in the competitive-ratio analysis of $\samp(\alp)$, we re-interpret the vertex-arriving process as an edge-arriving process such that in each round $t \in [T]$, one edge is  sampled (called \emph{$e$ arrives}) with replacement with probability $(r_j/T)\cdot (\alp x_e^*/r_j)=\alp x_e^*/T:=q_e$ with $\sum_{e\in E} q_e \le \alp$. For each edge $e \in E$ and $t \in [T]$, let $X_{e,t}=1$ indicate that $e$ arrives at $t$ and $\SF_{e,t}=1$ indicate that $e$ is safe at $t$, \ie all resources in $\cS_e$ have at least one unit budget at (the beginning) of $t$. Let $Y=\sum_{e\in E, t\in [T]}X_{e,t} \cdot \SF_{e,t}$, which represents the total (random) number of edges made in $\samp(\alp)$. Thus, our goal is to upper bound $\Var[Y]$. Observe that for each $e=(i,j)$ $\E[X_{e,t}]=q_{e}$ for every $t\in[T]$. 

\begin{theorem}\label{thm:va-sa-ge}
\[
\Var[Y] \le (\alp T)^2 \cdot \sfg\big(\min(\Del \alp, \eta)\big)+O(T),
\]
where $\sfg(x):=\big(1 - \sfe^{-2 x} - 2x \sfe^{-x}\big)/x^2$, and $\eta\sim1.126$ is the unique maximizer of $\sfg(x)$ when $x \in [0,\infty]$ with $\sfg'(x)=0$.
\end{theorem}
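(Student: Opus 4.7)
The plan is to generalize the variance computation of Theorem~\ref{thm:var-sa} to edges that may consume several resources at once. Writing $Y = \sum_{e,t} Z_{e,t}$ with $Z_{e,t} := X_{e,t}\SF_{e,t}$, I decompose
\[
\Var[Y] \;=\; \sum_{e,t}\Var[Z_{e,t}] \;+\; 2\!\!\sum_{(e,t)\prec(e',t')}\!\!\operatorname{Cov}[Z_{e,t}, Z_{e',t'}].
\]
The diagonal contributes at most $\sum_{e,t}\E[Z_{e,t}] = \E[Y] \le \alp T$, which absorbs into the $O(T)$ term. For same-round pairs ($t=t'$ with $e\ne e'$), per-round exclusivity of edge sampling forces $\operatorname{Cov}[Z_{e,t},Z_{e',t}] = -\E[Z_{e,t}]\E[Z_{e',t}] \le 0$. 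The heart of the argument is thus to bound the cross-time covariances.

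For $t<t'$, I will use the fact that $X_{e,t},X_{e',t'}$ are mutually independent and both independent of $\SF_{e,t}$ (which is measurable with respect to rounds $<t$) to write
\[
\E[Z_{e,t} Z_{e',t'}] \;=\; q_e\,q_{e'}\cdot \Pr[\SF_{e,t}] \cdot \Pr[\SF_{e',t'}\mid X_{e,t}=1,\SF_{e,t}=1].
\]
The crucial step is a worst-case reduction analogous to Lemmas~\ref{lem:supp-1} and~\ref{lem:s-a1}: both safety factors are jointly maximized when every resource has unit budget and every edge's cost is concentrated on a single basis vector within $\cS_e$. In that extremal configuration the process collapses to a single-edge, $\Del'$-bin balls-and-bins model with $\Del' \le \Del$, where the marginal survival probability after $r$ rounds is $(1-\alp\Del'/T)^{r}$ and conditioning on a successful match at an earlier round replaces one of those factors by $(1-\Del'/T)$.

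Plugging these expressions back and mirroring the derivation of Inequalities~\eqref{ineq:var-sa1}--\eqref{ineq:var-sh}, the dominant $T^2$ contribution is captured by
\[
2\alp^2\Bigl(1-\tfrac{\Del'}{T}\Bigr)\sum_{t=2}^{T}(t-1)\Bigl(1-\tfrac{\alp\Del'}{T}\Bigr)^{t-2} \;-\; \Bigl(\alp\sum_{t=1}^T\bigl(1-\tfrac{\alp\Del'}{T}\bigr)^{t-1}\Bigr)^2,
\]
which, via standard generating-function identities applied as $T\to\infty$, evaluates to $(\alp T)^2 \sfg(\alp\Del') + O(T)$. Since $\sfg$ is unimodal with unique maximum at $\eta\sim 1.126$, the adversary's worst choice of effective sparsity is $\Del'=\Del$ when $\alp\Del\le\eta$ and $\Del'\approx\eta/\alp$ otherwise, producing a uniform bound of $(\alp T)^2\sfg(\min(\alp\Del,\eta))$. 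Tightness follows by adapting Example~\ref{exam:va-ws} to a single-edge instance with the corresponding $\Del'$ unit-budget resources, whose cost is $\bo_k$ with probability $1/T$ for each $k\in[\Del']$ and $\mathbf{0}$ otherwise.

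The main obstacle is the joint extremal reduction: the CR argument only needs to minimize the marginal safety $\Pr[\SF_{e,t}]$, whereas here I must simultaneously upper-bound a joint probability where the two safety events are coupled through the shared consumption history at rounds strictly before $t$. I expect to resolve this through a coupling combined with the negative-association machinery for balls-and-bins processes (cf.~\cite{dubhashi1996balls,joag1983negative,shao2000comparison}), verifying that the same unit-budget, concentrated-cost configuration drives both factors in the covariance simultaneously. Once this reduction is in hand, the remainder is essentially the single-resource computation of Theorem~\ref{thm:var-sa} carried out with $\alp\Del'$ in place of $\alp$.
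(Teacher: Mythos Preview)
Your decomposition, the handling of the diagonal and same-round terms, and the final single-edge computation $(\alp T)^2\sfg(\alp\Del')$ are all correct. The gap is exactly the step you flag as the ``main obstacle,'' and your proposed resolution points in the wrong direction. The claim that ``both safety factors are jointly maximized when every resource has unit budget and every edge's cost is concentrated on a single basis vector'' is backwards: Lemmas~\ref{lem:supp-1} and~\ref{lem:s-a1} establish that this configuration \emph{minimizes} $\Pr[\SF_{e,t}]$, not maximizes it. What you actually need is the configuration maximizing $\Var[Y]$, and that quantity is not monotone in the safety marginals, so a coupling that pushes both conditional factors in one direction does not by itself bound the covariance. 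The negative-association machinery you cite controls $\Var[\sum_k Y_k]\le\sum_k\Var[Y_k]$ when each match touches a single bin; it does not obviously give you the joint extremal reduction across edges whose supports overlap arbitrarily.

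The paper sidesteps this entirely and never performs a joint worst-case reduction on the covariance. It works with the raw second moment $F_1=\E[(\sum_{e,t}X_{e,t}\SF_{e,t})^2]$ and bounds the cross-edge cross-time block $F_{1,d}$ by two elementary inequalities: first drop the earlier safety indicator via $\SF_{e',t'}\le 1$, then use that conditioning on $X_{e',t'}=1$ with $e'\ne e$ can only help $\SF_{e,t}$ relative to no arrival at round $t'$, i.e.\ $\E[\SF_{e,t}\mid X_{e',t'}=1]\le p_{e,t}+O(1/T)$. This yields an upper bound on $\Var[Y]=F_1-F_2$ that depends only on the \emph{marginals} $\{q_e\}$ and $\{p_{e,t}\}$, namely
\[
\Var[Y]\;\le\;\alp' T+\alp'\sum_{e}q_e\sum_{t}2(t-1)p_{e,t}-\Bigl(\sum_{e,t}q_e p_{e,t}\Bigr)^2,
\]
and one then maximizes this expression over admissible $\{p_{e,t}\}$. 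The observation that the bound is worst when all $\cS_e$ coincide collapses everything to a single meta-edge with $p_t\in[\sfe^{-\Del\alp(t-1)/T},1]$, which reduces to the $\Del=1$ computation (Lemma~\ref{lem:var-se}) with a free parameter $\boldsymbol{\alp}\in[0,\Del\alp]$; the max of $\sfg(\boldsymbol{\alp})$ over that interval is $\sfg(\min(\Del\alp,\eta))$. So the step you were bracing to fight through a coupling argument is replaced by a one-line monotonicity observation on $\E[\SF_{e,t}\mid X_{e',t'}=1]$, after which the problem becomes a scalar optimization in $p_t$ rather than an extremal problem over cost distributions.
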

\begin{proof}
Note that 
\begin{align*}
\Var[Y]&=\Var\bb{\sum_{e \in E,t\in [T]} X_{e,t}\cdot \SF_{e,t}}\\
&=\E\bb{\bp{\sum_{e \in E,t\in [T]}X_{e,t}\cdot \SF_{e,t} }^2}-\bp{\E\bb{\sum_{e \in E,t\in [T]}X_{e,t}\cdot \SF_{e,t} }}^2:=F_1-F_2,
\end{align*}
where $F_1=\E\bb{\bp{\sum_{e \in E,t\in [T]}X_{e,t}\cdot \SF_{e,t} }^2}$ and $F_2=\bp{\E\bb{\sum_{e \in E,t\in [T]}X_{e,t}\cdot \SF_{e,t} }}^2$. Let $p_{e,t}=\E[\SF_{e,t}]$.
\begin{align*}
F_1&=\underbrace{\sum_{e \in E,t\in [T]} \E[ (X_{e,t}\cdot \SF_{e,t})^2]}_{F_{1,a}}+\underbrace{2 \sum_{e \in E} \sum_{1\le t'<t \le T} \E[X_{e,t'}]\cdot\E[X_{e,t}] \cdot \E[\SF_{e,t}| X_{e,t'}=1]}_{F_{1,b}}+\\
&+\underbrace{2 \sum_{t \in [T]} \sum_{e \neq e'} \E[X_{e,t}\cdot X_{e',t} \cdot \SF_{e,t} \cdot \SF_{e',t}]}_{F_{1,c}} +\underbrace{2\sum_{e' \neq e} \sum_{t'<t} \E[X_{e',t'}\cdot \SF_{e',t'} \cdot X_{e,t}\cdot \SF_{e,t}]}_{F_{1,d}}
\\
&=F_{1,a}+F_{1,b}+F_{1,c}+F_{1,d}.
\end{align*}
We upper bound the four parts of $F_1$ one by one as follows. Let $\sum_{e \in E} q_{e} =\alp' \le \alp$.
\begin{align*}
F_{1,a}&=\sum_{e \in E,t\in [T]} \E[X_{e,t}\cdot \SF_{e,t}] = \sum_{e \in E,t\in [T]} q_e \cdot p_{e,t} \le \alp' \cdot T,\\
F_{1,b} &\le 2 \sum_{e\in E} q_e^2 \sum_{t\in [T]} (t-1) \cdot p_{e,t},\\
F_{1,c}&=0,\\
F_{1,d}&=2\sum_{e'\neq e} \E[X_{e',t'}\cdot X_{e,t}]\sum_{1\le t'<t} \E[\SF_{e',t'} \cdot \SF_{e,t}|X_{e',t'}= X_{e,t}=1]\\
&\le 2\sum_{e\in E} q_e \sum_{e'\neq e} q_{e'} \sum_{t \in [T]} \sum_{1\le t'<t}\E[\SF_{e,t}|X_{e',t'}=1] \le  2\sum_{e\in E} q_e (\alp'-q_e) \sum_{t \in [T]}(t-1) \cdot (p_{e,t}+O(1/T)).
\end{align*}
The analyses for $F_{1,a}$ and $F_{1,b}$ are similar to the previous case of $\Del=1$; For $F_{1,c}$, note that in each round, $\samp(\alp)$  samples at most one single edge and thus, $X_{e,t}\cdot X_{e',t}=0$ for all $t$; As for $F_{1,d}$, observe that in every single round $t'$ with $t'<t$, any arrival of $e'$ with $e'\neq e$ will have a positive impact on the event $\SF_{e,t}$ no more than the fact that no edge arrives in the round $t'$; the latter will have a negligible impact on $\SF_{e,t}$ in terms of at most an additive term of $O(1/T)$. Thus,
\begin{align}
&\Var[Y] =F_1-F_2 \nonumber\\
&\le \alp' \cdot T+2 \sum_{e\in E} q_e^2 \sum_{t\in [T]} (t-1) \cdot p_{e,t}+2\sum_{e\in E} q_e (\alp'-q_e) \sum_{t \in [T]}(t-1) \cdot p_{e,t}-\bp{\E\bb{\sum_{e,t}X_{e,t}\cdot \SF_{e,t} }}^2 \nonumber\\
 &=\alp' \cdot T+\alp' \sum_{e \in E} q_e  \sum_{t\in [T]} 2(t-1) \cdot p_{e,t}-\bp{\sum_{e \in E, t \in [T]} q_e \cdot p_{e,t}}^2. \label{ineq:va-yu1}
\end{align}
The analysis above for $F_{1,d}$ suggests that for any given $\{q_e, p_{e,t} |e \in E, t\in [T]\}$, the variance-WS arrives when all edges share the same support in their random cost distribution, \ie $\{\cS_e|e\in E\}$ are all the same. For this case, we can simply assume there is one single meta edge $e^*$ that arrives with probability $\alp'=\sum_e q_e$ in each round by compressing all the edges into one single edge and together with their cost distributions\footnote{The resulting cost distribution of $\cA_{e^*}$ can be simply expressed as $\cA_{e^*}=\cA_e$ with probability $q_e/\alp'$.}. For each given $t \in [T]$, let $p_t=\Pr[\wedge_{k \in \cS_{e^*}} (\SF_{k,t}=1)]$ be the probability that $e^*$ is safe at $t$. Note that $p_{t} \ge (1-\Del \alp/T)^{t-1} \sim \sfe^{-\Del \alp (t-1)/T}$ following the worst-case competitive-ratio analysis of $\samp(\alp)$ (see Example~\ref{exam:cr-ws}). Meanwhile, $p_{t} \le 1$, which can be asympototically tight in the case when $|\cS_{e^*}|=1$, $\sum_{e \in E} a_{e,k} \cdot x_e^*=B_k \cdot \ep$ for the single resource $k \in \cS_{e^*}$. 
In this case, $p_{t} \ge (1-\alp \ep/T)^T \sim \sfe^{-\alp \ep} \ge 1-\alp \ep$ for all $t \in [T]$. Following this argument, we can simplify the expression on \eqref{ineq:va-yu1} as follows
\begin{align*}
\Var[Y] &\le \alp' \cdot T +{\alp'}^2 \sum_{t\in [T]}2(t-1) \cdot p_t-\bp{\sum_{t \in [T]} \alp' \cdot p_t}^2\\
&=\alp' \cdot T+{\alp'}^2\bb{ \sum_{t\in [T]}2(t-1) \cdot p_t-\bp{\sum_{t \in [T]} p_t}^2} \le \alp \cdot T+\alp^2\bb{ \sum_{t\in [T]}2(t-1) \cdot p_t-\bp{\sum_{t \in [T]} p_t}^2}.
\end{align*}
The analysis on $\bb{ \sum_{t\in [T]}2(t-1) \cdot p_t-\bp{\sum_{t \in [T]} p_t}^2}$ can be reduced to the previous case of $\Del=1$ with an updated value of $\boldsymbol{\alp} \in [0, \Del \alp]$ since $p_t \in [\sfe^{-\Del \alp (t-1)/T},1]$. By Lemma~\ref{lem:var-se}, we have 
\[
\Var[Y] \le \alp \cdot T+\alp^2 \cdot T^2 \cdot \max_{\boldsymbol{\alp} \in [0, \Del \alp]}\sfg(\boldsymbol{\alp})=(\alp T)^2 \cdot\sfg\big(\min(\Del \alp, \eta)\big)+O(T).
\]
The equality above follows from the fact that $\sfg(x):=\big(1 - \sfe^{-2 x} - 2x \sfe^{-x}\big)/x^2$ first increases when $x \in [0, \eta]$ and then decreases when $x \in [\eta,\infty]$, where $\eta \sim 1.126$ is the unique solution of $\sfg'(x)=0$ over $x \in [0,\infty]$.
\end{proof}


\section{A Sampling Algorithm with Time-adaptive Attenuations}\label{sec:att}
Here are the details of simulation-based attenuation in our context.  Suppose an online vertex $j\in J$ arrives at $t$. Consider a given edge $e=(i,j)$. We say $e$ is \emph{safe} at (the beginning of) $t$ iff each resource $k \in \cS_e$ has at least one unit budget at $t$. By applying Monte-Carlo simulations (\ie simulating the online algorithm up to time $t$), we can get a sharp estimate of the probability that each $e$ is safe at $t$ (denoted by $\beta_{e,t}$). Suppose we can show that $\beta_{e,t} \ge \gam_t=(1-\alp \Del/T)^{t-1}$ for all $t\in [T]$, where $\gam_t$ is a time-adaptive targeted value we aim to achieve. By generating an auxiliary Bernoulli random variable $Z_{e,t}$ with mean $\gam_t/\beta_{e,t}$ and adding $Z_{e,t}=1$ as an extra condition for $e$ to be ``safe'' at $t$, we can reduce the probability that each $e$ is ``safe'' at $t$ to be \emph{equal} to our target $\gam_t$.  The formal statement of our algorithm is as follows. Recall that $E_j$ is the set of edges with respect to online vertex $j \in J$. 
\begin{algorithm}[ht!]
\DontPrintSemicolon
\textbf{Offline Phase}: \;
Solve  \LP~\eqref{obj-1} and let $\{x^*_{e}\}$ be an optimal solution.\;
\tcc{In Offline Phase, we aim to compute a sharp estimate of the probability that each edge $e$ is safe at (the beginning of) $t$, denoted by $\beta_{e,t}$.}
\emph{Initialization}: When $t=1$, set $\beta_{e,t}=1$ for all $e \in E$.\;
\For{$t=2,3,\ldots,T$}
{By simulating Step~\eqref{alg:on-1} to Step~\eqref{alg:on-2} of Online Phase for all the rounds $t'=1,2,\ldots,t-1$, we get a sharp estimate of $\beta_{e,t}$, the probability that $e$ is safe at $t$. \label{alg:on-6}\;
}
\textbf{Online Phase}:\;
 \For{$t=1,\ldots,T$}{
Let an online vertex $j$ arrive at time $t$. \label{alg:on-1}  \;
Sample an edge $e \in E_j$ with probability $\alp x_{e}^*/r_j$. \label{alg:on-3}\;
Generate an auxiliary Bernoulli random variable $Z_{e,t}$ with mean $\gam_t/\beta_{e,t}$, where $\gam_t=(1-\alp \Del/T)^{t-1}$.\label{alg:on-4}\;
\eIf{$e$ is safe and $Z_{e,t}=1$\label{alg:on-5}}{Match the edge $e$.}{Reject $j$.}\label{alg:on-2}
}
\caption{A sampling algorithm $\att(\alp)$ with time-adaptive attenuations  with $\alp \in[0,1]$.}
\label{alg:adap}
\end{algorithm}

Throughout this section, we use $\SF_{e,t}=1$ to indicate that $e$ is ``safe'' at $t$, which means that both $e$ is safe at $t$ and $Z_{e,t}=1$ in Step~\eqref{alg:on-5}. By the nature of $\att$, we see $\E[\SF_{e,t}]=\gam_t$ for all $e \in E$ and $t \in [T]$. Note that the condition of $e$ is ``safe'' at $t$ is stricter than that of $e$ is safe at $t$; the latter simply means that every resource $k \in \cS_e$ has at least one unit budget at $t$.


\begin{lemma}\label{lem:adap}
$\beta_{e,t} \ge \gam_{t}=(1-\alp \Del/T)^{t-1}$ for all $e \in E, t \in [T]$. 
  \end{lemma}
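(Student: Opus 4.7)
The argument is by induction on $t$. The base case $t=1$ is immediate from the initialization: $\beta_{e,1}=1=\gam_1$ for every $e\in E$.

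For the inductive step, assume $\beta_{e',t'}\ge\gam_{t'}$ holds for every $e'\in E$ and every $t'<t$. Under this hypothesis each auxiliary variable $Z_{e',t'}$ produced in Step~\eqref{alg:on-4} has a legitimate mean $\gam_{t'}/\beta_{e',t'}\in[0,1]$, and hence by construction $\Pr[\SF_{e',t'}=1]=\gam_{t'}$. Consequently, at each round $t'<t$, the \emph{unconditional} probability that a given edge $e'=(i',j')$ is actually matched equals $p_{j'}\cdot(\alp x^*_{e'}/r_{j'})\cdot\gam_{t'}=\alp\gam_{t'}x^*_{e'}/T$, and by Constraint~\eqref{cons:i} the expected unconditional consumption of any resource $k$ at round $t'$ is therefore at most $\alp\gam_{t'}B_k/T$.

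I now mimic the worst-case argument from the proof of Theorem~\ref{thm:sa-1}. Because the matching decision at any round is measurable with respect to the history, terminating a match once a resource is depleted only decreases the total consumption; hence $\beta_{e,t}$ is at least the probability that the unconditional consumption vectors $\Z_1,\ldots,\Z_{t-1}$ leave every resource in $\cS_e$ with at least one unit of budget. These $\Z_{t'}$ are mutually independent (but, unlike in Theorem~\ref{thm:sa-1}, not identically distributed), with $\E[Z_{k,t'}]\le\alp\gam_{t'}B_k/T$. Applying Lemma~\ref{lem:supp-1} round-by-round to pin down the worst-case joint distribution of $(Z_{k,t'})_{k\in\cS_e}$, and then Lemma~\ref{lem:s-a1} to conclude that $B_k=1$ for every $k\in\cS_e$ is the worst case, yields
\[
\beta_{e,t}\;\ge\;\prod_{1\le t'<t}\bp{1-\frac{\Del\,\alp\,\gam_{t'}}{T}}.
\]

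Since $\gam_{t'}\le 1$ for every $t'<t$, each factor on the right is at least $1-\Del\alp/T$, so the product is bounded below by $(1-\Del\alp/T)^{t-1}=\gam_t$, closing the induction. The main technical obstacle I anticipate is cleanly extending the worst-case reasoning of Lemmas~\ref{lem:supp-1}--\ref{lem:s-a1} from the IID setting used in Theorem~\ref{thm:sa-1} to this round-dependent setting (where the sampling intensity varies with $\gam_{t'}$); once that round-by-round extension is in hand, the crude monotonicity $\gam_{t'}\le 1$ immediately collapses the product into the desired uniform bound.
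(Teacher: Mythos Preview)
Your inductive setup is fine, but the route afterward both contains a conflation and is far more elaborate than needed. When you compute ``the unconditional probability that $e'$ is actually matched at round $t'$ equals $\alpha\gamma_{t'}x^*_{e'}/T$'', you are computing the \emph{actual} matching probability (safety check included); those actual per-round consumption vectors are \emph{not} independent across rounds. The independent dominating process you get by ``dropping the safety check'' matches $e'$ with probability $(\alpha x^*_{e'}/T)\cdot(\gamma_{t'}/\beta_{e',t'})$, and the inductive hypothesis $\beta_{e',t'}\ge\gamma_{t'}$ only yields a per-round marginal $\le \alpha B_k/T$, \emph{not} $\le \alpha\gamma_{t'}B_k/T$. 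So your displayed product $\prod_{t'<t}(1-\Delta\alpha\gamma_{t'}/T)$ is not actually established by this argument---though the weaker $(1-\Delta\alpha/T)^{t-1}$ is, and that is all you need.

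The paper sidesteps all of this machinery. It works directly with the \emph{actual} cumulative consumption $U_{k,\bar t}$ (no coupling), uses the attenuation to compute its expectation exactly,
\[
\E[U_{k,\bar t}]=\sum_{t<\bar t}\sum_{e}\frac{\alpha x^*_e}{T}\,\gamma_t\,a_{e,k}\;\le\;\frac{B_k}{\Delta}\bigl(1-\gamma_{\bar t}\bigr)
\]
via Constraint~\eqref{cons:i}, and then finishes with union bound plus Markov:
\[
\beta_{\bar e,\bar t}\;\ge\;1-\sum_{k\in\cS_{\bar e}}\Pr[U_{k,\bar t}\ge B_k]\;\ge\;1-\sum_{k\in\cS_{\bar e}}\frac{\E[U_{k,\bar t}]}{B_k}\;\ge\;1-\Delta\cdot\frac{1-\gamma_{\bar t}}{\Delta}\;=\;\gamma_{\bar t}.
\]
No worst-case distribution analysis, no Lemma~\ref{lem:supp-1} or~\ref{lem:s-a1}. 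The whole point of the time-adaptive attenuation is that it pins $\E[\SF_{e,t}]$ to exactly $\gamma_t$, making the expectation computation clean enough that Markov alone closes the induction.
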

\begin{proof}
We prove the lemma by induction on $t \in [T]$. Consider the base case when $t=1$. We have $\beta_{e,t}=1 = \gam_{t}$ for all $e \in E$. Now consider a given $\bt>1$ and assume $\beta_{e,t} \ge \gam_{t}$ and $\E[\SF_{e,t}]=\gam_{t}$ for all $t<\bt$ and all $e\in E$. We show that $\beta_{e, \bt} \ge  \gam_{\bt}$ for all $e \in E$. Consider a given $\be$ and a given $k \in  \cS_{\be}$. We can re-interpret the online vertex-arriving process as edge-arriving alternatively such that each edge $e=(i,j)$  arrives (\ie $j$ arrives and $e$ gets sampled) with probability $(r_j/T) \cdot (\alp x_e^*/T)=\alp x_e^*/T:=q_e$ with $\sum_{e\in E}q_e \le \alp$. Let $U_{k,\bt}$ be the (random) number of consumptions of resource $k$ during the previous $\bt-1$ rounds. For each $t<\bt$, $e \in E$, let $X_{e,t}=1$ indicate that $e$ arrives at $t$. Recall that  $\SF_{e,t}$ be the event that both $e$ is safe at $t$ and $Z_{e,t}=1$ in Step~\eqref{alg:on-5}. Thus, we have 
\begin{align}
\E[U_{k,\bt}]&=\E \Big[\sum_{t<\bt} \sum_{e \in E } X_{e,t} \cdot  \SF_{e,t}\cdot  A_{e,k} \Big] 
=\sum_{t<\bt} \sum_{ e \in E} (\alp x_{e}^*/T)  \cdot (1- \alp \Del/T)^{t-1} \cdot a_{e,k}\nonumber \\
&= \sum_{e  \in E}( x^*_{e} \cdot a_{e,k}) \sum_{t<\bt} (1-\alp \Del/T)^{t-1} \cdot (\alp/T)
 \le B_k \cdot (1/\Del) \cdot\Big(1-\Big(1-\alp \Del/T  \Big)^{\bt-1}\Big). \label{ineq:adap-2}
\end{align}
The inequality~\eqref{ineq:adap-2} is due to Constraint~\eqref{cons:i} of \LP~\eqref{obj-1}.  
Therefore, the probability that $\be$ is safe at $\bt$ before attenuations should be
\begin{align}
\beta_{\be,\bt}&=\Pr\Big[  \bigwedge_{k \in \cS_{\be}} \big(U_{k,\bt} \le B_k-1 \big)\Big] 
\ge 1- \sum_{k \in \cS_{\be}} \Pr[U_{k,\bt} \ge B_k] \label{ineq:adap-33}\\
&\ge 1- \sum_{k \in \cS_{\be}} \frac{E[U_{k,\bt}]}{B_k}  \ge \Big(1- \frac{\alp \Del}{T}  \Big)^{\bt-1}=\gam_{\bt}. \label{ineq:adap-3}
\end{align}
Inequality~\eqref{ineq:adap-33} is  due to union bound; the first Inequality on~\eqref{ineq:adap-3} follows from Markov's inequality while the second is due to Inequality~\eqref{ineq:adap-2} and the fact $|\cS_{\be}| \le \Del$. Therefore, we complete the proof on $\bt$ and finish the induction.
\end{proof}

\xhdr{Proof of the competitive-ratio result of $\att(\alp)$ in Theorem~\ref{thm:main-att}}. 
\begin{proof}
Consider a given edge $e=(i,j)$. Let $M_{e}$ be the total (random) utilities gained on $e$ in $\att(\alp)$. For each $t \in [T]$, let $X_{t}=1$ indicate that $e$ arrives at $t$. Thus, we have
\begin{align*}
\E[M_{e}] &=\sum_{t=1}^T \E[X_t \cdot \SF_{e,t} \cdot W_e]=\sum_{t=1}^T (\alp x_e^*/T)\cdot \Big(1- \frac{\alp\Del}{T}  \Big)^{t-1}\cdot w_e = (x^*_{e}w_{e})\cdot \frac{1-\sfe^{-\alp \Del}}{\Del}.
\end{align*}
The last equality above is obtained by taking $T \rightarrow \infty$. Thus, by the linearity of expectation, we claim that $\adap(\alp)$ achieves an expected sum of utilities equal to $
\E[\att(\alp)]=\sum_{e\in E} (x^*_{e}w_{e})\cdot (1-\sfe^{-\alp \Del})/\Del$, which is a factor of $(1-\sfe^{-\alp \Del})/\Del$ of the optimal value of the benchmark LP~\eqref{obj-1}. Thus, we establish the competitive ratio of $\att(\alp)$.
\end{proof}

\subsection{Variance analysis for $\att(\alp)$}
We re-interpret the online vertex-arriving process as edge-arriving alternatively such that each edge $e=(i,j)$  arrives (\ie $j$ arrives and $e$ gets sampled) with probability $(r_j/T) \cdot (\alp x_e^*/T)=\alp x_e^*/T:=q_e$ with $\sum_{e\in E}q_e \le \alp$.
For each edge $e \in E$ and $t \in [T]$, let $X_{e,t}=1$ indicate that $e$ arrives at $t$. Recall that  $\SF_{e,t}=1$ indicate that both $e$ is safe at $t$ (all resources in $\cS_e$ have at least one unit budget at $t$) and $Z_{e,t}=1$ in Step~\eqref{alg:on-5}. Let $Y=\sum_{e\in E, t\in [T]}X_{e,t} \cdot \SF_{e,t}$, which represents the total (random) number of edges made in $\att(\alp)$. Thus, our goal is to upper bound $\Var[Y]$. Observe that for each $e=(i,j)$ and $t \in [T]$, $\E[X_{e,t}]=q_{e}$ and $\E[\SF_{e,t}]=(1-\alp \Del/T)^{t-1}$. Let $\alp'=\sum_{e\in E}q_e \le \alp$.

\begin{theorem}\label{thm:va-sa-ge}
\[
\Var[Y] \le (\alp T)^2 \cdot \sfg(\alp \Del)+O(T),
\]
where $\sfg(x):=\big(1 - \sfe^{-2 x} - 2x \sfe^{-x}\big)/x^2$.
\end{theorem}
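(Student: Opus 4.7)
The plan is to mirror the variance analysis for $\samp(\alp)$ in Section~\ref{sec:samp-vb} but to exploit the crucial distinguishing feature of $\att(\alp)$: by the attenuation construction (combined with Lemma~\ref{lem:adap}), we have the \emph{exact} identity $\E[\SF_{e,t}] = \gam_t = (1-\alp\Del/T)^{t-1}$ for every edge $e$ and time $t$, rather than only the lower bound available in the $\samp$ case. This exactness will cause the ``$-(\sum_t \gam_t)^2$'' term from $(\E[Y])^2$ to perfectly cancel the leading part of $2\sum_t (t-1)\gam_t$, and in particular avoids the $\max_{\boldsymbol{\alp} \in [0,\Del \alp]}\sfg(\boldsymbol{\alp})$ step that produced the $\min(\Del\alp, \eta)$ in Theorem~\ref{thm:va-sa-ge} for $\samp$.

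First, I will write $\Var[Y] = \E[Y^2]-(\E[Y])^2$ and expand $\E[Y^2] = \sum_{e,t,e',t'}\E[X_{e,t}X_{e',t'}\SF_{e,t}\SF_{e',t'}]$ into four groups, parallel to $F_{1,a}, F_{1,b}, F_{1,c}, F_{1,d}$ in Section~\ref{sec:samp-vb}: (a) $e=e'$, $t=t'$; (b) $e=e'$, $t \neq t'$; (c) $e \neq e'$, $t=t'$; (d) $e \neq e'$, $t \neq t'$. Part (c) vanishes since $\att$ samples at most one edge per round, so $X_{e,t}X_{e',t}=0$ for $e\neq e'$. Part (a) is the diagonal contribution $\alp' \sum_t \gam_t$. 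For (b) and (d), I use the same key bounds as in Section~\ref{sec:samp-vb}: $\E[\SF_{e,t}\mid X_{e,t'}=1] \le \gam_t$ (earlier arrivals of $e$ can only hurt its later safeness), and $\E[\SF_{e,t}\mid X_{e',t'}=1] \le \gam_t + O(1/T)$ for $e\neq e'$ (conditioning on a single round's sampling outcome perturbs future safeness by only $O(1/T)$ since each round contributes $O(1/T)$ expected consumption). Crucially, in $\att$ the unconditional value $\E[\SF_{e,t}]$ equals $\gam_t$ \emph{exactly}, so the upper bounds in (b) and (d) use $\gam_t$ rather than an auxiliary $p_{e,t}$.

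Combining, and using $\sum_e q_e^2 + \sum_{e\neq e'} q_e q_{e'} = (\alp')^2$ with $\alp' = \sum_e q_e \le \alp$, the cross terms collapse to
\[
\E[Y^2] \;\le\; \alp' T \;+\; 2(\alp')^2 \sum_{t=1}^T (t-1)\,\gam_t \;+\; O(T),
\]
while $(\E[Y])^2 = (\alp')^2 \big(\sum_t \gam_t\big)^2$. Therefore
\[
\Var[Y] \;\le\; \alp' T \;+\; (\alp')^2 \Big[\,2\sum_{t=1}^T (t-1)\gam_t - \Big(\sum_{t=1}^T \gam_t\Big)^2\Big] + O(T).
\]
Finally, I will evaluate the bracketed expression asymptotically with $\beta := \alp\Del/T$ and $x := \alp\Del$: $\sum_t \gam_t = T(1-\sfe^{-x})/x\,(1+o(1))$ and $\sum_t(t-1)\gam_t = T^2\,[1-\sfe^{-x}(1+x)]/x^2\,(1+o(1))$. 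A direct algebraic simplification reduces the bracket to $T^2 \cdot (1 - \sfe^{-2x} - 2x\sfe^{-x})/x^2 = T^2\,\sfg(\alp\Del)$ up to lower-order terms. Using $\alp' \le \alp$ yields $\Var[Y] \le (\alp T)^2 \cdot \sfg(\alp\Del) + O(T)$, as claimed.

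The main obstacle is the technical justification of the conditional bound $\E[\SF_{e,t}\mid X_{e',t'}=1] \le \gam_t + O(1/T)$ for $e\neq e'$. The intuition is that conditioning on the sampling outcome of a \emph{single} round out of $T$ can shift $\SF_{e,t}$ by at most the incremental consumption of that one round, which is $O(1/T)$; making this precise amounts to a coupling between the distribution of histories conditioned on $X_{e',t'}=1$ and the unconditioned distribution, analogous to the corresponding step for $\samp$ in Section~\ref{sec:samp-vb}. Every other step, including the final asymptotic collapse to $\sfg(\alp\Del)$, is a routine consequence of the exactness $\E[\SF_{e,t}] = \gam_t$ guaranteed by the time-adaptive attenuation.
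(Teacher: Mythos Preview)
Your proposal is correct and follows essentially the same approach as the paper's own proof: the same four-part decomposition of $\E[Y^2]$ into $F_{1,a},F_{1,b},F_{1,c},F_{1,d}$, the same bounds on each part (including the key $\E[\SF_{e,t}\mid X_{e',t'}=1]\le \gam_t+O(1/T)$ step for the cross term), and the same asymptotic evaluation collapsing to $\sfg(\alp\Del)$ via the exact identity $\E[\SF_{e,t}]=\gam_t$. Your final replacement of $(\alp')^2$ by $\alp^2$ using the nonnegativity of the bracketed term is arguably a bit cleaner than the paper's presentation, but the substance is identical.
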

\begin{proof}
Note that 
\[
\mu:=\E[Y]=\sum_{e \in E,t\in [T]} \E[X_{e,t}\cdot \SF_{e,t}]=\sum_{e\in E, t\in [T]} q_e \cdot  \Big(1- \frac{\alp\Del}{T}  \Big)^{t-1}=(\alp'  T) \cdot \frac{1-\sfe^{-\alp \Del}}{\alp \Del}. 
\]
Thus,
\begin{align*}
\Var[Y]&=\Var\bb{\sum_{e \in E,t\in [T]} X_{e,t}\cdot \SF_{e,t}}=\E\bb{\bp{\sum_{e \in E,t\in [T]}X_{e,t}\cdot \SF_{e,t} }^2}-\bp{\E\bb{\sum_{e \in E,t\in [T]}X_{e,t}\cdot \SF_{e,t} }}^2\\
&:=F_{1,a}+F_{1,b}+F_{1,c}+F_{1,d}-\mu^2,
\end{align*}
where
\begin{align*}
F_{1,a}&=\sum_{e \in E,t\in [T]} \E[ (X_{e,t}\cdot \SF_{e,t})^2]=\mu,\\
F_{1,b}&=2 \sum_{e \in E} \sum_{1\le t'<t \le T} \E[X_{e,t'}]\cdot\E[X_{e,t}] \cdot \E[\SF_{e,t}| X_{e,t'}=1] \le 2 \sum_{e \in E} q_e^2 \sum_{t\in [T]} (t-1) \cdot  \Big(1- \frac{\alp\Del}{T}  \Big)^{t-1},\\
F_{1,c} &=2 \sum_{t \in [T]} \sum_{e \neq e'} \E[X_{e,t}\cdot X_{e',t} \cdot \SF_{e,t} \cdot \SF_{e',t}]=0,\\
F_{1,d}&=2\sum_{e' \neq e} \sum_{t'<t} \E[X_{e',t'}\cdot \SF_{e',t'} \cdot X_{e,t}\cdot \SF_{e,t}]\\
& \le 2\sum_{e\in E} q_e (\alp'-q_e) \sum_{t \in [T]}(t-1) \cdot \bp{\Big(1- \frac{\alp\Del}{T}  \Big)^{t-1}+O(1/T)}.
\end{align*}
Summarizing all analyses above, we have
\begin{align*}
\Var[Y] &\le \mu-\mu^2+2 {\alp'}^2 \sum_{t \in [T]}(t-1) \cdot \bp{\Big(1- \frac{\alp\Del}{T}  \Big)^{t-1}+O(1/T)} \\
&\le \mu-\mu^2+2 {\alp}^2 \sum_{t \in [T]}(t-1) \cdot \bp{\Big(1- \frac{\alp\Del}{T}  \Big)^{t-1}+O(1/T)}=(\alp T)^2\cdot \sfg(\alp \Del)+O(T),
\end{align*}
where $\sfg(x):=\big(1 - \sfe^{-2 x} - 2x \sfe^{-x}\big)/x^2$.
\end{proof}

\section{Proof of the hardness result in Theorem~\ref{thm:hard}}\label{sec:hardness}
\begin{example}\label{exam}
Consider the projective plane $\cH=(\cV, \cE)$ of order $\Del-1$ with $\Del-1$ being a prime~\cite{chan2012linear}.  $\cH$ is a hypergraph such that (1) $\cH$ is $\Del$-uniform, $\Del$-regular and intersecting; (2) $|\cV|=|\cE|= \Del^2-\Del+1$; and (3) the natural canonical LP on $\cH$ has an optimal value of $\Del-1+1/\Del$. Now based on $\cH$, we construct an instance of \mb as follows.

Let $G=(I,J,E)$ be a star graph with $|I|=1$, $|J|=|E|=T$, and $r_j=1$ for all $j \in J$. Set $K=\Del^2-\Del+1$, where each offline resource corresponds to one hypervertex. Set $B_k=1$ for all $k \in [K]$. For each hyperedge $f \in \cE$, we create $T/(\Del^2-\Del+1)$ copies of an edge $e \in E$ such that for each edge $e$, (1) with probability $p=(\Del-1+1/\Del)/T$, $W_{e}=1$ and $\cA_{e}=\bfa_f$, where  $\bfa_f \in \{0,1\}^{\Del^2-\Del+1}$ is the canonical representation of $f$; (2) with probability $1-p$, $W_e=0$ and $\cA_e=\bz$, where $\bz$ is the zero vector of dimension $\Del^2-\Del+1$. Thus, we have in total $|E|=(T/(\Del^2-\Del+1))\cdot (\Del^2-\Del+1)=T$ edges. Also,  we see that every edge $e \in E$ has a support of size $\Del$, \ie the number of non-zero entries in $\cA_e$. \hfill $\blacksquare$
\end{example}

\begin{lemma}
 \LP-\eqref{obj-1} has an optimal value at least $\Del-1+1/\Del$ on Example~\ref{exam}.
\end{lemma}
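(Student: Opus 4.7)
The plan is to construct an explicit feasible solution for \LP~\eqref{obj-1} whose objective value is exactly $\Del - 1 + 1/\Del$, by lifting the optimal solution of the canonical LP on the projective plane $\cH$. Let $\{y_f^* \mid f \in \cE\}$ be an optimal solution of that LP, where the constraints are $\sum_{f \ni v} y_f \le 1$ for every $v \in \cV$. Since $\cH$ is $\Del$-regular and $\Del$-uniform with $|\cE| = \Del^2 - \Del + 1$, by symmetry (and as is standard for projective planes) $y_f^* = 1/\Del$ for every $f \in \cE$, and $\sum_f y_f^* = (\Del^2 - \Del + 1)/\Del = \Del - 1 + 1/\Del$.

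For each hyperedge $f \in \cE$, let $E_f \subseteq E$ be the set of $|E_f| = T/(\Del^2 - \Del + 1)$ copies of edges associated with $f$ in the construction. Recall $w_e = p = (\Del - 1 + 1/\Del)/T$ and $a_{e,k} = p \cdot (\bfa_f)_k$ for every $e \in E_f$. I would set
\[
x_e = \frac{y_f^*}{p \cdot |E_f|} \quad \text{for each } e \in E_f,
\]
i.e., spread the mass $y_f^*/p$ uniformly across the $|E_f|$ copies of $f$. With this choice, a straightforward calculation gives $p \cdot T = \Del - 1 + 1/\Del = (\Del^2 - \Del + 1)/\Del$, so
\[
x_e = \frac{y_f^* (\Del^2 - \Del + 1)}{p T} = y_f^* \cdot \Del = 1.
\]
Since the star graph has $|E_j| = 1$ for each online $j$ (each $j$ is incident to exactly one edge) and $r_j = 1$, constraint~\eqref{cons:j} reads $x_{e_j} \le 1$ and is satisfied with equality.

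For constraint~\eqref{cons:i}, fix a resource $k$ corresponding to a hypervertex $v_k$. Then
\[
\sum_{e \in E} a_{e,k} x_e = \sum_{f \in \cE} |E_f| \cdot p \cdot (\bfa_f)_k \cdot \frac{y_f^*}{p \cdot |E_f|} = \sum_{f \ni v_k} y_f^* \le 1 = B_k,
\]
by feasibility of $\{y_f^*\}$ in the projective-plane LP. Finally, the objective evaluates to
\[
\sum_{e \in E} w_e x_e = \sum_{f \in \cE} |E_f| \cdot p \cdot \frac{y_f^*}{p \cdot |E_f|} = \sum_{f \in \cE} y_f^* = \Del - 1 + 1/\Del,
\]
completing the proof. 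The only delicate step is verifying that the induced $x_e$'s do not exceed $r_j = 1$; this works precisely because the arithmetic identity $p T = (\Del^2 - \Del + 1)/\Del$ cancels against the symmetric optimum $y_f^* = 1/\Del$, so no other scaling is needed.
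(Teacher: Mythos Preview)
Your proof is correct and is essentially the same as the paper's: both exhibit the feasible solution $x_e = 1$ for all $e \in E$ and verify Constraints~\eqref{cons:j} and~\eqref{cons:i} together with the objective value $T \cdot p = \Del - 1 + 1/\Del$. The only difference is cosmetic: you derive $x_e = 1$ by ``lifting'' the symmetric optimum $y_f^* = 1/\Del$ of the projective-plane LP, whereas the paper simply posits $x_e = 1$ directly and checks; the underlying arithmetic is identical.
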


\begin{proof}
For each edge $e$, let $y_e$ be the sum of $x_e$ over its $T/(\Del^2-\Del+1)$ copies. Consider such a solution that $x_e=1$, $y_e=T/(\Del^2-\Del+1)$ for all $e$. We show $\{x_e\}$ is feasible to  \LP-\eqref{obj-1}. First, $\{x_e\}$ is feasible to Constraint~\eqref{cons:j}. Second,  $\{x_e\}$ is feasible to Constraint~\eqref{cons:i}, since for each $k$,  
\[
\sum_{e \in E} x_e \cdot \E[A_{e,k}]=|\{e \in E: \E[A_{e,k}]>0\}| \cdot p=\Del \cdot \frac{T}{\Del^2-\Del+1} \cdot \frac{\Del-1+1/\Del}{T} =1.
\]
Note that for each resource (or hypervertex) $k$, there are exactly $\Del$ different edges $e$ such that each $e$ has $T/(\Del^2-\Del+1)$ copies and each has a non-zero $\E[A_{e,k}]$ with $\E[A_{e,k}]=p$. Therefore, we claim that \LP-\eqref{obj-1} has an optimal value at least $\sum_{e \in E} x_e \cdot p=T \cdot p=\Del-1+1/\Del$.
\end{proof}
Now we start to prove Theorem~\ref{thm:hard}.
\begin{proof}
Consider Example~\ref{exam}. Note that all the $T$ edges are intersecting with each other. We see that the expected total utilities achieved by any online algorithm should be no larger than $1-(1-p)^T=1-\sfe^{-(\Del-1+1/\Del)}$. Thus, we claim that the resulting competitive ratio with respect to \LP~\eqref{obj-1} should be no larger than $\big(1-\sfe^{-(\Del-1+1/\Del)}\big)/(\Del-1+1/\Del)$.
\end{proof}

\section{Proof of Theorem~\ref{thm:main-3}} \label{sec:large}
In this section, we show $\samp(\alp)$ with $\alp=1$  achieves a CR approaching one under the large-budget assumption. WLOG assume that $B_k=B$ for all $k \in [K]$. \emph{Throughout this section, we refer to $\samp(1)$ as $\samp$ for simplicity}.

In the heart of the CR analysis of $\samp$, we need the following key result. 
\begin{theorem}\label{thm:supp-1}
 Assume $B=\omega(\ln \Del)$ and $\Del, B \ll T$. Let $\bfU=(U_k| k\in [\Del]) \sim \cD$, where $\cD$ is a random distribution such that (1) $\bfU \in \{0,1\}^\Del$; and (2) $\E[U_k] \le B/T$ for every $k\in [\Del]:=\{1,2,\ldots,\Del\}$. For each $t \in [T]$, let $\bfU^{(t)}$ be the sum of $t-1$ \iid copies of $\bfU$. Suppose $\cD^*$ is an optimal distribution satisfying the two conditions such that $\sum_{t \in [T]}\Pr_{\bfU\sim \cD^*}\bb{ \max_{k \in [\Del]}U^{(t)}_k \le B-1}$ gets minimized, where $U_k^{(t)}$ is the $k$th entry of $\bfU^{(t)}$. We have that under $\cD^*$, 
  \begin{equation}\label{eqn:supp-1}
  \lim_{T \rightarrow \infty} \frac{1}{T}\sum_{t=1}^T \Pr_{\bfU\sim \cD^*}\bb{\max_{k \in [\Del]} U_k^{(t)} \le B-1}= 1- \kappa \sqrt{\frac{\ln \Del}{B}}(1+o(1)), 
  \end{equation}
   where $\sqrt{2}<\kappa \le 2\sqrt{2}$, and $o(1)$ is a vanishing term when $\Del \rightarrow \infty$.
 \end{theorem}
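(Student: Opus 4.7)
The first step is to pin down the worst-case distribution $\cD^*$ via a sequential worst-scenario identification. Applying Lemma~\ref{lem:supp-1} (already used in the CR analysis of $\samp$), I would argue that among all distributions on $\{0,1\}^\Del$ satisfying $\E[U_k]\le B/T$ for every $k\in[\Del]$, the minimizer of the sum on the left of \eqref{eqn:supp-1} is the ``anti-correlated'' distribution that places mass $B/T$ on each standard basis vector $\mathbf{e}_k$ and the remaining mass $1-\Del B/T$ on the zero vector. Intuitively, concentrating marginal mass on disjoint atoms maximizes the chance that some coordinate overflows early, which is exactly what the adversary wants. Under this $\cD^*$, the partial sum $\bfU^{(t)}$ is precisely the count vector of a balls-and-bins experiment over $t-1$ rounds, where at each round one ball lands in bin $k$ with probability $B/T$ and no ball is thrown with probability $1-\Del B/T$.

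With that reduction in hand, let $\tau_k$ be the first round at which bin $k$ reaches $B$ balls and let $\tau:=\min_{k\in[\Del]}\tau_k$ be the first overflow time. The identity $\Pr_{\bfU\sim\cD^*}[\max_k U_k^{(t)}\le B-1]=\Pr[\tau\ge t]$ lets me rewrite the sum in \eqref{eqn:supp-1} as $\E[\min(\tau,T)]=T-\E[(T-\tau)^+]$, so the theorem reduces to showing $\E[(T-\tau)^+]/T=\kappa\sqrt{\ln\Del/B}(1+o(1))$ for some $\kappa\in(\sqrt 2,2\sqrt 2]$. Each $\tau_k$ is a negative binomial variable with mean exactly $T$ and variance $(T^2/B)(1-o(1))$, so a Berry--Esseen-quantified CLT yields $Z_k:=(\tau_k-T)/(T/\sqrt B)\Rightarrow \mathcal{N}(0,1)$; moreover, the family $\{\tau_k\}$ inherits negative association from the multinomial counts $\{N_k(t)\}_k$, which is the key tool for both bounds below.

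For the upper bound $\kappa\le 2\sqrt 2$, I would estimate $\E[(T-\tau)^+]$ directly via a union bound. At the threshold $s^*:=T(1-2\sqrt{2\ln\Del/B})$, the Gaussian tail gives
\[
\Pr[\tau<s^*]\le \sum_{k=1}^{\Del}\Pr[\tau_k<s^*]\approx \Del\cdot\Phi(-2\sqrt{2\ln\Del})=o(1),
\]
whence $\E[(T-\tau)^+]\le (T-s^*)+T\cdot o(1)=2\sqrt 2\cdot T\sqrt{\ln\Del/B}(1+o(1))$. For the strict lower bound $\kappa>\sqrt 2$, I would compare against an independent coupling $\{\widetilde\tau_k\}$ with identical marginals: the classical extreme-value estimate $\E[\min_{k\le\Del}\widetilde Z_k]=-\sqrt{2\ln\Del}(1+o(1))$ for i.i.d.\ standard Gaussians, combined with the CLT, yields $\E[(T-\min_k\widetilde\tau_k)^+]/T\to\sqrt 2\cdot\sqrt{\ln\Del/B}$ in the independent coupling, and negative association then forces $\E[(T-\tau)^+]\ge \E[(T-\min_k\widetilde\tau_k)^+]$, with the strictness in $\kappa>\sqrt 2$ reflecting that the $\tau_k$'s are \emph{genuinely} negatively correlated rather than independent.

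The principal obstacle will be the \emph{quantitative} extreme-value asymptotic: the depth $T\sqrt{\ln\Del/B}$ of the minimum must dominate both the $O(T/\sqrt B)$ Berry--Esseen slack and the $O(T/\sqrt B\cdot(\ln\Del)^{-1/2})$ extreme-value remainder, which is exactly what forces the hypothesis $B=\omega(\ln\Del)$ and is what prevents me from merging the two bounds into a single sharp constant. The gap between $\sqrt 2$ and $2\sqrt 2$ reflects the difference between the sharp independent extreme-value constant and the cruder union-bound-plus-Gaussian-tail estimate that is robust to the multinomial correlation and the truncation of $\tau$ at $T$.
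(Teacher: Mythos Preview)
Your reduction to the anti-correlated balls-and-bins distribution via Lemma~\ref{lem:supp-1}, the rewriting $\sum_t\Pr[\max_k U_k^{(t)}\le B-1]=\E[\min(\tau,T)]$ with $\tau=\min_k\tau_k$, and the use of negative association to compare with independent marginals are all correct and match the paper's setup. The gap is in the quantitative Gaussian approximation. Berry--Esseen for $\tau_k$ (a sum of $B$ i.i.d.\ geometrics) gives an additive error of order $1/\sqrt B$ \emph{in probability}, uniformly in the argument. In your union-bound step you need $\Del\cdot\Pr[\tau_k<s^*]=o(1)$, but Berry--Esseen only gives $\Pr[\tau_k<s^*]\le\Phi(-2\sqrt{2\ln\Del})+O(1/\sqrt B)$, and the error term dominates the Gaussian tail $\Del^{-4+o(1)}$; after the union bound you are left with $\Del/\sqrt B$, which is not $o(1)$ under $B=\omega(\ln\Del)$ --- you would need $B=\omega(\Del^2)$. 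The same slack contaminates the extreme-value constant on the other side: at depth $c\sqrt{\ln\Del}$ with $c$ near $\sqrt 2$, the marginal tail $\Phi(-c\sqrt{\ln\Del})$ is of order $\Del^{-c^2/2}/\sqrt{\ln\Del}$ and can again be swamped by $1/\sqrt B$. Your closing paragraph conflates the Berry--Esseen slack in \emph{location} ($T/\sqrt B$, which is indeed $o(T\sqrt{\ln\Del/B})$) with the slack in \emph{probability} ($1/\sqrt B$); it is the latter that breaks both directions.

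The paper sidesteps this by replacing the CLT approximation with exact one-sided inequalities that carry no error term. For $\kappa\ge\sqrt 2$ it combines negative association (to factorize the joint event into a product) with Slud's inequality $\Pr[\mathrm{Bin}\le B-1]\le\Phi(\cdot)$, reducing everything to $\int_0^1\Phi^{\Del}\big(\sqrt B(1-z)/\sqrt z\big)\,dz$, from which the constant $\sqrt 2$ drops out of a direct Gaussian computation. For $\kappa\le 2\sqrt 2$ it uses the Mitzenmacher--Upfal Poissonization lemma (an exact factor-two comparison between the multinomial and independent Poissons for monotone events) together with the Poisson upper-tail bound $\Pr[\mathrm{Pois}(xB)\ge B]\le\sfe^{-Bx^2/2}$. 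Your union-bound argument for $\kappa\le 2\sqrt 2$ can in fact be repaired by swapping Berry--Esseen for a direct Chernoff bound $\Pr[\mathrm{Bin}(s^*,B/T)\ge B]\le\sfe^{-4\ln\Del}=\Del^{-4}$, which has no additive slack. Finally, your argument for the \emph{strict} inequality $\kappa>\sqrt 2$ via ``genuine'' negative correlation does not hold up: the pairwise multinomial correlations are $O(B/T)\to 0$ as $T\to\infty$, so the NA slack vanishes in the limit; the paper's proof in fact only yields $\kappa\ge\sqrt 2-\ep$ for every $\ep>0$.
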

Now we show how the theorem above partially implies our main Theorem~\ref{thm:main-3}.
\begin{proof}
Consider a given edge $e \in  E$. WLOG assume $\cS_{e}=[\Del] =\{1,2,\ldots, \Del\}$. Let $\bfU=(U_k| k \in[\Del])$ be the random cost of resources in $[\Del]$ during each round when all budgets remain. Following the same analysis in the proof of Theorem~\ref{thm:sa-1}, we have $\E[U_k] \le B/T$ for each $k \in [\Del]$.
For each given $t \in [T]$, let $\cU^{(t)} \in \mathbf{Z}^\Del$ be the total cost of resources in $[\Del]$ during the first $(t-1)$ rounds when running $\nadap$. Observe that (1)  $e$ is safe at $t$ iff the cost of each resource is no larger than $B-1$, \ie $\cU^{(t)} \le (B-1) \cdot \bo$ (entry-wisely and $\bo$ is the vector of ones); (2) $\cU^{(t)} \le \bfU^{(t)}$ (entry-wisely), where $\bfU^{(t)}$ is the sum of $t-1$ \iid copies of $\bfU$. The first is valid since we consider safe policies only, while the second follows that some edge might be unsafe during some previous time $t'<t$. Let $M_{e}$ be the total (random) utilities gained on $e$ in $\nadap$. For each $t \in [T]$, let $X_{e}=1$ indicate that $e$ arrives at $t$ (\ie the online vertex arrives and $e$ gets sampled in $\samp$) and $\SF_t=1$ indicate that $e$ is safe at $t$. Thus, we have
\begin{align*}
\E[M_{e}] &=\sum_{t=1}^T \E[X_e \cdot \SF_{t} \cdot W_e]= x^*_{e} \cdot w_e \cdot \frac{1}{T} \sum_{t=1}^T \E[\SF_t]
=  x^*_{e} \cdot w_e \cdot \frac{1}{T} \sum_{t=1}^T \Pr[\cU^{(t)} \le (B-1) \cdot \bo]\\
& \ge  x^*_{e} \cdot w_e \cdot \frac{1}{T} \sum_{t=1}^T \Pr[\bfU^{(t)} \le (B-1) \cdot \bo]  \ge  x^*_{e} \cdot w_e \cdot  \bp{1- \kappa \sqrt{\frac{\ln \Del}{B}}(1+o(1))}. 
\end{align*}
The last inequality follows from Theorem~\ref{thm:supp-1}. By linearity of expectation, we claim that \nadap achieves a CR at least $1- \kappa \sqrt{\frac{\ln \Del}{B}}(1+o(1))$ with $\kappa \in (\sqrt{2}, 2\sqrt{2}]$. The asymptotically optimality of \nadap can be seen from Theorems~\ref{thm:supp-2} and~\ref{thm:supp-3} for cases of general $\Del$ and $\Del=1$, respectively. Thus, we are done.
\end{proof}
The harness results of Theorem~\ref{thm:main-3} follows from the two theorems below. Due to the space limit, we defer the full proofs of Theorems~\ref{thm:supp-1}, \ref{thm:supp-2} and~\ref{thm:supp-3} to Appendix; see Section~\ref{sec:two-thms}. 
\begin{theorem}\label{thm:supp-2}
No algorithm can achieve a CR asymptotically  better than $1- \kappa \sqrt{\frac{\ln \Del}{B}}(1+o(1))$ when $T \gg B=\omega(\ln \Del)$ and $\Del \gg 1$, where $\kappa$ is the same value as stated in Theorem~\ref{thm:supp-1}. 
\end{theorem}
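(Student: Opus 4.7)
The plan is to exhibit a family of \mb instances, parameterized by $T$, $B$, and $\Del$, on which the ratio between any safe online policy's expected utility and the value of $\LP$~\eqref{obj-1} approaches the claimed bound. The construction is driven by the worst-case cost distribution $\cD^*$ furnished by Theorem~\ref{thm:supp-1}.

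First, I would build the following instance. Take a star graph with $|I|=1$, $|J|=|E|=T$, $r_j=1$ for every $j\in J$, $K=\Del$ resources, and budgets $B_k=B$ for all $k\in[\Del]$. For every edge $e\in E$, assign deterministic utility $W_e\equiv 1$ and the random cost vector $\cA_e\sim\cD^*$, which satisfies $\cA_e\in\{0,1\}^{\Del}$ and $\E[A_{e,k}]\le B/T$ for every $k$ by the feasibility conditions in Theorem~\ref{thm:supp-1}. The assignment $x_e=1$ for every $e\in E$ is then feasible for \LP~\eqref{obj-1}, since $\sum_{e\in E_j}x_e=1=r_j$ and $\sum_{e\in E}a_{e,k}x_e\le T\cdot (B/T)=B$. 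Hence the LP optimum on this instance is at least $T$.

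Second, I would argue that the expected total utility of every safe online policy is at most $\sum_{t=1}^T\Pr_{\cD^*}[\bfU^{(t)}\le (B-1)\bo]$, with $\bo$ denoting the all-ones vector in $\{0,1\}^{\Del}$. Because all edges share the same cost distribution, edge selection is immaterial; the only decision at each round is whether to match. Let $\{\xi_s\}_{s\ge 1}$ be i.i.d.\ copies of $\cD^*$ describing the actual consumption of successive matches, and let $\tau=\min\{s:\sum_{i=1}^s \xi_i\not\le (B-1)\bo\}$ be the index of the first match that exhausts some resource. A pathwise coupling on $\{\xi_s\}$ yields that no safe policy can match more than $\min(\tau,T)$ times, with equality for the greedy ``match whenever safe'' policy. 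Consequently every safe policy's expected utility is bounded by $\E[\min(\tau,T)]=\sum_{t=1}^T \Pr[\tau\ge t]=\sum_{t=1}^T \Pr_{\cD^*}[\bfU^{(t)}\le (B-1)\bo]$, adopting the convention in Theorem~\ref{thm:supp-1} that $\bfU^{(t)}$ is the sum of $t-1$ i.i.d.\ copies.

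Third, I would invoke Theorem~\ref{thm:supp-1} directly: in the regime $T\gg B=\omega(\ln\Del)$ and $\Del\to\infty$, one has $(1/T)\sum_{t=1}^T \Pr_{\cD^*}[\bfU^{(t)}\le (B-1)\bo]=1-\kappa\sqrt{\ln\Del/B}(1+o(1))$ with $\sqrt{2}<\kappa\le 2\sqrt{2}$. Combined with the lower bound $T$ on $\LP$~\eqref{obj-1}, no safe policy achieves a ratio against \LP~\eqref{obj-1} asymptotically better than $1-\kappa\sqrt{\ln\Del/B}(1+o(1))$, which is the claim of Theorem~\ref{thm:supp-2}.

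The main obstacle I anticipate is the rigorous justification of the greedy dominance in step two. The subtlety is that an adaptive policy might try to defer matches hoping to preserve budget across correlated coordinates of $\cA_e$; however, because all edges share one distribution $\cD^*$, the $s$-th match taken by any policy consumes a fresh independent copy of $\cD^*$, so the pathwise coupling via $\{\xi_s\}$ forces the total match count to be bounded by $\min(\tau,T)$ almost surely. A minor bookkeeping detail is reconciling the indexing between $\bfU^{(t)}$ (sum of $t-1$ copies in Theorem~\ref{thm:supp-1}) and $\sum_{i\le s}\xi_i$, but this is routine.
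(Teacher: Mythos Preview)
Your proposal is correct and follows essentially the same approach as the paper: construct a single-edge-type instance with cost distribution $\cD^*$, show the LP value is $T$, argue that greedy dominates any safe policy (your coupling via $\{\xi_s\}$ makes rigorous what the paper states informally), and then invoke Theorem~\ref{thm:supp-1}. The only cosmetic difference is that the paper takes $|J|=1$ with $r_j=T$ and a single edge $e$ (so $x_e^*=T$), whereas you take $|J|=|E|=T$ with $r_j=1$ and $x_e=1$; these are equivalent since all edges share the same cost law.
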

\begin{theorem}\label{thm:supp-3}
No algorithm can achieve a CR asymptotically  better than $1- \frac{1}{\sqrt{2 \pi B}}(1+o(1))$ when  $T \gg B\gg 1$ and $\Del=1$.
\end{theorem}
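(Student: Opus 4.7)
The plan is to construct an explicit single-edge instance on which the benchmark \LP~\eqref{obj-1} evaluates to $T$ while the best safe online algorithm achieves expected total utility at most $T\bigl(1 - \tfrac{1}{\sqrt{2\pi B}}(1+o(1))\bigr)$. Take $|I|=|J|=|E|=1$ with the single edge $e=(i,j)$, a single resource of budget $B$, $p_j=1$ and $r_j=T$, deterministic utility $W_e=1$, and Bernoulli cost $\cA_e \sim \Ber(B/T)$. Setting $x_e^* = T$ is feasible because $x_e^* \cdot \E[A_e] = T \cdot (B/T) = B$ and $x_e^* \le r_j = T$, so the LP optimum is at least $T$ (and clearly at most $T$), giving a benchmark value of $T$.

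Next I would argue that any safe online algorithm is weakly dominated by the policy \emph{match whenever the edge is safe}: every arrival is the same edge with the same deterministic positive utility, so rejecting can only trade a sure unit gain for the option of later units under tighter future budgets. Hence the optimum expected utility of any safe online algorithm equals $\E[\min(T, T')]$, where $T'$ is the number of \iid $\Ber(B/T)$ trials needed to accumulate $B$ successes, \ie a negative binomial with parameters $B$ and $B/T$. This matches exactly the quantity flagged in remark~(4) following Theorem~\ref{thm:main-3}.

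The core analytic step is to show that in the regime $T \gg B \gg 1$,
\[
\frac{\E[\min(T,T')]}{T} \;=\; 1 - \frac{1}{\sqrt{2\pi B}}\bigl(1+o(1)\bigr).
\]
Write $\E[\min(T,T')] = T - \E[(T-T')^+]$ and note $\E[T']=T$ together with $\Var[T'] = B\cdot \tfrac{1-B/T}{(B/T)^2} = \tfrac{T^2}{B}(1-B/T)$. By the central limit theorem applied to the sum of $B$ \iid geometric waiting times, the standardized variable $(T-T')\big/(T/\sqrt{B})$ converges in distribution to $-Z$ with $Z\sim\mathcal{N}(0,1)$. Therefore $\E[(T-T')^+] = (T/\sqrt{B})\,\E[Z^+]\,(1+o(1)) = T/\sqrt{2\pi B}\,(1+o(1))$, and dividing by the LP value $T$ yields the stated hardness bound.

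The main obstacle I anticipate is not the CLT itself but ensuring the $(1+o(1))$ is legitimate, since convergence in distribution does not automatically imply convergence of the mean of the truncated positive part. Concretely, I would verify uniform integrability of $((T-T')/(T/\sqrt{B}))^+$ by a Berry--Esseen-type tail bound on the negative binomial, or, more cleanly, rewrite $\E[(T-T')^+] = \sum_{s=0}^{T-1}\Pr[T' \le s]$ and evaluate this finite sum directly via a local central limit theorem for the negative binomial. Secondary minor issues are verifying that the $(1-B/T)$ factor in $\Var[T']$ is absorbable into $o(1)$ under $T\gg B$, and formally checking that no ``smarter'' adaptive safe policy can beat match-whenever-safe (a one-line exchange argument, since no future match carries more value than the current one).
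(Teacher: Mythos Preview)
Your proposal is correct and uses the same hard instance and the same reduction to the greedy ``match whenever safe'' policy as the paper. The difference lies only in how you evaluate $\lim_{T\to\infty}\E[\min(T,T')]/T$. The paper does this via Wald's equation: letting $X=\sum_{i\le \widehat{T}} X_i$ with $X_i\sim\Ber(B/T)$ and $\widehat{T}$ the stopping time, one gets $\E[\widehat{T}]/T=\E[X]/B$, and $X$ is distributed as $\min\bigl(B,\mathrm{Bin}(T,B/T)\bigr)\to\min\bigl(B,\Pois(B)\bigr)$, so the answer reduces to the known correlation-gap / Adwords constant $\E[\min(B,\Pois(B))]/B=1-\tfrac{1}{\sqrt{2\pi B}}(1+o(1))$ cited from~\cite{devanur2012asymptotically,alaei2012online,yan2011mechanism}. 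Your route instead computes $\E[(T-T')^+]$ directly via the CLT for the negative binomial and $\E[Z^+]=1/\sqrt{2\pi}$. The Wald reduction is slicker because it sidesteps the uniform-integrability issue you flag and lands on a result already in the literature; your approach is more self-contained but, as you note, requires an extra tail bound (Berry--Esseen or a direct sum over $\Pr[T'\le s]$) to upgrade distributional convergence to convergence of the truncated mean. Either works.
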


\section{Conclusions and Future work}

In this paper, we proposed a model of multi-budgeted online stochastic matching to study assignment problems existing in a wide range of online-matching markets, including online recommendations, rideshares, and crowdsourcing markets. The model features correlated stochastic cost and utility for each assignment. We presented two LP-based parameterized algorithms and analyzed their performance in detail under competitive ratio and variance. 

Our work opens a few research directions. The most urgent one is to close the competitive-ratio gap shown in Figure~\ref{fig:comp}. The second is to identify the exact value of $\kappa$ stated in Theorem~\ref{thm:main-3} under the large-budget assumption. Right now, we can only get an upper and lower bound for it with a gap of $\sqrt{2}$. Additionally, it will be interesting to see if the results in this paper can be generalized to the case of fractional cost values. In other words, can we get similar results if we assume each assignment takes a vector-valued cost from $[0,1]^K$ instead of $\{0,1\}^K$? Lastly, we would like to see if the current variance-variance techniques can be generalized to cope with a more ambitious goal, \ie upper bounding the variance on the total utilities achieved. By previous analyses, we perhaps need to assume utilities on all assignments are upper bounded by some parameter, which is expected to play a critical role in the variance analysis.

 \newpage
\bibliographystyle{unsrtnat}
\bibliography{stable_ref}

\begin{thebibliography}{53}
\providecommand{\natexlab}[1]{#1}
\providecommand{\url}[1]{\texttt{#1}}
\expandafter\ifx\csname urlstyle\endcsname\relax
  \providecommand{\doi}[1]{doi: #1}\else
  \providecommand{\doi}{doi: \begingroup \urlstyle{rm}\Url}\fi

\bibitem[Zhu et~al.(2014)Zhu, Harrington, Li, and Tang]{zhu2014bundle}
Tao Zhu, Patrick Harrington, Junjun Li, and Lei Tang.
\newblock Bundle recommendation in ecommerce.
\newblock In \emph{Proceedings of the 37th international ACM SIGIR conference
  on Research {\&} development in information retrieval}, pages 657--666, 2014.

\bibitem[Zhu et~al.(2017)Zhu, Jin, Tan, Pan, Zeng, Li, and
  Gai]{zhu2017optimized}
Han Zhu, Junqi Jin, Chang Tan, Fei Pan, Yifan Zeng, Han Li, and Kun Gai.
\newblock Optimized cost per click in taobao display advertising.
\newblock In \emph{Proceedings of the 23rd ACM SIGKDD International Conference
  on Knowledge Discovery and Data Mining}, pages 2191--2200, 2017.

\bibitem[Abbassi et~al.(2015)Abbassi, Bhaskara, and
  Misra]{abbassi2015optimizing}
Zeinab Abbassi, Aditya Bhaskara, and Vishal Misra.
\newblock Optimizing display advertising in online social networks.
\newblock In \emph{Proceedings of the 24th International Conference on World
  Wide Web}, pages 1--11, 2015.

\bibitem[Ho and Vaughan(2012)]{ho2012online}
Chien-Ju Ho and Jennifer~Wortman Vaughan.
\newblock Online task assignment in crowdsourcing markets.
\newblock In \emph{Twenty-sixth AAAI conference on artificial intelligence},
  2012.

\bibitem[Yao et~al.(2018)Yao, Wu, Ke, Tang, Jia, Lu, Gong, Ye, and
  Li]{Yao2018deep}
Huaxiu Yao, Fei Wu, Jintao Ke, Xianfeng Tang, Yitian Jia, Siyu Lu, Pinghua
  Gong, Jieping Ye, and Zhenhui Li.
\newblock Deep multi-view spatial-temporal network for taxi demand prediction.
\newblock AAAI '18, pages 2588--2595, 2018.

\bibitem[Li et~al.(2018)Li, Fu, Wang, Shahabi, Ye, and
  Liu]{DBLP:conf/kdd/LiFWSYL18}
Yaguang Li, Kun Fu, Zheng Wang, Cyrus Shahabi, Jieping Ye, and Yan Liu.
\newblock Multi-task representation learning for travel time estimation.
\newblock KDD '18, pages 1695--1704, 2018.

\bibitem[Wang et~al.(2018)Wang, Fu, and Ye]{DBLP:conf/kdd/WangFY18}
Zheng Wang, Kun Fu, and Jieping Ye.
\newblock Learning to estimate the travel time.
\newblock KDD '18, pages 858--866, 2018.

\bibitem[Zhao et~al.(2019)Zhao, Xu, Shi, Tong, Zhou, and Zeng]{xu-aaai-19}
Boming Zhao, Pan Xu, Yexuan Shi, Yongxin Tong, Zimu Zhou, and Yuxiang Zeng.
\newblock Preference-aware task assignment in on-demand taxi dispatching: An
  online stable matching approach.
\newblock AAAI '19, 2019.

\bibitem[Dickerson et~al.(2018{\natexlab{a}})Dickerson, Sankararaman,
  Srinivasan, and Xu]{dickerson2018assigning}
John~P Dickerson, Karthik~Abinav Sankararaman, Aravind Srinivasan, and Pan Xu.
\newblock Assigning tasks to workers based on historical data: Online task
  assignment with two-sided arrivals.
\newblock In \emph{Proceedings of the 17th International Conference on
  Autonomous Agents and MultiAgent Systems}, pages 318--326. International
  Foundation for Autonomous Agents and Multiagent Systems, 2018{\natexlab{a}}.

\bibitem[Fata et~al.(2019)Fata, Ma, and Simchi-Levi]{fata2019multi}
Elaheh Fata, Will Ma, and David Simchi-Levi.
\newblock Multi-stage and multi-customer assortment optimization with inventory
  constraints.
\newblock \emph{Available at SSRN 3443109}, 2019.

\bibitem[Feldman et~al.(2009)Feldman, Mehta, Mirrokni, and
  Muthukrishnan]{feldman2009online}
Jon Feldman, Aranyak Mehta, Vahab~S. Mirrokni, and S.~Muthukrishnan.
\newblock Online stochastic matching: Beating 1-1/e.
\newblock In \emph{50th Annual {IEEE} Symposium on Foundations of Computer
  Science}, FOCS '09, pages 117--126, 2009.

\bibitem[Huang and Shu(2021)]{huang2021online}
Zhiyi Huang and Xinkai Shu.
\newblock Online stochastic matching, poisson arrivals, and the natural linear
  program.
\newblock In \emph{Proceedings of the 53rd Annual ACM SIGACT Symposium on
  Theory of Computing}, pages 682--693, 2021.

\bibitem[Haeupler et~al.(2011)Haeupler, Mirrokni, and
  Zadimoghaddam]{haeupler2011online}
Bernhard Haeupler, Vahab~S. Mirrokni, and Morteza Zadimoghaddam.
\newblock Online stochastic weighted matching: Improved approximation
  algorithms.
\newblock In \emph{Internet and Network Economics - 7th International
  Workshop}, WINE '11, pages 170--181, 2011.

\bibitem[Manshadi et~al.(2012)Manshadi, Gharan, and Saberi]{manshadi2012online}
Vahideh~H Manshadi, Shayan~Oveis Gharan, and Amin Saberi.
\newblock Online stochastic matching: Online actions based on offline
  statistics.
\newblock \emph{Mathematics of Operations Research}, 37\penalty0 (4), 2012.

\bibitem[Jaillet and Lu(2013)]{jaillet2013online}
Patrick Jaillet and Xin Lu.
\newblock Online stochastic matching: New algorithms with better bounds.
\newblock \emph{Mathematics of Operations Research}, 39\penalty0 (3), 2013.

\bibitem[Brubach et~al.(2016)Brubach, Sankararaman, Srinivasan, and
  Xu]{brubach2016new}
Brian Brubach, Karthik~Abinav Sankararaman, Aravind Srinivasan, and Pan Xu.
\newblock New algorithms, better bounds, and a novel model for online
  stochastic matching.
\newblock In \emph{24th Annual European Symposium on Algorithms (ESA 2016)}.
  Schloss Dagstuhl-Leibniz-Zentrum fuer Informatik, 2016.

\bibitem[Raghavan(1988)]{raghavan1988probabilistic}
Prabhakar Raghavan.
\newblock Probabilistic construction of deterministic algorithms: approximating
  packing integer programs.
\newblock \emph{Journal of Computer and System Sciences}, 37\penalty0
  (2):\penalty0 130--143, 1988.

\bibitem[Dickerson et~al.(2019)Dickerson, Sankararaman, Sarpatwar, Srinivasan,
  Wu, and Xu]{aamas-19}
John~P Dickerson, Karthik~Abinav Sankararaman, Kanthi~Kiran Sarpatwar, Aravind
  Srinivasan, Kun-Lung Wu, and Pan Xu.
\newblock Online resource allocation with matching constraints.
\newblock In \emph{Proceedings of the 18th International Conference on
  Autonomous Agents and MultiAgent Systems}, pages 1681--1689. International
  Foundation for Autonomous Agents and Multiagent Systems, 2019.

\bibitem[Kesselheim et~al.(2013)Kesselheim, Radke, T{\"o}nnis, and
  V{\"o}cking]{kess13}
Thomas Kesselheim, Klaus Radke, Andreas T{\"o}nnis, and Berthold V{\"o}cking.
\newblock An optimal online algorithm for weighted bipartite matching and
  extensions to combinatorial auctions.
\newblock In \emph{European Symposium on Algorithms}, pages 589--600. Springer,
  2013.

\bibitem[Devanur et~al.(2012)Devanur, Sivan, and
  Azar]{devanur2012asymptotically}
Nikhil~R Devanur, Balasubramanian Sivan, and Yossi Azar.
\newblock Asymptotically optimal algorithm for stochastic adwords.
\newblock In \emph{Proceedings of the 13th ACM Conference on Electronic
  Commerce}, pages 388--404. ACM, 2012.

\bibitem[Alaei et~al.(2012)Alaei, Hajiaghayi, and Liaghat]{alaei2012online}
Saeed Alaei, MohammadTaghi Hajiaghayi, and Vahid Liaghat.
\newblock Online prophet-inequality matching with applications to ad
  allocation.
\newblock In \emph{Proceedings of the 13th ACM Conference on Electronic
  Commerce}, pages 18--35, 2012.

\bibitem[Yan(2011)]{yan2011mechanism}
Qiqi Yan.
\newblock Mechanism design via correlation gap.
\newblock In \emph{Proceedings of the twenty-second annual ACM-SIAM symposium
  on Discrete Algorithms}, pages 710--719. Society for Industrial and Applied
  Mathematics, 2011.

\bibitem[Ma(2014)]{ma2014improvements}
Will Ma.
\newblock Improvements and generalizations of stochastic knapsack and
  multi-armed bandit approximation algorithms.
\newblock In \emph{SODA}, 2014.

\bibitem[Adamczyk et~al.(2015)Adamczyk, Grandoni, and
  Mukherjee]{adamczyk2015improved}
Marek Adamczyk, Fabrizio Grandoni, and Joydeep Mukherjee.
\newblock Improved approximation algorithms for stochastic matching.
\newblock In \emph{ESA}. 2015.

\bibitem[Brubach et~al.(2020)Brubach, Sankararaman, Srinivasan, and
  Xu]{brubach2017}
Brian Brubach, Karthik~Abinav Sankararaman, Aravind Srinivasan, and Pan Xu.
\newblock Attenuate locally, win globally: An attenuation-based framework for
  online stochastic matching with timeouts.
\newblock \emph{Algorithmica}, 82\penalty0 (1):\penalty0 64--87, 2020.

\bibitem[Feng et~al.(2019)Feng, Niazadeh, and Saberi]{feng2019linear}
Yiding Feng, Rad Niazadeh, and Amin Saberi.
\newblock Linear programming based online policies for real-time assortment of
  reusable resources.
\newblock \emph{Available at SSRN 3421227}, 2019.

\bibitem[Dickerson et~al.(2018{\natexlab{b}})Dickerson, Sankararaman,
  Srinivasan, and Xu]{xuAAAI18}
John~P. Dickerson, Karthik~Abinav Sankararaman, Aravind Srinivasan, and Pan Xu.
\newblock Allocation problems in ride-sharing platforms: Online matching with
  offline reusable resources.
\newblock AAAI '18, pages 1007--1014, 2018{\natexlab{b}}.

\bibitem[Dubhashi and Ranjan(1996)]{dubhashi1996balls}
Devdatt~P Dubhashi and Desh Ranjan.
\newblock Balls and bins: A study in negative dependence.
\newblock \emph{BRICS Report Series}, 3\penalty0 (25), 1996.

\bibitem[Joag-Dev and Proschan(1983)]{joag1983negative}
Kumar Joag-Dev and Frank Proschan.
\newblock Negative association of random variables with applications.
\newblock \emph{The Annals of Statistics}, pages 286--295, 1983.

\bibitem[Shao(2000)]{shao2000comparison}
Qi-Man Shao.
\newblock A comparison theorem on moment inequalities between negatively
  associated and independent random variables.
\newblock \emph{Journal of Theoretical Probability}, 13\penalty0 (2):\penalty0
  343--356, 2000.

\bibitem[Agrawal et~al.(2014)Agrawal, Wang, and Ye]{agrawal2014dynamic}
Shipra Agrawal, Zizhuo Wang, and Yinyu Ye.
\newblock A dynamic near-optimal algorithm for online linear programming.
\newblock \emph{Operations Research}, 62\penalty0 (4):\penalty0 876--890, 2014.

\bibitem[Agrawal and Devanur(2014)]{agrawal2014fast}
Shipra Agrawal and Nikhil~R Devanur.
\newblock Fast algorithms for online stochastic convex programming.
\newblock In \emph{SODA}, 2014.

\bibitem[Buchbinder and Naor(2009)]{buchbinder2009online}
Niv Buchbinder and Joseph Naor.
\newblock Online primal-dual algorithms for covering and packing.
\newblock \emph{Mathematics of Operations Research}, 34\penalty0 (2):\penalty0
  270--286, 2009.

\bibitem[Buchbinder et~al.(2009)Buchbinder, Naor, et~al.]{buchbinder2009design}
Niv Buchbinder, Joseph~Seffi Naor, et~al.
\newblock The design of competitive online algorithms via a primal--dual
  approach.
\newblock \emph{Foundations and Trends{\textregistered} in Theoretical Computer
  Science}, 3\penalty0 (2--3):\penalty0 93--263, 2009.

\bibitem[Kesselheim et~al.(2014)Kesselheim, T\"{o}nnis, Radke, and
  V\"{o}cking]{kess-stoc}
Thomas Kesselheim, Andreas T\"{o}nnis, Klaus Radke, and Berthold V\"{o}cking.
\newblock Primal beats dual on online packing lps in the random-order model.
\newblock In \emph{Proceedings of the Forty-Sixth Annual ACM Symposium on
  Theory of Computing}, STOC ’14, page 303–312, New York, NY, USA, 2014.
  Association for Computing Machinery.
\newblock ISBN 9781450327107.
\newblock \doi{10.1145/2591796.2591810}.
\newblock URL \url{https://doi.org/10.1145/2591796.2591810}.

\bibitem[Devanur et~al.(2019)Devanur, Jain, Sivan, and Wilkens]{jacm19}
Nikhil~R. Devanur, Kamal Jain, Balasubramanian Sivan, and Christopher~A.
  Wilkens.
\newblock Near optimal online algorithms and fast approximation algorithms for
  resource allocation problems.
\newblock \emph{J. ACM}, 66\penalty0 (1), January 2019.
\newblock ISSN 0004-5411.
\newblock \doi{10.1145/3284177}.
\newblock URL \url{https://doi.org/10.1145/3284177}.

\bibitem[Bansal et~al.(2012)Bansal, Gupta, Li, Mestre, Nagarajan, and
  Rudra]{bansal2012lp}
Nikhil Bansal, Anupam Gupta, Jian Li, Juli{\'a}n Mestre, Viswanath Nagarajan,
  and Atri Rudra.
\newblock When lp is the cure for your matching woes: Improved bounds for
  stochastic matchings.
\newblock \emph{Algorithmica}, 63\penalty0 (4):\penalty0 733--762, 2012.

\bibitem[Brubach et~al.(2019)Brubach, Sankararaman, Srinivasan, and
  Xu]{TALG-20}
Brian Brubach, Karthik~A. Sankararaman, Aravind Srinivasan, and Pan Xu.
\newblock Algorithms to approximate column-sparse packing problems.
\newblock \emph{ACM Trans. Algorithms}, 16\penalty0 (1), November 2019.
\newblock ISSN 1549-6325.
\newblock \doi{10.1145/3355400}.
\newblock URL \url{https://doi.org/10.1145/3355400}.

\bibitem[Baveja et~al.(2018)Baveja, Chavan, Nikiforov, Srinivasan, and
  Xu]{baveja2018improved}
Alok Baveja, Amit Chavan, Andrei Nikiforov, Aravind Srinivasan, and Pan Xu.
\newblock Improved bounds in stochastic matching and optimization.
\newblock \emph{Algorithmica}, 80\penalty0 (11):\penalty0 3225--3252, 2018.

\bibitem[Slivkins et~al.(2019)]{slivkins2019introduction}
Aleksandrs Slivkins et~al.
\newblock Introduction to multi-armed bandits.
\newblock \emph{Foundations and Trends{\textregistered} in Machine Learning},
  12\penalty0 (1-2):\penalty0 1--286, 2019.

\bibitem[Wu et~al.(2015)Wu, Srikant, Liu, and Jiang]{wu2015algorithms}
Huasen Wu, Rayadurgam Srikant, Xin Liu, and Chong Jiang.
\newblock Algorithms with logarithmic or sublinear regret for constrained
  contextual bandits.
\newblock \emph{Advances in Neural Information Processing Systems}, 28, 2015.

\bibitem[Balseiro and Gur(2019)]{balseiro2019learning}
Santiago~R Balseiro and Yonatan Gur.
\newblock Learning in repeated auctions with budgets: Regret minimization and
  equilibrium.
\newblock \emph{Management Science}, 65\penalty0 (9):\penalty0 3952--3968,
  2019.

\bibitem[Balseiro et~al.(2022)Balseiro, Lu, and Mirrokni]{balseiro2022best}
Santiago~R Balseiro, Haihao Lu, and Vahab Mirrokni.
\newblock The best of many worlds: Dual mirror descent for online allocation
  problems.
\newblock \emph{Operations Research}, 2022.

\bibitem[Golrezaei et~al.(2021)Golrezaei, Jaillet, Liang, and
  Mirrokni]{golrezaei2021bidding}
Negin Golrezaei, Patrick Jaillet, Jason Cheuk~Nam Liang, and Vahab Mirrokni.
\newblock Bidding and pricing in budget and roi constrained markets.
\newblock \emph{arXiv preprint arXiv:2107.07725}, 2021.

\bibitem[van~der Hoeven et~al.(2022)van~der Hoeven, Zhivotovskiy, and
  Cesa-Bianchi]{van2022regret}
Dirk van~der Hoeven, Nikita Zhivotovskiy, and Nicol{\`o} Cesa-Bianchi.
\newblock A regret-variance trade-off in online learning.
\newblock \emph{arXiv preprint arXiv:2206.02656}, 2022.

\bibitem[Vakili et~al.(2019)Vakili, Boukouvalas, and Zhao]{vakili2019decision}
Sattar Vakili, Alexis Boukouvalas, and Qing Zhao.
\newblock Decision variance in risk-averse online learning.
\newblock In \emph{2019 IEEE 58th Conference on Decision and Control (CDC)},
  pages 2738--2744. IEEE, 2019.

\bibitem[Steinbach(2001)]{steinbach2001markowitz}
Marc~C Steinbach.
\newblock Markowitz revisited: Mean-variance models in financial portfolio
  analysis.
\newblock \emph{SIAM review}, 43\penalty0 (1):\penalty0 31--85, 2001.

\bibitem[Chan and Lau(2012)]{chan2012linear}
Yuk~Hei Chan and Lap~Chi Lau.
\newblock On linear and semidefinite programming relaxations for hypergraph
  matching.
\newblock \emph{Mathematical programming}, 135\penalty0 (1-2):\penalty0
  123--148, 2012.

\bibitem[Adell and Jodr{\'a}(2005)]{adell2005median}
Jos{\'e}~A Adell and P~Jodr{\'a}.
\newblock The median of the poisson distribution.
\newblock \emph{Metrika}, 61\penalty0 (3):\penalty0 337--346, 2005.

\bibitem[Canonne(2020)]{pois-tail}
Cl\'ement Canonne.
\newblock A short note on poisson tail bounds.
\newblock
  \url{http://www.cs.columbia.edu/~ccanonne/files/misc/2017-poissonconcentration.pdf},
  2020.
\newblock Accessed: 2020-02-01.

\bibitem[Slud(1977)]{slud1977distribution}
Eric~V Slud.
\newblock Distribution inequalities for the binomial law.
\newblock \emph{The Annals of Probability}, pages 404--412, 1977.

\bibitem[Borjesson and Sundberg(1979)]{borjesson1979simple}
P~Borjesson and C-E Sundberg.
\newblock Simple approximations of the error function q (x) for communications
  applications.
\newblock \emph{IEEE Transactions on Communications}, 27\penalty0 (3):\penalty0
  639--643, 1979.

\bibitem[Mitzenmacher and Upfal(2017)]{mitzenmacher2017probability}
Michael Mitzenmacher and Eli Upfal.
\newblock \emph{Probability and computing: Randomization and probabilistic
  techniques in algorithms and data analysis}.
\newblock Cambridge university press, 2017.

\end{thebibliography}

\appendix
\onecolumn
\section{Missing proofs in Section~\ref{sec:samp-cr}}
Recall that $\Z(e)=(Z_k)_{k\in \cS_e}$ is a random binary vector of length $\Del$ with $\E[Z_k] \le \alp B_k/T$ for each $k \in \cS_e$. Consider a given $t \in [T]$. Let $Z^{(t)}_k$ be the sum of $t$ \iid copies of $Z_k$ for each $k$. Let $\cD^*$ be such a distribution on $\Z(e)$  that the value $\Pr_{\Z(e) \sim \cD^*}\bb{ \exists k\in \cS_e, {Z_{k}^{(t)}  \ge B_k}}$ gets maximized. 

 \begin{lemma}\label{lem:supp-1}
 $\cD^*$ can always be realized at such a configuration that $\Z(e)=\bo(Z_k)$ with probability $\alp B_k/T$ for each $k \in \cS_e$ and $\Z(e)=\bz$ with probability $1-\sum_{k \in \cS_e} \alp B_k/T $, where $\bo(Z_k)$ refers to the standard basis vector with the only entry being one at the position of $Z_k$, and $\bz$ refers to a zero vector of length $|\cS_e|=\Del$.  \end{lemma}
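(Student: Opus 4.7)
The plan is to prove the lemma via a reduction to the supermodular stochastic order. First, I would observe that it suffices to consider distributions whose marginals are tight at $\E[Z_k]=\alp B_k/T$: given any feasible $\cD$ with some marginal strictly smaller, I can couple $Z_k$ with an independent Bernoulli $X_k$ so that $Z_k':=Z_k\vee X_k$ has the desired tight mean; because $Z_k'\ge Z_k$ pointwise, the overflow event on the resulting $t$-fold sums under the coupled distribution pointwise dominates that under $\cD$. After this reduction, the goal is to show that the one-hot $\cD^*$ maximizes $\Pr[\exists k : Z_k^{(t)}\ge B_k]$ among all distributions on $\{0,1\}^{|\cS_e|}$ whose marginals are exactly $\alp B_k/T$.

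For the optimality step, the key ingredient is the supermodular stochastic order on multivariate distributions together with the following \emph{decoupling swap}. Pick indices $k_1\ne k_2$ and a partial configuration $\vec{u}\in\{0,1\}^{|\cS_e|-2}$ for which $\cD((1,1,\vec{u}))>0$ and $\cD((0,0,\vec{u}))>0$, and transfer mass $\epsilon>0$ by subtracting from $\cD((1,1,\vec{u}))$ and $\cD((0,0,\vec{u}))$ and adding to $\cD((1,0,\vec{u}))$ and $\cD((0,1,\vec{u}))$. A direct check verifies that all marginals are preserved, and for every supermodular $\phi:\{0,1\}^{|\cS_e|}\to\mathbb{R}$ the change in $\E[\phi(\Z)]$ equals $\epsilon\big[\phi((0,0,\vec{u}))+\phi((1,1,\vec{u}))-\phi((1,0,\vec{u}))-\phi((0,1,\vec{u}))\big]\le 0$ by supermodularity, so the per-trial distribution moves strictly downward in the supermodular order. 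Iterating such swaps (whenever feasible) drains all multi-coordinate mass, and by a compactness argument the limit is precisely $\cD^*$. Since the supermodular order is preserved under i.i.d.\ summation, the $t$-fold sum $\vec{S}^{(t)}$ under any $\cD$ supermodularly dominates the one under $\cD^*$; and since the overflow indicator $\phi^*(\vec{s}):=\mathbf{1}\{\exists k : s_k\ge B_k\}$ is submodular on $\mathbb{Z}_{\ge 0}^{|\cS_e|}$ (a short case check: whenever $\phi^*(\vec{a}\vee\vec{b})=1$, already $\max(\phi^*(\vec{a}),\phi^*(\vec{b}))=1$, so the four-point submodularity inequality reduces to easy cases), the expectation $\E[\phi^*(\vec{S}^{(t)})]$ decreases under supermodular domination, which gives $\Pr_\cD[\exists k : Z_k^{(t)}\ge B_k]\le \Pr_{\cD^*}[\exists k : Z_k^{(t)}\ge B_k]$ as needed.

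The main obstacle is making the iterative decoupling watertight. A decoupling swap requires both $\cD((1,1,\vec{u}))>0$ and $\cD((0,0,\vec{u}))>0$ simultaneously for the chosen pair of coordinates, which can fail when $\cD$ has a pathological support that avoids the ``two zeros'' outcome. The remedies are either (i) to chain several swaps in an augmenting-path style over the outcome lattice so as to jointly empty a multi-coordinate cell while respecting marginals, or (ii) to perturb $\cD$ slightly into the relative interior of the marginal polytope, iterate the swaps where each is feasible, and pass to the limit using continuity of $\Pr_\cD[\exists k : Z_k^{(t)}\ge B_k]$ in $\cD$. Under the implicit assumption $\sum_{k\in\cS_e}\alp B_k/T\le 1$ that is needed for $\cD^*$ to even be a valid distribution, the only tight-marginal distribution without multi-coordinate support is $\cD^*$ itself, so the iterative process has the required terminal point; invoking the standard fact (e.g.\ from Shaked--Shanthikumar) that supermodular order is preserved under i.i.d.\ sums then closes the argument.
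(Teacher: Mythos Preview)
Your approach via the supermodular stochastic order is correct and genuinely different from the paper's. The paper gives a direct balls-and-bins coupling: whenever the candidate worst distribution puts a ball into two bins simultaneously with probability $p$, split that atom into two separate single-bin atoms (drawing the extra mass from the ``nothing'' outcome), and argue by an injective mapping between successful realization paths that the overflow probability does not decrease. Your route instead shows that the one-hot law $\cD^*$ is the supermodular-order minimum among all $\{0,1\}^{|\cS_e|}$-valued laws with the prescribed marginals, then transports this through i.i.d.\ convolution and evaluates at the submodular overflow indicator. What you gain is a clean, reusable structural statement (optimality for \emph{every} submodular functional of the sum, not just the specific indicator), at the cost of importing the supermodular-order toolbox; the paper's argument is shorter and self-contained but also sketchier.

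Two small points. First, there is a sign slip in your swap computation: subtracting $\epsilon$ from $(1,1,\vec u)$ and $(0,0,\vec u)$ and adding $\epsilon$ to $(1,0,\vec u)$ and $(0,1,\vec u)$ changes $\E[\phi]$ by $-\epsilon\big[\phi(1,1,\vec u)+\phi(0,0,\vec u)-\phi(1,0,\vec u)-\phi(0,1,\vec u)\big]$, and the bracket is \emph{nonnegative} by supermodularity; your conclusion ($\le 0$) is right, but the displayed expression and its asserted sign are both flipped. Second, the ``stuck swap'' obstacle you flag is real (e.g.\ mass $1/3$ on $(1,1,1)$ and $2/3$ on $(0,0,0)$ admits no legal $2$-swap), and while your perturbation remedy can be made to work, there is a much cleaner bypass: normalizing $\phi(0)=0$, supermodularity and a one-line induction on the number of ones give $\phi(v)\ge\sum_{k:\,v_k=1}\phi(e_k)$ for every $v$ (take $a=e_k$, $b=v\setminus\{k\}$ in the supermodular inequality). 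Summing against any tight-marginal $\cD$ then yields $\E_\cD[\phi]\ge\sum_k p_k\phi(e_k)=\E_{\cD^*}[\phi]$ directly, establishing $\cD^*\le_{\mathrm{sm}}\cD$ without any iteration.
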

 
Recall that we assume $T \gg \max_k B_k \ge 1$ and thus, $1-\sum_{k \in \cS_e} \alp B_k/T \ge 0$.
\begin{proof}
We can interpret the value $\Pr_{\Z(e) \sim \cD^*}\bb{ \exists k\in \cS_e, {Z_{k}^{(t)}  \ge B_k}}$ via the following balls-and-bins model: We have $|\cS_e|=\Del$ bins and $t$ balls; during each time $t' \in [t]$, we put a ball into one or multiple bins according to a certain randomized strategy $\cD^*$ such that each bin $k \in \cS_e$ will receive a ball with a marginal probability no more than $\alp B_k/T$; we repeat the process independently for $t$ times and we aim to figure out the strategy $\cD^*$ to maximize the chance that at least one bin $k \in \cS_e$ will get at least $B_k$ balls in the end. 

Suppose under $\cD^*$, with some probability $p \in [0,1]$, we will put a ball into two different bins, say $k$  and $k'$. Consider such a twisted version $\cD$ of $\cD^*$ that we simply modify $\cD^*$ by putting a ball into bin $k$ and bin $k'$ each with a probability $p$ and keep all the rest of $\cD^*$. We can verify that strategy $\cD$ is still feasible in the way that all bins will get a ball with a marginal probability the same as before. Consider the two materialization trees of $\cD^*$  and $\cD$ where each has a depth of $t$. We say a path is successful if it ends with that at least a bin $i \in \cS_e$ has at least $B_i$ balls. We can construct an \emph{injective} mapping between all successful paths on the tree of $\cD^*$ and those on tree of $\cD$. This suggests that the chance of success under strategy $\cD^*$ should be no larger than that under $\cD$. Keeping on arguing in the way above, we get our claim. 
\end{proof}

\begin{lemma}\label{lem:s-a1}
Let $Z^{(t)}$ be the sum of $t$ \iid Bernoulli random variables each has a mean of $\alp B/T$ with $\alp \in [0,1]$, $1 \le B \ll T$, and $1\le t \le T$.  We have that $\Pr[Z^{(t)} \ge B]$ get maximized when $B=1$.
\end{lemma}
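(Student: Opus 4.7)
The plan is to pass to the Poisson limit and then prove monotonicity in $B$ of the resulting Poisson tail. Because $B \ll T$, we have $(1 - \alpha B/T)^{t-k} = e^{-\alpha B(t-k)/T}(1+o(1))$ uniformly over $k$ in the relevant range, so that
\[
\Pr[Z^{(t)} \ge B] = \Pr\big[\mathrm{Poi}(qB) \ge B\big] + o(1), \qquad q := t\alpha/T \in [0,\alpha] \subseteq [0,1].
\]
The lemma therefore reduces to showing that $g(B) := \Pr[\mathrm{Poi}(qB) \ge B]$, as a function of the integer $B \ge 1$, is maximized at $B = 1$ for every $q \in [0,1]$.

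To establish this monotonicity, I would write $\mathrm{Poi}(qB) \stackrel{d}{=} S_B := Y_1 + \cdots + Y_B$ with $Y_i \sim \mathrm{Poi}(q)$ i.i.d., and condition on the fresh summand $Y_{B+1}$. A case split on $Y_{B+1} \in \{0, 1, \ge 2\}$ yields the identity
\[
g(B+1) - g(B) = -e^{-q}\,\Pr[S_B = B] + \sum_{k \ge 2}\frac{q^k e^{-q}}{k!}\,\Pr[B+1-k \le S_B \le B-1],
\]
where the negative first term is the mass at $\{S_B = B\}$ ``lost'' because the threshold tightens when $Y_{B+1} = 0$, and the positive terms represent mass ``gained'' from jumps $Y_{B+1} \ge 2$ that reach the new threshold from below the old one. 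Substituting the explicit Poisson PMFs $\Pr[S_B = j] = (qB)^j e^{-qB}/j!$ and rearranging, showing $g(B+1) \le g(B)$ is equivalent to the Poisson CDF comparison
\[
\Pr[\mathrm{Poi}(qB) \le B-1] \le \Pr[\mathrm{Poi}(q(B+1)) \le B].
\]

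The main obstacle is that this CDF comparison does not follow from any simple stochastic ordering and must be handled separately in two regimes. For $q$ bounded away from $1$, a Chernoff bound gives $g(B) \le (q e^{1-q})^B$, which decays geometrically in $B$ and is easily dominated by $g(1) = 1 - e^{-q}$ once $B \ge 2$; one can even obtain explicit constants via the elementary inequality $1 + 2q \ge e^q$ for $q \in [0,1]$. The delicate regime is $q \to 1$: there $q e^{1-q} = 1$ trivializes Chernoff, but one can invoke the classical Jogdeo--Samuels fact that for integer $\lambda$ the median of $\mathrm{Poi}(\lambda)$ equals $\lambda$, together with a short direct computation showing that $\Pr[\mathrm{Poi}(B) \ge B]$ is non-increasing in integer $B$ from $1 - 1/e$ at $B = 1$ toward the asymptotic value $1/2$. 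Concatenating the two regimes and combining with the Poisson reduction from the first paragraph completes the proof of the lemma.
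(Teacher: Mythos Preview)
Your overall plan---reduce to the Poisson tail $g(B)=\Pr[\mathrm{Poi}(qB)\ge B]$ with $q=\alpha t/T\in[0,1]$, show it is maximized at $B=1$, and handle the boundary case $q=1$ via the Poisson median result---matches the paper's route. The paper also passes to the Poisson tail (using a one-sided inequality from \cite{baveja2018improved} rather than a bare $o(1)$ approximation, which is slightly cleaner) and also dispatches the $\tau=1$ case by citing that $\Pr[\mathrm{Pois}(B)\le B-1]$ is increasing in $B$ (Adell, equivalent to your Jogdeo--Samuels invocation).

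The gap is in how you get from general $q\in[0,1]$ down to $q=1$. Your Chernoff bound $g(B)\le (q e^{1-q})^B$ is too weak to deliver $g(B)\le g(1)=1-e^{-q}$ for intermediate $q$: already at $q=0.5$, $B=2$ it gives $g(2)\le (0.5\,e^{0.5})^2\approx 0.68$ while $g(1)\approx 0.39$, so the inequality is vacuous. The two-regime split therefore does not close; there is a whole range of $q$ (roughly $q\in[0.3,1)$) where neither Chernoff nor the $q\to 1$ argument applies. (Your side remark $1+2q\ge e^q$ does yield $g(2)\le g(1)$ directly for all $q\in[0,1]$, but that is a different argument and you have not indicated how to extend it to $B\ge 3$.)

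The paper closes this gap differently: after writing $f(B)\,e^{\tau(B+1)}$ and $f(B+1)\,e^{\tau(B+1)}$ as power series in $\tau$, it checks that their difference has the form $\tau^{B}\big(\sum_{\ell\ge 1}c_\ell\,\tau^{\ell}-B^{B}/B!\big)$ with every $c_\ell>0$. Since $\sum_{\ell\ge 1}c_\ell\tau^\ell$ is increasing on $[0,1]$, the inequality $f(B)\le f(B+1)$ for all $\tau\in[0,1]$ follows once it holds at $\tau=1$. In fact your own identity already contains this reduction: after factoring out $e^{-q(B+1)}$, the positive sum $\sum_{k\ge 2}(q^{k}/k!)\,\Pr[B+1-k\le S_B\le B-1]$ is a power series in $q$ with nonnegative coefficients, and the negative term is the monomial $(qB)^{B}/B!$; dividing by $q^{B}$ leaves an increasing function of $q$ on the left against a constant on the right, so it suffices to check $q=1$. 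Your identity was the right tool---the detour through Chernoff is where the argument breaks.
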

\begin{proof}
It  suffices to show that $\Pr[Z^{(t)} \le B-1]$ will get minimized at $B=1$.
Note that $\E[Z^{(t)}]=\alp t B/T:=\mu$. 

When $\mu <B-1$, by Lemma 11 of~\cite{baveja2018improved}, we have $\Pr[Z^{(t)} \le B-1] \ge \Pr[\Pois(\mu) \le B-1]$, where $\Pois(\mu)$ denotes a Poisson random variable with mean $\mu$. When $\mu \ge B-1$,  we can view that $Z^{(t)} \sim \Pois(\mu)$ since  $1\le B \ll T$ (here we can treat $B$ is a given constant since $T \gg B$). Thus, we claim that $\Pr[Z^{(t)} \le B-1] \ge \Pr[\Pois(\mu) \le B-1]$ with $\mu=\alp tB/T \le B$~\footnote{More precisely, we have $\Pr[Z^{(t)} \le B-1] \ge \Pr[\Pois(\mu) \le B-1]-O(1/T)$. We ignore terms of $O(1/T)$ since $T \gg B \ge 1$.}. Now we show that for any given $\tau:=\alp t/T \in [0,1]$, $f(B):=\Pr[\Pois(\tau B) \le B-1]$ is an increasing function on $B=1,2,\ldots$. Observe that 
\begin{align}
f(B) \cdot  \sfe^{\tau B+\tau}&=\sfe^{\tau}\sum_{\ell=0}^{B-1} (\tau B)^{\ell}/\ell!=\bp{\sum_{\ell=0}^\infty \tau^\ell/\ell!} \cdot \bp{ \sum_{\ell=0}^{B-1} (\tau B)^{\ell}/\ell!} \label{eqn:s-a1}:=g_B(\tau)\\
f(B+1) \cdot\sfe^{\tau B+\tau} &= \sum_{\ell=0}^{B} (\tau B+\tau)^{\ell}/\ell! :=g_{B+1}(\tau)\label{eqn:s-a2}
\end{align}
Suppose we regard the right-hand-side of Equations~\eqref{eqn:s-a1} and~\eqref{eqn:s-a2} both as  polynomials of $\tau$, denoted by $g_B(\tau)$ and $g_{B+1}(\tau)$, respectively. We can verify the two share the same coefficients of $\tau^\ell$ for all $0 \le \ell \le B-1$ and further more, we can verify that $g_B(\tau)-g_{B+1}(\tau)=\tau^B (\sum_{\ell=1}^\infty c_\ell \tau^\ell-B^B/B!)$ with $c_\ell >0$. Observe that 
\[
f(B) \le f(B+1) \Leftrightarrow f(B) \cdot\sfe^{\tau B+\tau}  \le f(B+1) \cdot  \sfe^{\tau B+\tau}\Leftrightarrow g_B(\tau)-g_{B+1}(\tau) \le 0 \Leftrightarrow \sum_{\ell=1}^\infty c_\ell \tau^\ell-B^B/B! \le 0.
\]
Thus, it would suffice to show the case when $\tau=1$. In this case, $f(B)=\Pr[\Pois(B) \le B-1]$ is increasing on $B$, which follows from Lemma 1 of~\cite{adell2005median}. 

Summarizing all analyses above, we claim $\Pr[Z^{(t)} \le B-1] \ge  \Pr[\Pois(\alp t B/T) \le B-1] \ge \Pr[\Pois(\alp t/T) \le 0]=\sfe^{-\alp t/T}$. Observe that when $B=1$, $\Pr[Z^{(t)} \le B-1]=(1-\alp /T)^t \sim \sfe^{-\alp t/T}$ when $T \gg 1$. Therefore, we establish that $\Pr[Z^{(t)} \le B-1]$ gets minimized at $B=1$. 
\end{proof}

\section{Missing proofs in Section~\ref{sec:samp-va}}
\begin{lemma}\label{lem:var-se}
Let $H_t=1$ indicate that $\Ber(\alp B/T)^{(t-1)} \le B-1$ and $H_t=0$ otherwise, where $\Ber(\alp B/T)^{(t-1)}$ is the sum of $t-1$ \iid Bernoulli random variables each with mean $\alp B/T$. We have that for any given $\alp>0$,
\[
F_\alp(B):=\sum_{t=1}^T 2(t-1)\cdot \E[H_t]-\bp{ \sum_{t=1}^T \E[H_t]}^2 \le \frac{T^2}{\alp^2} \bp{1 - \sfe^{-2 \alp} - 2 \alp \sfe^{-\alp}+O(1/T)},
\]
where the inequality above becomes tight when $B=1$.
\end{lemma}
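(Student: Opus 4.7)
The plan is to reformulate $F_\alp(B)$ as the centered second moment of a truncated hitting time, evaluate the bound exactly at $B=1$, and then reduce the general $B$ to $B=1$ by a variance-monotonicity argument. Let $N_B$ denote the first index $t$ at which $B$ successes have accumulated in a sequence of iid $\Ber(\alp B/T)$ trials, and set $Y_B := \min(N_B, T)$. Since $\{Y_B \ge t\} = \{\Ber(\alp B/T)^{(t-1)} \le B-1\}$ for $t \in [T]$, we have $\E[H_t] = \Pr[Y_B \ge t]$, and the tail-sum identities give $\sum_t \E[H_t] = \E[Y_B]$ and $\sum_t 2(t-1)\E[H_t] = \E[Y_B(Y_B-1)]$. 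Combining these yields the key identity $F_\alp(B) = \Var[Y_B] - \E[Y_B]$. Because $\E[Y_B] \le T$, it suffices to show $\Var[Y_B] \le (T^2/\alp^2)(1 - \sfe^{-2\alp} - 2\alp \sfe^{-\alp}) + O(T)$.

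For the $B=1$ case, $\E[H_t] = (1-\alp/T)^{t-1} = \sfe^{-\alp(t-1)/T}(1 + O(1/T))$, and Riemann-sum approximations yield $\sum_t \E[H_t] = T(1-\sfe^{-\alp})/\alp + O(1)$ and $\sum_t 2(t-1)\E[H_t] = (2T^2/\alp^2)(1 - \sfe^{-\alp}(1+\alp)) + O(T)$, which combine to $F_\alp(1) = (T^2/\alp^2)(1 - \sfe^{-2\alp} - 2\alp\sfe^{-\alp}) + O(T)$, establishing both the bound and its tightness. For $B \ge 2$, I would show $\Var[Y_B] \le \Var[Y_1] + O(T)$ by passing to the Poisson limit: $\alp N_B/T$ converges in distribution to $\bar Y_B := \tfrac{1}{B}\sum_{i=1}^B \mathrm{Exp}_i(1) \sim \Gamma(B, B)$, so $\Var[Y_B]/T^2 \to \Var[\min(\bar Y_B, \alp)]/\alp^2$, and the claim reduces to $\Var[\min(\bar Y_B, \alp)] \le \Var[\min(\bar Y_1, \alp)] = 1 - \sfe^{-2\alp} - 2\alp\sfe^{-\alp}$. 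Intuitively this holds because $\bar Y_B$ has mean $1$ for every $B$ but variance $1/B$ that shrinks in $B$, so the $1$-Lipschitz functional $\min(\cdot, \alp)$ becomes increasingly deterministic.

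The main obstacle is to make this variance monotonicity rigorous. Convex ordering alone is insufficient: both $\E[\min(\bar Y_B, \alp)^2]$ and $(\E[\min(\bar Y_B, \alp)])^2$ decrease in $B$, so their difference has ambiguous sign. My plan is to exploit the conditional exchangeability of $(\bar Y_B, \bar Y_{B-1})$: given $\bar Y_B = y$, $\bar Y_{B-1}$ has mean $y$ and a specific $\mathrm{Beta}(1, B-1)$-based conditional distribution, from which Jensen's inequality for the concave map $\min(\cdot, \alp)$ together with the law of total variance should yield the telescoping comparison $\Var[\min(\bar Y_B, \alp)] \le \Var[\min(\bar Y_{B-1}, \alp)]$. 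A fallback route is to derive closed-form expressions for $\E[\min(\bar Y_B, \alp)]$ and $\E[\min(\bar Y_B, \alp)^2]$ in terms of the regularized incomplete Gamma function and analyze the resulting variance by direct differentiation in $B$ treated as a continuous parameter.
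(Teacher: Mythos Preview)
Your reformulation $F_\alp(B)=\Var[Y_B]-\E[Y_B]$ via the truncated hitting time $Y_B=\min(N_B,T)$ is correct and considerably cleaner than the paper's direct manipulation of the sums; the Gamma limit $\alp N_B/T\Rightarrow\bar Y_B\sim\Gamma(B,B)$ and the reduction to showing $\Var[\min(\bar Y_B,\alp)]\le\Var[\min(\bar Y_1,\alp)]=1-\sfe^{-2\alp}-2\alp\sfe^{-\alp}$ are also right. This is a genuinely different route from the paper, which never introduces $Y_B$ or the Gamma picture: instead it views $F_\alp(B)$ as a quadratic in each $p_t$, replaces $p_t$ by the Poisson lower bound $\ell_t=\Pr[\Pois((t-1)\alp B/T)\le B-1]$ from Lemma~\ref{lem:s-a1}, passes to the Riemann integral $f(B)=\int_0^1 2\zeta\,\Pr[\Pois(\alp B\zeta)\le B-1]\,d\zeta-\big(\int_0^1\Pr[\Pois(\alp B\zeta)\le B-1]\,d\zeta\big)^2$, and then shows $f(B)\le f(1)$ by combining a crude Poisson upper-tail bound (giving $f(B)\le\sqrt{\pi/(2B)}\,(2+\sqrt{\pi/(2B)})$) for large $B$ with a finite numerical check for $B\le1000$. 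Your formulation isolates the conceptual content (concentration of $\bar Y_B$); the paper's is more mechanical but self-contained.

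The genuine gap is the variance-monotonicity step, and your primary plan for it does not work as stated. From the reverse-martingale relation $\E[\bar Y_{B-1}\mid\bar Y_B]=\bar Y_B$ and concavity of $\min(\cdot,\alp)$ you correctly obtain $g(\bar Y_B):=\E[\min(\bar Y_{B-1},\alp)\mid\bar Y_B]\le\min(\bar Y_B,\alp)$. But the law of total variance gives $\Var[\min(\bar Y_{B-1},\alp)]=\Var[g(\bar Y_B)]+\E\big[\Var[\min(\bar Y_{B-1},\alp)\mid\bar Y_B]\big]$, and the pointwise inequality $g(\bar Y_B)\le\min(\bar Y_B,\alp)$ says nothing about $\Var[g(\bar Y_B)]$ versus $\Var[\min(\bar Y_B,\alp)]$; indeed $g$ agrees with $\min(\cdot,\alp)$ on $[0,(B-1)\alp/B]$ and is strictly smaller thereafter while still approaching $\alp$, so the two functions differ in a way that does not obviously preserve variance ordering. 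There is no ``telescoping comparison'' here without an additional quantitative control on the conditional variance term, which you have not supplied. Your fallback (closed forms via the incomplete Gamma function and differentiation in $B$) is in spirit the same monotonicity question the paper faces for $f(B)$; the paper resolves it not by differentiation but by the tail-bound-plus-finite-check argument above, and that is the path you should expect to need as well.
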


\begin{proof}
For notation convenience, let $Z^{(t-1)}:=\Ber(\alp B/T)^{(t-1)}$, which denotes the sum of $t-1$ \iid Bernoulli random variables each with mean $\alp B/T$.  Let $p_t=\E[H_t]=\Pr[Z^{(t-1)} \le B-1]$.  Consider a given $t$ and suppose we try to view $F_\alp(B)$ as a function of $p_t$ and all $\{p_{t'}| t' \neq t\}$ are constants. We can verify that 
\[
F_\alp(B)=-p_t^2+2p_t\bp{t-1- \sum_{t' \neq t} p_{t'}}+C:=f(p_t), 
\]
where $C$ is a constant (a function of $\{p_{t'}| t' \neq t\}$). Note that by Lemma~\ref{lem:s-a1}, $p_t=\Pr[Z^{(t-1)} \le B-1]$  gets minimized at $B=1$ and thus, $p_t \ge (1-\alp /T)^{t-1}$. Therefore, we can verify that $\sum_{t'\neq t} p_{t'} \ge \Omega(T)$, which suggests that for any given $t \le  \Omega(T)$, $f(p_t)$ is decreasing when $p_t \in [0,1]$. In the proof of Lemma~\ref{lem:s-a1}, we show $p_t=\Pr[Z^{(t-1)} \le B-1] \ge\Pr[ \Pois((t-1)\alp B/T) \le B-1]:=\ell_t$ for any $t \in [T]$. This implies that $\Var[H]$  will get non-decreased if we replace $p_t$ with  $\ell_t$ for any $t \le (1-1/\sfe) T$. Note that $p_t=\ell_t$ for any $t \ge \Omega(T)$.\footnote{For analysis convenience, we ignore all $O(1/T)$ terms involved here since $T \gg B \ge 1$.} Thus,

\begingroup
\allowdisplaybreaks
\begin{align}
&F_\alp(B) \le  \sum_{t=1}^T 2(t-1)\cdot \ell_t-\bp{\sum_{t=1}^T \ell_t}^2\\
&\le  \sum_{t=0}^{T-1} 2t \cdot \sfe^{-\alp B t/T} \sum_{\ell=0}^{B-1} \frac{(\alp B t/T)^\ell}{\ell!}-\bP{\sum_{t=0}^{T-1} \sfe^{-\alp B t/T} \sum_{\ell=0}^{B-1} \frac{(\alp B t/T)^\ell}{\ell!}}^2\\
&=\sum_{\ell=0}^{B-1} T^2 \sum_{t=0}^{T-1} \frac{1}{T} \frac{2t}{T}\sfe^{-\alp B t/T}\frac{(\alp B t/T)^\ell}{\ell!}-\bP{\sum_{\ell=0}^{B-1} T \sum_{t=0}^{T-1} \frac{1}{T} \sfe^{-\alp B t/T}\frac{(\alp B t/T)^\ell}{\ell!}}^2\\
&=\sum_{\ell=0}^{B-1} T^2\bP{\int_0^1 2\zeta \cdot \sfe^{-\alp B \zeta} \frac{(\alp B \zeta)^\ell}{\ell!} d\zeta +O(1/T)}-\bB{\sum_{\ell=0}^{B-1} T \bP{\int_0^1 \sfe^{-\alp B \zeta} \frac{(\alp B \zeta)^\ell}{\ell!} d\zeta +O(1/T)}}^2 \\
&=T^2 \bB{\int_0^1 2 \zeta \cdot \Pr\bB{\Pois(\alp B \zeta) \le B-1} d\zeta-\bp{\int_0^1   \Pr\bB{\Pois(\alp B \zeta) \le B-1} d\zeta}^2 }+o(T^2). \label{ineq:var-sc}
\end{align}
\endgroup
We show that the expression on~\eqref{ineq:var-sc}  gets maximized at $B=1$ any given constant $\alp>0$ (Actually, we can show it is strictly increasing when $B=1,2,\ldots$ for any given $\alp>0$). For the convenience of exposition,  we just show the case when $\alp=1$~\footnote{For a general case $\alp>0$, we can reduce to the case $\alp=1$ by treating $Z^{(t-1)}$ as $t-1$ \iid $\Ber(B/T')$ with $T'=T/\alp$.}.
\begin{align}
f(B)&:=\int_0^1 2 \zeta \cdot \Pr\bB{\Pois(B \zeta) \le B-1} d\zeta-\bp{\int_0^1   \Pr\bB{\Pois(B \zeta) \le B-1} d\zeta}^2  \\
& \le \int_0^1 2 \zeta  d\zeta-\bp{1-\int_0^1   \Pr\bB{\Pois(B \zeta) \ge B} d\zeta}^2 \\
& \le 1-\bP{1- \int_0^1  \sfe^{-\zeta^2 B/2}d\zeta}^2  =1-\bP{1- \frac{1}{\sqrt{B}}\int_0^{\sqrt{B}}  \sfe^{-\zeta^2 /2} d\zeta}^2 \label{ineq:var-se}  \\
& \le 1-\bp{1- \frac{1}{\sqrt{B}}\int_0^\infty  \sfe^{-\zeta^2 /2}d\zeta}^2=\sqrt{\frac{\pi}{2B}} \bp{2+\sqrt{\frac{\pi}{2B}}},
\end{align}
where Inequality~\eqref{ineq:var-se} is due to the upper-tail bound of a Poisson random variable~\citep{pois-tail}. We can manually verify that $f(B)$ is strictly decreasing when $B=1,2,\ldots,1000$ with $f(1)=1-\sfe^{-2}-2\sfe^{-1}\sim 0.1289$, and  $f(B)  \le 0.08 $ when $B >1000$ by the upper bound shown above. Thus, we claim that $f(B)$ gets maximized at $B=1$.  

Substituting $B=1$ back to Inequality~\eqref{ineq:var-sc}, we have
\begin{align}
F_\alp (B)&\le T^2 \bB{\int_0^1 2 \zeta \cdot \Pr\bB{\Pois(\alp  \zeta) \le 0} d\zeta-\bp{\int_0^1   \Pr\bB{\Pois(\alp  \zeta) \le 0} d\zeta}^2 }+o(T^2) \\
&=T^2 \bp{1 - \sfe^{-2 \alp} - 2 \alp \sfe^{-\alp}}/\alp^2+o(T^2).
\end{align}
\end{proof}


\section{Proofs of Theorems~\ref{thm:supp-1},~\ref{thm:supp-2}, and~\ref{thm:supp-3}} \label{sec:two-thms}

\subsection{Proof of Theorem~\ref{thm:supp-1}}\label{sec:supp-1}
From Lemma~\ref{lem:supp-1}, we see that the optimal distribution $\cD^*$ stated in Theorem~\ref{thm:supp-1} can always be realized at such a configuration that $\bfU=\bo_k$ with probability $B/T$ for each $k \in [\Del]$ and $\bfU=\bz$ with probability $1- \Del \cdot B/T$, where $\bo_k$ denotes the $k$th standard basis vector. Under this setting, the question presented in Theorem~\ref{thm:supp-1} can be formulated as a Balls-and-Bins problem as follows. 

\xhdr{A Balls-and-Bins Problem (BBP)}. Let $\Del, B$ and $T$ are three integers with $\Del \le  B \ll T$. Suppose we have $\Del$ bins and $T$ rounds. In each round $t \in [T]$, we throw a ball such that it will hit each bin with probability $B/T$ and it will miss all the bins with probability $1-\Del  B/T$. Let $\bfU =(U_k| k\in [\Del])$ be the random vector taking values from $\{0,1\}^\Del$, which captures the numbers of balls falling into the $\Del$ bins in one single round. Observe that $\bfU$ follows the exact distribution of $
\cD^*$ as stated in Lemma~\ref{lem:supp-1}. Let $\widehat{T} \le T$ and $T' \le T$ be the last rounds such that every bin has at most $B-1$ balls before and after the arrival of ball, respectively. We see that $T' \le \widehat{T} \le T'+1$, and

\[
 \lim_{T \rightarrow \infty} \frac{1}{T}\sum_{t=1}^T \Pr_{\bfU\sim \cD^*}\bb{\max_{k \in [\Del]} U_k^{(t)} \le B-1}=\lim_{T \rightarrow \infty}\frac{\E[\widehat{T}]}{T} =\lim_{T \rightarrow \infty}\frac{\E[T']}{T}\doteq \ra.
\]

\xhdr{An upper bound on $\ra$}.
\begin{lemma}\label{lem:upper}
$\ra \le 1-(\sqrt{2}-\ep)\sqrt{\frac{ \ln \Del }{B}} (1+o(1))$ for any given $\ep >0$, where $o(1)$ is a vanishing term when $\Del$ approaches infinity.
\end{lemma}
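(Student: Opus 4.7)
The plan is to combine a negative-association (NA) reduction with Slud's inequality for binomial tails to show that past time $(1-\delta)T$, for an appropriate $\delta$, some bin must overflow with probability $1 - o(1)$. Under the distribution $\cD^*$ identified in Lemma~\ref{lem:supp-1}, the per-round increment $\bfU$ takes value $\bo_k$ with probability $B/T$ and $\bz$ otherwise, so the $\Del$ coordinate indicators $\bp{\mathbf{1}[\bfU = \bo_k]}_{k\in[\Del]}$ are NA within a single round; summing $t-1$ i.i.d. rounds preserves NA, so $\{U_k^{(t)}\}_{k \in [\Del]}$ are NA. Applying NA to the coordinate-wise non-increasing indicators $y \mapsto \mathbf{1}[y \le B-1]$ and setting $p_t := \Pr[U_1^{(t)} \ge B]$ (identical across bins by symmetry) then gives
\[
\Pr_{\bfU \sim \cD^*}\bb{\max_{k \in [\Del]} U_k^{(t)} \le B-1} \;\le\; \prod_{k \in [\Del]} \Pr\bb{U_k^{(t)} \le B-1} \;=\; (1-p_t)^\Del \;\le\; \sfe^{-\Del \, p_t}.
\]

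Next, I would fix $\epsilon \in (0, \sqrt{2})$, set $\delta := (\sqrt{2}-\epsilon)\sqrt{(\ln \Del)/B}$ and $t_0 := \lceil (1-\delta)T \rceil$. Since $p_t$ is non-decreasing in $t$, splitting the Cesaro sum at $t_0$ yields
\[
\ra \;=\; \lim_{T \to \infty} \frac{1}{T}\sum_{t=1}^T \Pr_{\bfU \sim \cD^*}\bb{\max_k U_k^{(t)} \le B-1} \;\le\; (1-\delta) \;+\; \delta \cdot \sfe^{-\Del \, p_{t_0}} + o(1),
\]
so it remains to argue $\Del \cdot p_{t_0} \to \infty$ as $\Del\to\infty$. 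The variable $U_1^{(t_0)} \sim \mathrm{Bin}(t_0-1, B/T)$ has mean $(1-\delta)B(1+o(1))$ and variance $(1-\delta)B(1-B/T)(1+o(1))$. Since $T \gg B$ and $\delta = o(1)$, Slud's inequality applies (the threshold $B$ lies between the mean and $(t_0-1)(1-B/T)$) and gives
\[
p_{t_0} \;\ge\; \Phi(-x_0), \qquad x_0 \;=\; \frac{B - (t_0-1)B/T}{\sqrt{(t_0-1)(B/T)(1-B/T)}} \;=\; (\sqrt{2}-\epsilon)\sqrt{\ln \Del}\,(1+o(1)).
\]
The Mills-ratio lower bound $\Phi(-x) \ge \frac{1}{\sqrt{2\pi}\,(x+1)}\,\sfe^{-x^2/2}$, together with $x_0^2/2 = (1 - \sqrt{2}\epsilon + \epsilon^2/2)(1+o(1))\ln\Del$, then yields
\[
\Del \cdot p_{t_0} \;\ge\; \frac{\Del^{\sqrt{2}\epsilon - \epsilon^2/2 - o(1)}}{O(\sqrt{\ln \Del})} \;\longrightarrow\; \infty,
\]
because $\sqrt{2}\epsilon - \epsilon^2/2 = \epsilon(\sqrt{2} - \epsilon/2) > 0$ whenever $\epsilon \in (0, 2\sqrt{2})$.

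Putting these pieces together gives $\ra \le 1 - \delta(1 - o(1)) = 1 - (\sqrt{2}-\epsilon)\sqrt{(\ln \Del)/B}\,(1+o(1))$, completing the claim. The main obstacle I foresee is the tail lower bound on $p_{t_0}$: a naive Berry-Esseen CLT would introduce an additive error of order $1/\sqrt{B}$, which can easily exceed the target $\Phi(-x_0) = \Del^{-(1-\Omega_\epsilon(1))}/\mathrm{polylog}(\Del)$ in the regime $B = \omega(\ln \Del)$ when $B$ is only polylogarithmic in $\Del$. Slud's inequality sidesteps this by producing an error-free Gaussian lower bound on the binomial upper tail, which is precisely what lets me push $\Del \cdot p_{t_0} \to \infty$ uniformly for every fixed $\epsilon > 0$. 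A secondary but routine concern is justifying the Riemann-sum replacement of $\frac{1}{T}\sum_{t}\Pr[\cdots]$ by its integral limit as $T \to \infty$, which is standard since $\Pr[\max_k U_k^{(t)} \le B-1]$ depends on $t$ only through the ratio $t/T$ in the Poissonization regime.
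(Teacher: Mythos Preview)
Your proposal is correct and follows essentially the same approach as the paper: negative association to factorize the joint survival probability, Slud's inequality to control the per-bin tail, a split of the time axis at $(1-\delta)T$ with $\delta=(\sqrt{2}-\epsilon)\sqrt{(\ln\Del)/B}$, and the Mills-ratio lower bound on $Q(x)$ to push the residual term to $o(1)$. The only cosmetic difference is that the paper first passes to the integral $\int_0^1 \Phi^\Del\bigl(\sqrt{B}(1-z)/\sqrt{z}\bigr)\,dz$ and bounds $\Phi^\Del$ directly, whereas you keep the Ces\`aro sum, use $(1-p_t)^\Del \le \sfe^{-\Del p_t}$, and show $\Del\,p_{t_0}\to\infty$; these are algebraically equivalent rearrangements of the same estimate.
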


To prove the above lemma, we need Slud's Inequality~\citep{slud1977distribution} stated as follows.

\begin{lemma}[Slud's inequality~\citep{slud1977distribution}]\label{lem:slud}
Let $\{X_i| i \in [n]\}$ be $n$ \emph{i.i.d.} Bernoulli random variables with $\E[X_i]=p$ and let $k \le n$ be a given integer. If either (a) $p \le 1/4$ and $np \le k$, or (b) $np \le k \le n (1-p)$, then 
\[
\Pr\bb{\sum_{i=1}^n X_i  \le k-1}  \le \Phi \bp{\frac{k-np}{\sqrt{np(1-p)}}},\] 
where $\Phi$ is the cdf of a standard normal distribution.
\end{lemma}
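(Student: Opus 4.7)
The plan is to follow Slud's reduction strategy: first rewrite the binomial lower tail through the regularized incomplete Beta function to isolate the $p$-dependence, then reduce both hypothesis regimes to a canonical symmetric case, and finally finish with a direct Stirling-based comparison against the standard normal CDF.

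First I would use the identity $\Pr[\sum_{i=1}^n X_i \le k-1] = I_{1-p}(n-k+1, k)$, which expresses the binomial lower tail as
\[
\frac{1}{\mathrm{B}(n-k+1,k)}\int_0^{1-p} u^{n-k}(1-u)^{k-1}\,du,
\]
where $\mathrm{B}$ denotes the Beta function. The virtue of this representation is that the entire dependence on $p$ collapses into the upper limit of integration, which makes the derivative in $p$ a clean product of elementary factors and makes monotonicity arguments tractable.

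Second, I would reduce regimes (a) and (b) to the symmetric case $p = 1/2$. For regime (b), $np \le k \le n(1-p)$, I would show that the difference $\Pr[\mathrm{Bin}(n,p)\le k-1] - \Phi((k-np)/\sqrt{np(1-p)})$, viewed as a function of $p$ with $n,k$ fixed, attains its maximum on the closed interval $[k/n,\, 1-k/n]$ at one of the endpoints; at $p = k/n$ the binomial tail equals $1/2 + O(1/\sqrt{n})$ while the normal score is zero, so the inequality reduces to a tight Stirling-level estimate that I would verify directly. For regime (a), $p \le 1/4$ and $np \le k$, the binomial is more skewed; here I would couple $\mathrm{Bin}(n,p)$ monotonically with $\mathrm{Pois}(np)$ and invoke the analogous Slud-type bound for the Poisson lower tail, $\Pr[\mathrm{Pois}(\mu)\le k-1] \le \Phi((k-\mu)/\sqrt{\mu})$, which admits the same kind of proof using the Gamma-integral representation.

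Third, for the symmetric case $p = 1/2$ with $n/2 \le k$, I would use the exact formula $\Pr[S_n \le k-1] = 2^{-n} \sum_{j=0}^{k-1}\binom{n}{j}$ together with the sharp Stirling expansion $2^{-n}\binom{n}{j} = \sqrt{2/(\pi n)}\, \exp(-2(j-n/2)^2/n)\,(1+O(1/n))$. Matching the resulting discrete sum with the Riemann approximation of $\int_{-\infty}^{z}\phi(u)\,du$ and exploiting the concavity of $\Phi$ on $[0,\infty)$ would deliver the bound in the correct direction, with care that the Stirling error term always pushes the approximation on the safe side of the target inequality. The main obstacle I anticipate is the monotonicity step in the first reduction: both the binomial tail and the moving normal score $z(p) = (k-np)/\sqrt{np(1-p)}$ depend on $p$ nonlinearly, so comparing their derivatives requires expressing $\partial_p \Phi(z(p))$ as a tractable elementary function of $p$ and bounding it against the Beta integrand; most of the work will lie in making that comparison uniform across the two parameter regimes (a) and (b).
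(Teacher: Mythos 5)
The paper does not prove this lemma at all: it is imported verbatim from Slud (1977) and used as a black box, with only a one-line CLT heuristic offered afterward as intuition. So there is no in-paper argument to match; you are attempting to reprove a nontrivial classical result, and your sketch has concrete gaps that would stop it from going through. The most serious one is regime (a): your plan is to upper-bound $\Pr[\mathrm{Bin}(n,p)\le k-1]$ by $\Pr[\mathrm{Pois}(np)\le k-1]$ via a monotone coupling and then apply a Poisson analogue of the bound. That comparison goes in the wrong direction. The Poisson has variance $np$ while the binomial has variance $np(1-p)<np$, so for thresholds at or above the mean the Poisson places \emph{more} mass in the upper tail and hence \emph{less} in the lower tail; i.e.\ $\Pr[\mathrm{Pois}(np)\le k-1]\le \Pr[\mathrm{Bin}(n,p)\le k-1]$ when $k-1\ge np$. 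This is precisely the direction the present paper itself uses elsewhere (Lemma 11 of Baveja et al., invoked in the proof of Lemma~\ref{lem:s-a1}), so the Poisson quantity cannot serve as an upper bound here.

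The regime (b) reduction also rests entirely on an asserted, unproved claim — that the difference between the binomial tail and $\Phi\bigl((k-np)/\sqrt{np(1-p)}\bigr)$ is maximized at an endpoint in $p$ — and that claim \emph{is} the theorem; you acknowledge this is the main obstacle, but without it nothing is proved. Two further problems: for fixed $n,k$ the constraint $np\le k\le n(1-p)$ confines $p$ to $(0,\min(k/n,\,1-k/n)]$, not to $[k/n,\,1-k/n]$ as you write, so the interval over which you propose to optimize is not the right one; and at the endpoint $p=k/n$ you need the \emph{exact} inequality $\Pr[\mathrm{Bin}(n,k/n)\le k-1]\le 1/2=\Phi(0)$, so an estimate of the form ``$1/2+O(1/\sqrt{n})$'' with an error term of uncontrolled sign does not suffice — you would need the median-versus-mean fact for the binomial in its sharp form. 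As it stands the proposal is a plausible-sounding outline whose load-bearing steps are either false as stated (the Poisson comparison) or unestablished (the endpoint-maximum claim).
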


Here is an explanation for the above inequality. When $n$ is large, we can approximation the distribribution of $\sum_i X_i$ by $\cN(np, np(1-p))$. Thus, we see that $\Pr[\sum_i X_i \le k-1] \sim \Phi((k-1-npt)/\sqrt{np(1-p)})$. 
Now we start to prove Lemma~\ref{lem:upper}.
\begin{proof}
For each time $t \in [T]$ and bin $k \in [\Del]$, let $U_k^{(t)}$ be the number of balls in the $i$th bin at the end of time $t$. Thus, we have 
\begin{align}
\E[T'] &=\sum_{t=1}^T \Pr\bb{ \bigwedge_{i \in [\Del]}(U_k^{(t)} \le B-1)} \le \sum_{t=1}^T\prod_{i \in [\Del]} \Pr[U_k^{(t)} \le B-1] \label{ineq:up1}  \\
& \le  B-1+\sum_{t=B}^T \Phi^\Del  \bp{\frac{B-tB/T}{\sqrt{t (B/T)(1-B/T)}}}.\label{ineq:up2}
\end{align}

The first inequality~\eqref{ineq:up1} is due to the fact that $\{U^{(t)}_k| i \in [\Del]\}$ are \emph{negatively associated}~\citep{joag1983negative}. The second inequality~\eqref{ineq:up2} is due to Slud's inequality. Observe that when $t \le B-1$,  $\Pr[U^{(t)}_k \le B-1]=1$ for every $i \in [\Del]$. When $t \ge B$, we can view each $U^{(t)}_k$ as a sum of $t$ \iid Bernoulli random variable with mean $B/T$, and verify that Condition (a) stated in Lemma~\ref{lem:slud} is satisfied since $B/T \le 1/4$ and $t B/T \le B$. Therefore, 

\begin{align}
\ra &=\lim_{T \rightarrow \infty} \frac{\E[T']}{T} \nonumber \\
& \le \lim_{T \rightarrow \infty}\bP{ \frac{B-1}{T}+\frac{1}{T}  \sum_{t=B}^T \Phi^\Del  \bp{\frac{B-tB/T}{\sqrt{t (B/T)(1-B/T)}}}} = \int_0^1 dz \Phi^\Del\bp{\frac{\sqrt{B}(1-z)}{\sqrt{z}}} \nonumber\\
&= \int_0^{1-\del} dz \Phi^\Del\bp{\frac{\sqrt{B}(1-z)}{\sqrt{z}}}+  \int_{1-\del}^{1} dz \Phi^\Del\bp{\frac{\sqrt{B}(1-z)}{\sqrt{z}}}\nonumber \\
& \le  (1-\del)+\del \cdot \Phi^\Del \big(\sqrt{B} \del /\sqrt{1-\del}\big) \label{ineq:up3}.
\end{align}
Inequality~\eqref{ineq:up3} follows from the fact that $f(z) \doteq \Phi^\Del\bp{\sqrt{B}(1-z)/\sqrt{z}}$ is an non-increasing function over $z \in [0,1]$. Set $\del=c\sqrt{\ln \Del /B}$ with $c=\sqrt{2}-\ep$. In the following, we prove that $\Phi^\Del \big(\sqrt{B} \del /\sqrt{1-\del}\big)=o(1)$ when $\Del \rightarrow \infty$ for any given $\ep>0$. This yields our main result.

Let $Q(x)=1-\Phi(x)$. Observe that when $\del=c\sqrt{\ln \Del /B}$ with $c=\sqrt{2}-\ep$,
\begin{align}
\Phi^\Del \big(\sqrt{B} \del /\sqrt{1-\del}\big) &=\Phi^\Del \bp{c \sqrt{\ln \Del}/\sqrt{1-\del}} 
\\
&=\bp{1-Q\big(c  \sqrt{\ln \Del}/\sqrt{1-\del}\big)}^\Del \label{ineq:up4}\\
&\le  \bp{1-\frac{\sfe^{-c^2 \ln \Del/(2(1-\del))}}{\sqrt{2\pi} \big( 2c\sqrt{\ln \Del}\big)}}^\Del \label{ineq:up5}\\
& \le  \exp\bp{-\Del^{1-\frac{c^2}{2(1-\del)}} \frac{1}{  2\sqrt{2\pi}\cdot c \cdot \sqrt{\ln \Del}}}=o(1). 
\end{align}

Inequality~\eqref{ineq:up5} follows from the fact that $Q(x) \ge \phi(x)/(x+1/x)$ for $x>0$ with $\phi(x)$ being the pdf of the standard norm distribution~\citep{borjesson1979simple}. Thus, we claim that $\Phi^\Del \big(\sqrt{B} \del /\sqrt{1-\del}\big) $ is a vanishing term when  $\Del \rightarrow \infty$ with $\del=c\sqrt{\ln \Del /B}=o(1)$ since $B=\omega(\ln \Del)$ and $c<\sqrt{2}$.
\end{proof}

\xhdr{A lower bound on $\ra$}.
\begin{lemma}\label{lem:lower}
$\ra \ge 1-2\sqrt{2}\sqrt{\frac{ \ln \Del }{B}} (1+o(1))$ for any given $\ep >0$, where $o(1)$ is a vanishing term when $\Del$ approaches infinity.
\end{lemma}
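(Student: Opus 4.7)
The plan is to derive the lower bound on $\ra$ via a union-bound-plus-Chernoff-tail argument, structurally dual to the upper-bound proof in Lemma~\ref{lem:upper}. First, for each $t \in [T]$, a simple union bound together with the symmetry of the $\Del$ bins gives
\[
\Pr\Bigl[\,\bigwedge_{k \in [\Del]} \bigl(U_k^{(t)} \le B-1\bigr)\Bigr] \;\ge\; 1 - \sum_{k=1}^\Del \Pr\bigl[U_k^{(t)} \ge B\bigr] \;=\; 1 - \Del\cdot \Pr\bigl[U_1^{(t)} \ge B\bigr].
\]
Summing over $t$, taking the Riemann-sum limit with $z = t/T$, and using the Poisson approximation $\mathrm{Bin}(zT, B/T) \to \Pois(zB)$ valid under $T \gg B$, I would obtain
\[
\ra \;\ge\; 1 - \int_0^1 \Del\cdot \Pr\bigl[\Pois(zB) \ge B\bigr]\, dz.
\]

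Second, I would apply the standard Chernoff tail bound $\Pr[\Pois(zB) \ge B] \le (z\,e^{1-z})^B$ for $z \in (0,1]$ and weaken it to the crude Gaussian form
\[
(z\,e^{1-z})^B \;\le\; \exp\!\bigl(-B(1-z)^2/8\bigr),
\]
which follows from the Taylor expansion $\ln(z\,e^{1-z}) = \ln z + 1 - z \le -(1-z)^2/2 \le -(1-z)^2/8$ for $z \in (0,1]$.

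Third, I would truncate the integral at $\del := c\sqrt{\ln \Del/B}$ with $c = 2\sqrt 2$. On $[1-\del, 1]$ I use the trivial bound $\Pr[\bigwedge_k(\cdots)] \ge 0$, which accounts for a deficit of at most $\del$ in $\ra$. On $[0, 1-\del]$, substituting $u = 1-z$ followed by $v = u\sqrt{B}/2$ and applying the Mill's-ratio estimate $\int_a^\infty e^{-v^2/2}\,dv \le e^{-a^2/2}/a$ yields
\[
\int_0^{1-\del} \Del\cdot\exp\!\bigl(-B(1-z)^2/8\bigr)\, dz \;\le\; \frac{4\,\Del^{\,1-c^2/8}}{c\sqrt{B\ln \Del}} \;=\; o\!\bigl(\sqrt{\ln \Del/B}\bigr),
\]
since $c^2/8 = 1$ and the assumption $B = \omega(\ln \Del)$ implies $1/\sqrt{B\ln \Del} = o(\sqrt{\ln \Del/B})$.

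Combining the two pieces gives $\ra \ge 1 - 2\sqrt 2\,\sqrt{\ln \Del/B}(1+o(1))$, as claimed. The main obstacle is balancing the Chernoff-exponent constant against $c^2$ in the truncation: using the tighter exponent $(1-z)^2/2$ in the Chernoff bound would in principle push $c$ toward $\sqrt 2$ and narrow the gap to the upper bound in Lemma~\ref{lem:upper}, but I deliberately use the crude $(1-z)^2/8$ form to land cleanly at the stated $\kappa = 2\sqrt 2$ with minimal bookkeeping; the resulting factor-of-two gap is exactly the source of the bracketing $\kappa \in (\sqrt 2,\,2\sqrt 2]$ quoted in Theorem~\ref{thm:main-3}.
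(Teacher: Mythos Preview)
Your proof is correct and actually takes a cleaner route than the paper. The paper does not use the union bound up front; instead it invokes the Poissonization lemma of Mitzenmacher--Upfal (Lemma~\ref{lem:mit}) to replace the negatively associated occupancies $(U_k^{(t)})_k$ by \iid Poissons $(Z_k^{(t)})_k$, at the cost of a multiplicative factor~$2$. It then works with the exact product $\Pr[Z_1^{(t)}\le B-1]^\Del$, bounds the Poisson tail by $\exp(-B(1-z)^2/2)$, truncates at $\theta=\sqrt{\ln\Del}$, and ends with $\ra\ge 1-2F$ where $F\le\sqrt{2\ln\Del/B}(1+o(1))$; the $2\sqrt{2}$ in the lemma is the product of the Poissonization factor~$2$ and the $\sqrt{2}$ from the tail integral. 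Your argument sidesteps Lemma~\ref{lem:mit} entirely by applying the union bound before any distributional replacement, which needs no independence and no extra factor.

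Two small remarks. First, the passage ``Poisson approximation $\mathrm{Bin}(zT,B/T)\to\Pois(zB)$'' is not needed and is the least rigorous link in your chain: the Chernoff bound $\Pr[X\ge B]\le (ze^{1-z})^B$ holds verbatim for $X=\mathrm{Bin}(t,B/T)$ with $z=t/T$, so you can bound the Riemann sum directly and take $T\to\infty$ on the deterministic integrand. Second, your deliberate weakening from $-(1-z)^2/2$ to $-(1-z)^2/8$ is unnecessary: running your own computation with the sharp exponent and $c=\sqrt{2}$ gives the tail term $\Del^{1-c^2/2}/(c\sqrt{B\ln\Del})=1/(\sqrt{2}\,\sqrt{B\ln\Del})=o(\sqrt{\ln\Del/B})$, yielding $\ra\ge 1-\sqrt{2}\sqrt{\ln\Del/B}(1+o(1))$. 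In other words, your approach not only reproves the stated lemma but in fact tightens it to $\kappa\le\sqrt{2}$, matching the upper bound in Lemma~\ref{lem:upper} and closing the gap that the paper leaves open; the factor-of-two slack in the paper comes from its Poissonization step, not from anything intrinsic to the Chernoff-truncation strategy you outline.
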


 To prove the above lemma, we need the following key lemma, which appears as Lemma 5.10 on page 110 of the book~\citep{mitzenmacher2017probability}.
 
\begin{lemma}[\cite{mitzenmacher2017probability}]\label{lem:mit}
Consider a balls-and-bins model with $m$ balls and $n$ bins. Let $Y_i^{(m)}$ be the number of balls in the $i$th bin at the end of $m$ rounds. let $\{Z_i^{(m)}| i \in [n]\}$ be $n$ \iid Poisson random variables each with mean $m/n$. Let $f(x_1,\ldots,x_n)$ be a nonnegative function such that $\E[f(Y_1^{(m)},\ldots,Y_n^{(m)} )]$ is either monotonically increasing or monotonically decreasing in $m$. Then
\[\E[f(Y_1^{(m)},\ldots,Y_n^{(m)} )] \le 2 \E[f(Z_1^{(m)},\ldots,Z_n^{(m)} )].\]
\end{lemma}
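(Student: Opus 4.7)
\textbf{Proof plan for Lemma~\ref{lem:mit}.}

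The plan is to use the classical Poissonization coupling together with the fact that the mean of a Poisson random variable is also (essentially) its median. First I would introduce a ``Poissonized'' balls-and-bins process: let $M \sim \Pois(m)$ and throw $M$ balls independently and uniformly into the $n$ bins. Let $\tZ_i$ denote the resulting count in bin $i$. Two facts drive the entire argument. (i) Unconditionally, by the well-known Poisson splitting property, $(\tZ_1,\ldots,\tZ_n)$ are mutually independent with $\tZ_i \sim \Pois(m/n)$; hence $(\tZ_1,\ldots,\tZ_n) \stackrel{d}{=} (Z_1^{(m)},\ldots,Z_n^{(m)})$. (ii) Conditioned on $\{M=k\}$, the process is exactly the standard balls-and-bins with $k$ balls, so $(\tZ_1,\ldots,\tZ_n)\mid\{M=k\} \stackrel{d}{=} (Y_1^{(k)},\ldots,Y_n^{(k)})$.

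Combining (i) and (ii) and writing $h(k) := \E[f(Y_1^{(k)},\ldots,Y_n^{(k)})]$ (which is nonnegative since $f$ is), we obtain the identity
\[
\E[f(Z_1^{(m)},\ldots,Z_n^{(m)})] \;=\; \sum_{k=0}^{\infty} h(k)\,\Pr[M=k].
\]
Now split into the two monotonicity cases. If $h$ is non-decreasing in $k$, then restricting the sum to $k\ge m$ gives
\[
\E[f(Z_1^{(m)},\ldots,Z_n^{(m)})] \;\ge\; \sum_{k\ge m} h(k)\,\Pr[M=k] \;\ge\; h(m)\,\Pr[M \ge m].
\]
Symmetrically, if $h$ is non-increasing, restricting to $k\le m$ yields a lower bound of $h(m)\cdot\Pr[M\le m]$. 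In either case, the desired factor-$2$ inequality reduces to proving
\[
\Pr[\Pois(m)\ge m]\ge \tfrac{1}{2} \quad\text{and}\quad \Pr[\Pois(m)\le m]\ge \tfrac{1}{2},
\]
i.e.\ that $m$ is a median of $\Pois(m)$ for every positive integer $m$.

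The main obstacle is this last median fact, which is where I expect most of the work to go. I would invoke the classical result of Choi (1994) / Adell--Jodr\'a, which states that the median of $\Pois(m)$ lies in the interval $[m - \ln 2,\, m + \tfrac{1}{3}]$, and in particular for integer $m \ge 1$ both tail inequalities above hold simultaneously (the integer $m$ is a median). For the edge case $m=0$ the claim is trivial since $Y_i^{(0)} = 0$ deterministically. A self-contained alternative, if one wishes to avoid citing the sharp median bound, is to argue directly from the recurrence $\Pr[\Pois(m)=k+1]/\Pr[\Pois(m)=k] = m/(k+1)$ that the mode is at $k=m$, and then show by a short calculation comparing $\Pr[\Pois(m)=m-j]$ to $\Pr[\Pois(m)=m+j]$ (whose ratio is $\prod_{i=1}^{j}(m-j+i)/(m+i) \le 1$) that the upper tail from $m$ dominates the strict lower tail, forcing $\Pr[\Pois(m)\ge m] \ge 1/2$; an analogous comparison gives $\Pr[\Pois(m)\le m] \ge 1/2$. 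Plugging these into the two displays above completes the proof in both monotonicity cases and yields the stated bound.
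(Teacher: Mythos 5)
The paper does not actually prove this lemma: it is quoted verbatim from Mitzenmacher and Upfal (Lemma 5.10 there), so there is no internal proof to compare against. Your argument is correct and is essentially the standard textbook proof of that result: Poissonize the number of balls, note that conditioning on the total recovers the exact $k$-ball process while the unconditional bin counts are \iid $\Pois(m/n)$, and then use monotonicity of $h(k)=\E[f(Y_1^{(k)},\ldots,Y_n^{(k)})]$ together with the fact that the integer $m$ is a median of $\Pois(m)$ to extract the factor $2$. One caveat: your optional ``self-contained'' justification of the median fact does not work as written. The symmetric pairing of $\Pr[X=m-j]$ with $\Pr[X=m+j]$ fails termwise --- already for $m=j=1$ one has $\Pr[X=0]=\sfe^{-1}>\sfe^{-1}/2=\Pr[X=2]$ --- and the product you wrote is not the ratio of those two probabilities. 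The standard elementary fix shifts the pairing by one: for $X\sim\Pois(m)$ and $0\le h\le m-1$,
\[
\frac{\Pr[X=m-h-1]}{\Pr[X=m+h]}=\prod_{\delta=-h}^{h}\frac{m+\delta}{m}=\prod_{\delta=1}^{h}\frac{m^2-\delta^2}{m^2}\le 1,
\]
so summing over $h$ gives $\Pr[X\le m-1]\le\Pr[X\ge m]$ and hence $\Pr[X\ge m]\ge 1/2$; the companion bound $\Pr[X\le m]\ge 1/2$ requires a separate pairing or the Choi/Adell--Jodr\'a median estimate you cite. Since your primary route does invoke that median bound, the proof as a whole stands.
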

Now we start to prove Lemma~\ref{lem:lower}.
\begin{proof} Recall that $\ra=\lim_{T \rightarrow\infty}\E[T']/T$, where $T'$ is the last round when all bins have at most $B-1$ balls. 
\[
\E[T']=\sum_{t=1}^T\Pr[\max_k U_k^{(t)} \le B-1]=\sum_{t=1}^T \Big(1- \Pr[\max_i U_k^{(t)}  \ge B] \Big)=T -\sum_{t=1}^T \Pr[\max_k U_k^{(t)} \ge B].
\]

Therefore, 
\begin{align}
\E[T']/T &=1-\frac{1}{T}\sum_{t=1}^T \Pr[\max_k U_k^{(t)} \ge B] 
=1-\frac{1}{T} \sum_{t=B}^T  \Pr[\max_k U_k^{(t)} \ge B] \\
& \ge 1-\frac{2}{T} \sum_{t=B}^T  \Pr[\max_k Z_k^{(t)} \ge B] ~~\mbox{\bp{$\{Z_k^{(t)}\}$ are $\Del$ \iid Poisson r.v.s each with mean $tB/T$}} \label{ineq:1}\\
&= 1-\frac{2}{T} \sum_{t=B}^T \bB{1- \Big(\Pr[Z_k^{(t)} \le B-1] \Big)^\Del } \\
 & = -1+O(B/T)+\frac{2}{T}\sum_{t=1}^T \Big(\Pr\Big[ \mathrm{Pois}(t B/T) \le B-1 \Big]\Big)^\Del.   \label{ineq:lb-7}
  \end{align}

Inequality~\eqref{ineq:1} is due to Lemma~\ref{lem:mit}. In our case, we have $n=T/B$ and $m=t$, and $f\big(Y_1^{(m)},\ldots,Y_n^{(m)}\big) \doteq\bo\bp{U_k^{(t)} \ge B, \forall 1\le k \le \Del}$, which is an indicator function showing if one of the first $\Del$ bins has at least $B$ balls by the round of $t$. We can verify that $\E[f(Y_1^{(t)},\ldots,Y_n^{(t)} )]$ is monotonically increasing in $t$ under our definition.

Thus, Inequality~\eqref{ineq:lb-7} suggests that
\begin{align}
\ra &=\lim_{T \rightarrow \infty} \frac{\E[T']}{T}\ge  -1+2 \int_0^1 dx \Big(\Pr\Big[ \mathrm{Pois}(x B) \le B-1 \Big]\Big)^\Del   \doteq 1-2 F, \label{ineq:lb-6}
\end{align}
where $F= 1-\int_0^1 dx \bp{1-\Pr[\Pois(x B) \ge B]}^\Del$. We try to upper bound $F$ as follows. Note that,
\begin{align}
&F  \le  \int_0^1 dx \bB{1-\bP{1-\exp \bp{\frac{-B}{2}x^2}}^\Del} \label{ineq:lb-a-1}\\
&=\sqrt{\frac{2}{B}} \int_0^{\sqrt{\frac{B}{2}}}d\theta \bB{1-\bP{1-\sfe^{-\theta^2}}^\Del}\le \sqrt{\frac{2}{B}} \int_0^{\infty}d\theta \bB{1-\bP{1-\sfe^{-\theta^2}}^\Del}\\
&\le \sqrt{\frac{2}{B}} \bB{ \sqrt{\ln \Del} +
 \int_{ \sqrt{\ln \Del}}^{\infty}d\theta \bB{1-\bP{1-\sfe^{-\theta^2}}^\Del} }
 \label{ineq:lb-a-3}\\
& \le  \sqrt{\frac{2}{B} } \bB{\sqrt{\ln \Del}+\Del \int_{  \sqrt{\ln \Del}}^{\infty}d\theta  ~\sfe^{-\theta^2}} \label{ineq:lb-a-4}\\
& \le \sqrt{\frac{2}{B}}  \bB{\sqrt{\ln \Del}+\Del \cdot \frac{1}{2 \sqrt{\ln \Del}} \cdot \sfe^{-( \sqrt{\ln \Del})^2}}=\sqrt{2}\sqrt{\frac{\ln \Del}{B}}(1+o(1)). 
\label{ineq:lb-a-5}
\end{align}

Inequality~\eqref{ineq:lb-a-1} is due to the upper tail bound of a Poisson random variable~\citep{pois-tail}. Inequality~\eqref{ineq:lb-a-4} is due to the fact that $f(\theta)\doteq\bB{1-\bP{1-\sfe^{-\theta^2}}^\Del} \le \Del \cdot \sfe^{-\theta^2}$. Inequality~\eqref{ineq:lb-a-5} follows from the fact that $Q(x)=1-\Phi(x) \le \phi(x)/x$ for any $x>0$ with $\phi(x)$ being the density function of the standard normal distribution~\citep{borjesson1979simple}.

Plugging the result of Inequality~\eqref{ineq:lb-a-5} back to \eqref{ineq:lb-6}, we get that
\[\ra=\lim_{T \rightarrow \infty}\frac{\E[T']}{T} \ge 1- 2\sqrt{2}\sqrt{\frac{\ln \Del}{B}}(1+o(1)).
\]
\end{proof}

\subsection{Proof of Theorem~\ref{thm:supp-2}}

Based on the structure shown in Lemma~\ref{lem:supp-1}, we construct the worst scenario of \nadap as follows.

\begin{example} \label{exam:supp-2}
Consider such a simple case that $|I|=|J|=1$ and $q_j=1$ and $r_j=T$ where $T$ is the total number of online rounds. In other words, there is one single online agent which arrives with probability one during each round $t \in [T]$, and there is one single edge, denoted by $e$. Assume $W_e =1$ with probability one. Let $K=\Del$ and $\cA_e=\bo_k$ with probability $B/T$ for each $k \in [\Del]$ and $\cA_e=\bz$ with probability $1-\frac{\Del B}{T}$, where $\bo_k \in \{0,1\}^\Del$ refers to the $k$th standard basis vector. Thus, by definition we have $\cS_e=[\Del]$ with $|\cS_e|=\Del$. Set $B_k=B$ for all $k \in [K]$.  $\hfill \square$
\end{example}

\begin{lemma}\label{lem:supp-2}
No algorithm can achieve an online competitive ratio better than $1- \kappa \sqrt{\frac{\ln \Del}{B}}(1+o(1))$ on Example~\ref{exam:supp-2}, where $\kappa$ is the value as stated in Theorem~\ref{thm:supp-1} and $o(1)$ vanishes when $B= \omega(\ln \Del)$ and  $\Del \rightarrow \infty$. 
\end{lemma}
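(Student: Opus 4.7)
The plan is to reduce the hardness to Theorem~\ref{thm:supp-1} in three steps: (i) pin down the benchmark LP value on Example~\ref{exam:supp-2}, (ii) identify greedy's expected utility as exactly the quantity Theorem~\ref{thm:supp-1} evaluates, and (iii) argue that greedy is online-optimal on this single-edge instance.

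For step (i), setting $x_e = T$ in \LP~\eqref{obj-1} makes both Constraints~\eqref{cons:j} ($x_e \le r_j = T$) and~\eqref{cons:i} ($(B/T)\cdot x_e \le B$) tight, so the LP value equals $T$ and the CR of any safe $\ALG$ on this instance is just $\E[\ALG]/T$.  For step (ii), the cost distribution of the unique edge $e$ in Example~\ref{exam:supp-2} coincides \emph{exactly} with the distribution $\cD^*$ from Lemma~\ref{lem:supp-1} (taking $\alp=1$ and $B_k=B$); the greedy policy, which matches $e$ whenever every resource in $\cS_e=[\Del]$ still carries $\ge 1$ unit of budget, collects expected utility $\sum_{t=1}^{T}\Pr_{\bfU\sim\cD^*}\bigl[\max_k U_k^{(t)} \le B-1\bigr] = T\bigl(1 - \kappa\sqrt{\ln\Del/B}(1+o(1))\bigr)$ by Theorem~\ref{thm:supp-1}.

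For step (iii), I would prove that the expected utility of any safe $\ALG$ is bounded by that of greedy, via backward induction on the Bellman value function $V(t,\bB)$: the match-vs-skip gap at a safe state equals $1 - (B/T)\sum_{k\in[\Del]}\bigl(V(t+1,\bB) - V(t+1,\bB-\bo_k)\bigr)$, so matching weakly dominates as long as the aggregate marginal of the $\Del$ resources is at most $T/B$.  The induction hinges on two structural observations: (a) a single unit of any one resource enlarges the optimal utility by at most one on every sample path, since each match consumes that resource at most once; and (b) greedy's bottleneck structure---matching halts as soon as any one resource is exhausted---keeps the aggregate marginal $\le T/B$ rather than blowing up to $\Del\cdot T/B$ in the regime $T \gg \Del B$.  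Combining (i)--(iii) then yields $\E[\ALG]/T \le 1 - \kappa\sqrt{\ln\Del/B}(1+o(1))$, which is the claimed CR upper bound.

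The main obstacle is step (iii), specifically handling boundary states where $\bB-\bo_k$ loses safety: a coordinate of $\bB-\bo_k$ may drop to zero, in which case $V(t,\bB-\bo_k)=0$ and the naive per-coordinate marginal bound fails.  Such boundary states, however, only arise when $\bB$ itself is already tight, and at those states the greedy value $V(t,\bB)$ is itself bounded by roughly $T/\Del$ (since any one of the $\Del$ resources is one match away from exhaustion), so the weighted sum remains controlled.  Should the DP induction prove too fragile, an alternative is to replace (iii) by a direct concentration argument: non-anticipation makes $C_{k,t} - (R_1+\cdots+R_t)B/T$ a martingale, safety forces $\max_k C_k \le B$ almost surely, and the Gaussian/Slud tail computation from the proof of Lemma~\ref{lem:upper} then bounds $\E[Y] \le T\bigl(1 - \kappa\sqrt{\ln\Del/B}(1+o(1))\bigr)$ directly.
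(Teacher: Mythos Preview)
Steps (i) and (ii) are correct and match the paper exactly. The problem is step (iii): you over-engineer what the paper treats as a one-line observation, and your observation (a) is actually false.

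On Example~\ref{exam:supp-2} there is a single edge $e$ with deterministic unit reward, arriving with probability one in every round, and cost realizations are i.i.d.\ across matches. In that situation Greedy dominates every safe policy by a direct coupling: fix an i.i.d.\ sequence $\cA^{(1)},\cA^{(2)},\ldots\sim\cD^*$ and charge the $m$th match made by \emph{any} policy the cost $\cA^{(m)}$, irrespective of the round in which that match is made (this is valid because the decision to make the $m$th match depends only on $\cA^{(1)},\ldots,\cA^{(m-1)}$). With $M^*=\min\bigl\{m:\max_k\sum_{\ell\le m}A^{(\ell)}_k=B\bigr\}$, every safe policy makes at most $\min(M^*,T)$ matches on every sample path, while Greedy---which never skips a safe round---makes exactly $\min(M^*,T)$. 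Hence $\ALG\le\gre$ almost surely, no DP required. The paper simply asserts Greedy's optimality on the single-edge instance as obvious and then identifies $\E[\gre]/T$ with the limit in Theorem~\ref{thm:supp-1}.

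Your Bellman route has a real error, not merely a boundary nuisance. Observation (a) fails at interior states too: compare $\bB=(2,B,\ldots,B)$ with $\bB-\bo_1=(1,B,\ldots,B)$. Resource $1$ is the bottleneck in both, and the marginal $V(t,\bB)-V(t,\bB-\bo_1)$ is of order $T/B$, not $1$. The reasoning ``each match consumes that resource at most once'' would bound the marginal by $1$ only if every match consumed one unit of resource $k$ \emph{deterministically}; here that happens with probability $B/T$, so one extra unit of budget on $k$ buys roughly $T/B$ extra matches in expectation. Your fix (b) then misidentifies the tight-state value as $T/\Del$ on the grounds that ``any one of the $\Del$ resources is one match away from exhaustion,'' but at $\bB=(1,B,\ldots,B)$ only one resource is tight and $V\approx T/B$, so that justification does not apply. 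The aggregate bound $\sum_k\bigl[V(t+1,\bB)-V(t+1,\bB-\bo_k)\bigr]\le T/B$ you ultimately need may well hold, but neither (a) nor (b) as stated establishes it; the coupling argument sidesteps the whole issue.
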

\begin{proof}
We can verify that the optimal value to  \LP-\eqref{obj-1} is $T$ since $x_e=T$ is the unique optimal solution. Note that there is one single edge $e$, which arrives during each round with probability one. Thus, the expected total utilities achieved by any policy should be no greater than that by Greedy, which will add the edge $e$ whenever $e$ is safe. Observe that $e$ is safe iff the consumption of each resource $k$ is no larger than $B-1$. Let $T' \le T$ be the last round such that the consumption of every resource $k \in [K]$ is no larger than $B-1$ (before the arrival of online agent). We see that 
 $\gre \le \E[T']$, which suggests that the CR of any algorithm will never beat $\E[T']/T$ with respect to \LP-\eqref{obj-1}. Note that if applying \gre to Example~\ref{exam:supp-2}, the random resource consumption can be captured by the Bins-and-Balls model in Section~\ref{sec:supp-1}. Particularly, $T'$ is exactly equal to the last round that all bins have at most $B-1$ balls. Therefore, we claim that any policy should achieve a CR no larger than $1- \kappa \sqrt{\frac{\ln \Del}{B}}(1+o(1))$ on Example~\ref{exam:supp-2}. \end{proof}

\subsection{Proof of Theorem~\ref{thm:supp-3}}\label{sec:proof-part-a}


\begin{lemma}\label{lem:supp-3}
When $\Del=1$,
$ \lim_{T \rightarrow \infty}\E[\widehat{T}]/T=1-\frac{1}{\sqrt{2 \pi B}}(1+o(1))$, where $o(1)$ is a vanishing term when $B \rightarrow \infty$.
\end{lemma}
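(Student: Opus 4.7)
The plan is to identify $\widehat T$ with a classical waiting time and then apply the central limit theorem. When $\Del=1$, the balls-and-bins process is simply a sequence of $T$ i.i.d.\ $\Ber(B/T)$ trials feeding a single bin of capacity $B$. Let $S_B$ be the round at which the $B$th success occurs; then $S_B$ is the sum of $B$ i.i.d.\ geometric variables with parameter $B/T$, mean $T$, and variance $T^2(1-B/T)/B$. Since $\widehat T$ is the last round $t\le T$ at which the bin has at most $B-1$ balls \emph{before} the throw, we get the clean identity $\widehat T = \min(S_B, T)$.

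Next I would pass to the $T\to\infty$ limit with $B$ fixed. Each rescaled geometric $G_i/T$ converges in distribution to an $\mathrm{Exp}(B)$ variable, so $S_B/T$ converges in distribution to $\Gamma_B := \sum_{i=1}^B E_i \sim \mathrm{Gamma}(B,B)$ (shape $B$, rate $B$, mean $1$, variance $1/B$). Because $\widehat T/T\in[0,1]$, bounded convergence gives
\[
\lim_{T\to\infty}\E[\widehat T]/T \;=\; \E[\min(\Gamma_B,1)] \;=\; 1 - \E[(1-\Gamma_B)^+].
\]
Equivalently, via the Poisson--Gamma duality $\Pr[\Gamma_B\le s]=\Pr[\Pois(sB)\ge B]$, this limit equals $\int_0^1 \Pr[\Pois(sB)\le B-1]\,ds$, which is also what one obtains directly from $\E[\widehat T]=\sum_{t=1}^T \Pr[S_B\ge t]$ by Riemann-sum convergence using the binomial-to-Poisson approximation.

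Finally I would extract the constant as $B\to\infty$. Set $Z_B := \sqrt{B}(\Gamma_B - 1)$; the usual CLT applied to $B$ i.i.d.\ $\mathrm{Exp}(1)$ variables yields $Z_B \Rightarrow \mathcal{N}(0,1)$. Then
\[
\sqrt{B}\,\E[(1-\Gamma_B)^+] \;=\; \E[(-Z_B)^+] \;\xrightarrow[B\to\infty]{}\; \E[Z^+] \;=\; \tfrac{1}{\sqrt{2\pi}},
\]
provided $\{(-Z_B)^+\}_{B\ge 1}$ is uniformly integrable. Combining with the second step gives $\E[(1-\Gamma_B)^+]=\frac{1}{\sqrt{2\pi B}}(1+o(1))$ and hence the lemma.

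The main obstacle is justifying the exchange of limit and expectation in the CLT step. The cleanest route is the uniform-integrability criterion applied to $\E|Z_B|^3=O(1)$, which is a standard Gamma-moment computation (alternatively, $\E Z_B^2=1$ exactly, together with a Chernoff tail bound on $\Gamma_B$, already suffices). A more hands-on alternative is to perform the change of variables $s = 1 - u/\sqrt{B}$ in $\int_0^1 \Pr[\Pois(sB)\ge B]\,ds$, reducing the integrand to $\Pr[\Pois(B-u\sqrt{B})\ge B]$; a local CLT or Stirling-based expansion gives pointwise convergence to the Gaussian tail $Q(u)=\Pr[\mathcal{N}(0,1)\ge u]$, while a Bennett-type tail bound on $\Pr[\Pois(sB)\ge B]$ for $s$ bounded away from $1$ supplies the dominating function. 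Either route produces the constant $\int_0^\infty Q(u)\,du = 1/\sqrt{2\pi}$, matching the claim.
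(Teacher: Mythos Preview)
Your proof is correct and complete, but it takes a genuinely different route from the paper's. The paper observes that $\widehat{T}$ is a stopping time and applies Wald's identity to get $\E[\widehat{T}]\cdot(B/T)=\E[X]$ with $X=\sum_{i\le \widehat{T}}X_i=\min\bigl(\sum_{i\le T}X_i,\,B\bigr)$; it then simply \emph{cites} the known correlation-gap/Adwords result $\E[\min(\mathrm{Bin}(T,B/T),B)]=B\bigl(1-\tfrac{1}{\sqrt{2\pi B}}(1+o(1))\bigr)$ to conclude. Your argument instead works directly on the waiting-time side: the identification $\widehat{T}=\min(S_B,T)$, the Gamma limit as $T\to\infty$, and a CLT with uniform integrability to extract the constant $\E[Z^+]=1/\sqrt{2\pi}$. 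The paper's route is shorter because it offloads the $B\to\infty$ asymptotics to the literature; your route is self-contained and actually \emph{derives} the $1/\sqrt{2\pi}$ constant, which the paper explicitly flags (in the remarks after Theorem~\ref{thm:main-3}) as a nontrivial connection between the two formulations. Both approaches are valid, and your UI argument via $\E Z_B^2=1$ is clean.
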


\begin{proof}
Consider a special case of BPP stated in Section~\ref{sec:supp-1} with $\Del=1$. In this case, there is only one bin and during each round $t \in [T]$, it will get a ball with probability $B/T$ and nothing otherwise. Recall that $\widehat{T} \le T$ is the last round such that the bin has at most $B-1$ balls before the arrival of ball. We can recast $\widehat{T} $ as the smallest integer $n \le T$ such that $ \sum_{i=1}^n X_i =B$ or $\widehat{T} =T$   where  $\{X_i\}$ are $\iid$ Bernoulli random variables each  with mean $B/T$. Let $X=\sum_{i=1}^{\widehat{T}} X_i$. Note that $\widehat{T} $ qualified as a \emph{stopping} time. By Wald's equation, we have $\E[X]=\E[\widehat{T} ] \cdot \E[X_i]$.  Thus,
\[
 \lim_{T \rightarrow \infty}\frac{\E[\widehat{T}]}{T}=\lim_{T \rightarrow \infty} \frac{\E[X]}{\E[X_i]} \frac{1}{T}=1-\frac{1}{\sqrt{2 \pi B}}(1+o(1)).
\]
The last equality is due to the fact that $\E[X]=B \bp{1-\frac{1}{\sqrt{2 \pi B}}\big(1+o(1)\big)}$, which result appears in multiple contexts before such as Adwords and online $B$-matching problems~\cite{devanur2012asymptotically,alaei2012online}, and correlation gap~\cite{yan2011mechanism}.
\end{proof}

%
%


\begin{proof}[Proof of Theorem~\ref{thm:supp-3}]
Lemma~\ref{lem:supp-3} suggests that when $\Del=1$, the limit on Equation~\eqref{eqn:supp-1} in Theorem~\ref{thm:supp-1} will be $1-\frac{1}{\sqrt{2 \pi B}}(1+o(1))$. Following the same procedure of applying Theorem~\ref{thm:supp-1} to prove the general results ($\Del \gg 1$) of Theorem~\ref{thm:main-3} as shown in Section~\ref{sec:large}, we can show that $\nadap$ will achieve a CR of $1-\frac{1}{\sqrt{2 \pi B}}(1+o(1))$ when $\Del=1$. The asymptotic optimality can be seen on an example similar to Example~\ref{exam:supp-2}: we just need to set up one single resource, and the single edge $e$ has $\cA_e=1$ with probability $B/T$ and $0$ otherwise. All the rest proofs are the same as shown in Lemma~\ref{lem:supp-2}.
\end{proof}

\end{document}